\pdfoutput=1
\documentclass[11pt]{article}
\usepackage[T1]{fontenc}
\usepackage{amsmath,amssymb,amsthm}
\usepackage{newtxtext,newtxmath}
\usepackage[letterpaper,margin=1in]{geometry}
\usepackage{graphicx,array,longtable,float,placeins}
\usepackage[numbers,sort&compress]{natbib}
\usepackage{url}
\usepackage[hidelinks]{hyperref}
\usepackage{microtype}
\newtheorem{theorem}{Theorem}
\newtheorem{lemma}[theorem]{Lemma}
\newtheorem{proposition}[theorem]{Proposition}
\newtheorem{corollary}[theorem]{Corollary}
\theoremstyle{remark}
\newtheorem{remark}[theorem]{Remark}

\newcommand{\floorFracMaxPct}{90}
\newcommand{\floorFracMinPct}{21}
\newcommand{\nFamiliesOpen}{10}
\newcommand{\nModels}{28}
\newcommand{\nModelsFrontier}{8}
\newcommand{\nModelsOpen}{20}
\newcommand{\nVotesBaseExact}{174,384}
\newcommand{\rhoCrossB}{0.60}
\newcommand{\rhoCrossG}{0.58}
\newcommand{\rhoSelfMedian}{1.00}
\newcommand{\rhoSelfTypical}{0.84}
\newcommand{\robustAblationN}{73,410}
\newcommand{\robustBootB}{2000}
\newcommand{\robustExchangeRMSE}{0.148}
\newcommand{\robustExchangeRtwo}{0.906}
\newcommand{\robustFullRMSE}{0.088}
\newcommand{\robustFullRtwo}{0.967}
\newcommand{\robustIndepRMSE}{0.193}
\newcommand{\robustIndepRtwo}{0.840}
\newcommand{\selfGainLoss}{0.001}
\newcommand{\SelectionBootstrapValid}{1000}
\newcommand{\SelectionMonteCarloDraws}{65536}
\newcommand{\SelectionPrimaryFullScaledLoss}{50.77}
\newcommand{\SelectionPrimaryIndependenceScaledLoss}{52.50}
\newcommand{\SelectionPrimaryMajorityScaledLoss}{60.25}
\newcommand{\SelectionPrimaryAbsoluteBenefitVsMajorityScaledUnits}{9.48}
\newcommand{\SelectionPrimaryAbsoluteBenefitVsIndependenceScaledUnits}{1.73}
\newcommand{\SelectionPrimaryRelativeBenefitPct}{15.73}
\newcommand{\SelectionPrimaryAbsoluteBenefitVsMajorityCILowScaledUnits}{7.87}
\newcommand{\SelectionPrimaryAbsoluteBenefitVsMajorityCIHighScaledUnits}{10.26}
\newcommand{\SelectionPrimaryAbsoluteBenefitVsIndependenceCILowScaledUnits}{0.68}
\newcommand{\SelectionPrimaryAbsoluteBenefitVsIndependenceCIHighScaledUnits}{2.33}
\newcommand{\SelectionPrimaryRelativeBenefitCILowPct}{13.41}
\newcommand{\SelectionPrimaryRelativeBenefitCIHighPct}{16.75}
\newcommand{\SelectionIndependenceBenefitVsMajorityScaledUnits}{7.75}
\newcommand{\SelectionIndependenceBenefitVsMajorityCILowScaledUnits}{6.34}
\newcommand{\SelectionIndependenceBenefitVsMajorityCIHighScaledUnits}{8.86}
\newcommand{\SelectionThresholdOnlySharePct}{81.73}
\newcommand{\SelectionDependenceSharePct}{18.27}
\newcommand{\SelectionFullIndependenceThresholdDisagreementPct}{40.90}
\newcommand{\SelectionFullIndependenceThresholdDisagreementCILowPct}{36.19}
\newcommand{\SelectionFullIndependenceThresholdDisagreementCIHighPct}{43.99}
\newcommand{\SelectionListwisePointRelativeBenefitPct}{15.73}
\newcommand{\SelectionAgreementOnlyPointRelativeBenefitPct}{15.34}
\newcommand{\calibRsqCellMax}{0.98}
\newcommand{\calibRsqCellMedian}{0.94}
\newcommand{\calibRsqCellMin}{0.66}
\newcommand{\calibRsqDomMax}{0.98}
\newcommand{\calibRsqDomMin}{0.86}
\newcommand{\didFactsDelta}{+0.13}
\newcommand{\didFactsDeltaCI}{[-0.03, 0.34]}
\newcommand{\didTruthDelta}{+0.09}
\newcommand{\didTruthDeltaCI}{[-0.05, 0.37]}
\newcommand{\factorOffdiagMax}{0.64}
\newcommand{\factorOffdiagMin}{0.40}
\newcommand{\factorOffdiagTyp}{0.55}
\newcommand{\factorResidMax}{0.13}
\newcommand{\factorResidMin}{0.08}
\newcommand{\factorResidTyp}{0.11}
\newcommand{\lossFloorMaxPct}{94}
\newcommand{\lossFloorMinPct}{68}
\newcommand{\mdeFrFacts}{0.21}
\newcommand{\mdeFrTruth}{0.25}
\newcommand{\nCrossPairs}{363}
\newcommand{\nVotesReasonExact}{29,784}
\newcommand{\rhoCrossMax}{0.85}
\newcommand{\rhoCrossMin}{0.24}
\newcommand{\robustExchangeBias}{-0.014}
\newcommand{\robustExchangeRtwoCI}{[0.902, 0.910]}
\newcommand{\robustFullBias}{-0.014}
\newcommand{\robustFullRtwoCI}{[0.965, 0.968]}
\newcommand{\robustIndepBias}{-0.014}
\newcommand{\robustIndepRtwoCI}{[0.832, 0.848]}
\newcommand{\robustLodoFullRange}{0.954--0.977}
\newcommand{\robustLodoGainRange}{0.086--0.132}
\newcommand{\robustLodoIndepRange}{0.757--0.869}
\newcommand{\robustLofoFullRange}{0.952--0.969}
\newcommand{\robustLofoGainRange}{0.090--0.112}
\newcommand{\robustLofoIndepRange}{0.803--0.855}
\newcommand{\robustLomoFullRange}{0.953--0.969}
\newcommand{\robustLomoGainRange}{0.104--0.113}
\newcommand{\robustLomoIndepRange}{0.765--0.852}
\newcommand{\rsnFrBalaccMax}{0.93}
\newcommand{\rsnFrBalaccMin}{0.55}
\newcommand{\rsnFrFactsB}{0.84}
\newcommand{\rsnFrFactsBaseB}{0.90}
\newcommand{\rsnFrFactsDelta}{-0.07}
\newcommand{\rsnFrFactsDeltaCI}{[-0.21, 0.09]}
\newcommand{\rsnFrTruthB}{0.76}
\newcommand{\rsnFrTruthBaseB}{0.78}
\newcommand{\rsnFrTruthDelta}{-0.02}
\newcommand{\rsnFrTruthDeltaCI}{[-0.15, 0.21]}
\newcommand{\rsnOpCodeDelta}{-0.36}
\newcommand{\rsnOpCodeDeltaCI}{[-0.56, -0.16]}
\newcommand{\rsnOpFactsB}{0.38}
\newcommand{\rsnOpFactsDelta}{-0.19}
\newcommand{\rsnOpFactsDeltaCI}{[-0.39, -0.04]}
\newcommand{\rsnOpMathDelta}{-0.13}
\newcommand{\rsnOpMathDeltaCI}{[-0.27, 0.13]}
\newcommand{\rsnOpTruthDelta}{-0.17}
\newcommand{\rsnOpTruthDeltaCI}{[-0.28, -0.09]}
\newcommand{\sFourCoef}{$-0.06$}
\newcommand{\sFourCoefCI}{[-0.20, 0.06]}
\newcommand{\sFourMatchedGap}{-0.13}
\newcommand{\sFourMatchedGapCI}{[-0.33, 0.03]}
\newcommand{\sFourNPairs}{405}
\newcommand{\sFourRawCross}{0.60}
\newcommand{\sFourRawSame}{0.48}
\newcommand{\selfCotB}{0.80}
\newcommand{\selfCotBCI}{[0.60, 0.98]}
\newcommand{\selfCotBTerse}{1.00}
\newcommand{\selfCotDB}{-0.20}
\newcommand{\selfCotDBCI}{[-0.40, -0.02]}
\newcommand{\selfCotIdentB}{4}
\newcommand{\selfCotItems}{200}
\newcommand{\selfCotModels}{5}
\newcommand{\selfCotVotes}{4,000}
\newcommand{\sizeCodeDeltaCI}{[-0.105, -0.036]}
\newcommand{\sizeCodeDeltaTwentyFiveMinusThree}{-0.069}
\newcommand{\sizeCodeThreeCI}{[0.480, 0.571]}
\newcommand{\sizeCodeThreeLoss}{0.527}
\newcommand{\sizeCodeTwentyFiveBetterPct}{68}
\newcommand{\sizeCodeTwentyFiveCI}{[0.391, 0.530]}
\newcommand{\sizeCodeTwentyFiveLoss}{0.458}
\newcommand{\sizeCurveBootB}{1000}
\newcommand{\sizeCurveRosters}{500}
\newcommand{\sizeFactsDeltaCI}{[-0.073, -0.027]}
\newcommand{\sizeFactsDeltaTwentyFiveMinusThree}{-0.051}
\newcommand{\sizeFactsThreeCI}{[0.203, 0.295]}
\newcommand{\sizeFactsThreeLoss}{0.245}
\newcommand{\sizeFactsTwentyFiveBetterPct}{74}
\newcommand{\sizeFactsTwentyFiveCI}{[0.141, 0.245]}
\newcommand{\sizeFactsTwentyFiveLoss}{0.194}
\newcommand{\sizeMathDeltaCI}{[0.003, 0.048]}
\newcommand{\sizeMathDeltaTwentyFiveMinusThree}{+0.025}
\newcommand{\sizeMathThreeCI}{[0.865, 0.924]}
\newcommand{\sizeMathThreeLoss}{0.895}
\newcommand{\sizeMathTwentyFiveBetterPct}{44}
\newcommand{\sizeMathTwentyFiveCI}{[0.883, 0.955]}
\newcommand{\sizeMathTwentyFiveLoss}{0.921}
\newcommand{\sizeTruthDeltaCI}{[-0.090, -0.021]}
\newcommand{\sizeTruthDeltaTwentyFiveMinusThree}{-0.056}
\newcommand{\sizeTruthThreeCI}{[0.465, 0.578]}
\newcommand{\sizeTruthThreeLoss}{0.522}
\newcommand{\sizeTruthTwentyFiveBetterPct}{62}
\newcommand{\sizeTruthTwentyFiveCI}{[0.393, 0.540]}
\newcommand{\sizeTruthTwentyFiveLoss}{0.467}
\newcommand{\spFrFactsDelta}{-0.06}
\newcommand{\spFrFactsDeltaCI}{[-0.22, 0.08]}
\newcommand{\spFrTruthDelta}{-0.02}
\newcommand{\spFrTruthDeltaCI}{[-0.13, 0.23]}
\newcommand{\spOpCodeDelta}{-0.12}
\newcommand{\spOpCodeDeltaCI}{[-0.37, 0.09]}
\newcommand{\spOpFactsDelta}{-0.19}
\newcommand{\spOpFactsDeltaCI}{[-0.40, -0.07]}
\newcommand{\spOpMathDelta}{-0.08}
\newcommand{\spOpMathDeltaCI}{[-0.28, 0.13]}
\newcommand{\spOpTruthDelta}{-0.11}
\newcommand{\spOpTruthDeltaCI}{[-0.22, -0.03]}
\newcommand{\unparsedThinkPct}{0.9}
\newcommand{\SelectionIndependenceRelativeBenefitPct}{12.86}
\newcommand{\SelectionIndependenceRelativeBenefitCILowPct}{10.38}
\newcommand{\SelectionIndependenceRelativeBenefitCIHighPct}{14.91}
\newcommand{\SelectionFullIndependenceThresholdAgreementPct}{59.10}
\newcommand{\SelectionFullIndependenceThresholdAgreementCILowPct}{56.01}
\newcommand{\SelectionFullIndependenceThresholdAgreementCIHighPct}{63.81}
\newcommand{\SelectionArchivedCompositionRelativeBenefitPct}{18.96}
\newcommand{\SelectionPoolKindBalancedRelativeBenefitPct}{13.68}
\newcommand{\SelectionSelfOnlyRelativeBenefitPct}{2.04}
\newcommand{\SelectionCrossMixedRelativeBenefitPct}{19.91}
\newcommand{\SelectionFrechetSaturationPct}{18.48}
\newcommand{\SelectionUnbracketedClipPct}{17.09}
\newcommand{\SelectionPSDProjectionPct}{76.80}
\newcommand{\SelectionPSDProjectionMeanDistance}{1.075}
\newcommand{\SelectionPSDProjectionMaxDistance}{6.991}
\newcommand{\ParserAuditReviewedTotalN}{3149}
\newcommand{\ParserAuditAgreementRecordsN}{2754}
\newcommand{\ParserAuditAdjudicatedDisagreementsN}{395}
\newcommand{\ParserAuditPrimaryOverlayN}{2506}
\newcommand{\ParserAuditPrimaryOverlayApproveN}{735}
\newcommand{\ParserAuditPrimaryOverlayRejectN}{685}
\newcommand{\ParserAuditPrimaryOverlayMissingN}{1086}
\newcommand{\ParserAuditPrimaryWeightedPopulationN}{174384}
\newcommand{\ParserAuditPrimaryBinaryCoveragePct}{99.3772}
\newcommand{\ParserAuditPrimaryWeightedLabelDiscordancePct}{0.0866}
\newcommand{\ParserAuditUnresolvedCensusN}{1149}
\newcommand{\ParserAuditUnresolvedApproveN}{149}
\newcommand{\ParserAuditUnresolvedRejectN}{1}
\newcommand{\ParserAuditUnresolvedIndeterminateN}{999}
\newcommand{\ParserAuditAgreementOnlyAdditionalMissingN}{274}
\newcommand{\ParserAuditAgreementOnlyOverlayMissingN}{1360}
\newcommand{\SelectionForcedRejectPointRelativeBenefitPct}{15.70}
\newcommand{\kofn}{$k$-of-$n$}

\title{State-dependent error correlations shape voting thresholds in committees of AI agents}
\author{
Haifeng Li\\
\small School of Information, Central University of Finance and Economics\\
\small Beijing 100081, China\\
\small \texttt{mydlhf@cufe.edu.cn}; ORCID 0000-0001-6206-3662
\and
Mo Hai\thanks{Corresponding author: \texttt{haimo@cufe.edu.cn}}\\
\small School of Information, Central University of Finance and Economics\\
\small Beijing 100081, China\\
\small ORCID 0009-0005-1369-4341
}
\date{}

\begin{document}
\maketitle

\begin{abstract}
The aggregation benefit of a committee of artificial intelligence (AI) agents
comes from complementary information across members. Classical voting
guarantees assume independent errors. Language-model errors often co-occur on
the same cases. We combine Sah--Stiglitz screening with error dependence that
can differ between good and bad cases. In a homogeneous exchangeable Gaussian-copula model,
shared errors create a positive asymptotic error floor for majority voting and
can change the approval threshold that minimizes expected loss. We estimate a
heterogeneous extension from \nVotesBaseExact{} votes cast by \nModels{}
language models on four binary-screening benchmarks. Parameters estimated
from odd-indexed items predicted committee loss on even-indexed items. For the
sampled committee composition, the full-matrix dependence model increased
identity-line $R^2$ from \robustIndepRtwo{} under independence to
\robustFullRtwo{}. In a design-balanced analysis, cost-sensitive threshold
selection under independence reduced scaled loss from
\SelectionPrimaryMajorityScaledLoss{} for majority to
\SelectionPrimaryIndependenceScaledLoss{}. Modeling dependence reduced it
further to \SelectionPrimaryFullScaledLoss{}, an incremental improvement of
\SelectionPrimaryAbsoluteBenefitVsIndependenceScaledUnits{} units (95\%
bootstrap CI,
\SelectionPrimaryAbsoluteBenefitVsIndependenceCILowScaledUnits{}--\SelectionPrimaryAbsoluteBenefitVsIndependenceCIHighScaledUnits{}).
The overall reduction from majority was \SelectionPrimaryRelativeBenefitPct{}\%
(95\% bootstrap CI,
\SelectionPrimaryRelativeBenefitCILowPct{}--\SelectionPrimaryRelativeBenefitCIHighPct{}\%).
\end{abstract}

\noindent\textbf{Keywords:} artificial intelligence; collective
decision-making; correlated errors; ensemble methods; organizational design

\section*{Significance}
Organizations increasingly use panels of artificial intelligence systems to verify facts, review code, and screen decisions. Classical majority-voting guarantees assume independent errors. Language-model errors often co-occur on the same cases. In an exchangeable one-factor model, positive state-conditional dependence leaves nonzero large-committee error and can change the voting threshold that minimizes expected loss. A model calibrated on votes from 28 language models across four tasks predicts loss on held-out items and selects thresholds when false acceptance and false rejection have different costs. The results provide a practical way to design committees from measured error patterns.

\noindent
Organizations combine fallible judgments through approval chains, parallel
review, and voting. Sah and Stiglitz formalized the resulting trade-off.
Unanimity suppresses false acceptances and rejects more good proposals.
Permissive rules rescue good proposals and admit more bad ones
\cite{sah1985human,sah1986architecture,sah1988committees}. The best rule
depends on members' state-conditional error rates and on the relative cost of
the two mistakes.

Fixed language-model panels are increasingly used for fact checking, code
review, and model-as-a-judge evaluation. Other multi-agent systems use debate
and mixture-of-agents architectures
\cite{zheng2023judge,verga2024jury,du2024debate,wang2024moa,wu2023autogen,hong2024metagpt,chan2024chateval}.
Condorcet-style reliability guarantees for majority voting and
self-consistency rely on independent-vote logic
\cite{condorcet1785,wang2023selfconsistency,li2024moreagents}.
Empirical studies find above-chance co-occurrence in model errors
\cite{kim2025correlated,goel2025great}.
Dependence changes both the value of adding judges and the threshold that
minimizes cost-weighted loss.

Jury theory characterizes dependence and optimal rules for exchangeable
jurors, and independence-based theory characterizes optimal aggregation for
heterogeneous jurors
\cite{ladha1992,boland1989,berend2007,kaniovski2011,nitzan1982}. Ensemble
learning links diversity to predictive performance, and recent studies measure
shared errors and roster-dependent performance in language-model systems
\cite{breiman2001,kuncheva2003,kim2025correlated,goel2025great,kim2025scaling}.
Our screening formulation combines three features: members differ in their
state-conditional approval rates, dependence can differ between good and bad
cases, and the threshold reflects the prevalence and consequences of both
errors.

Here we treat state-conditional dependence as a measurable design input. A
homogeneous one-factor Gaussian model shows how dependence changes the
protection supplied by threshold rules, the limiting reliability of majority
voting, and the loss-minimizing threshold. A heterogeneous Gaussian-copula
model then predicts the loss of every \kofn{} rule from member-specific
approval probabilities and two state-specific correlation matrices.

We estimate these quantities on odd-indexed items and evaluate threshold choices
on even-indexed items. Comparisons with majority voting and an otherwise
matched independence model separate the gain from cost-sensitive threshold
choice from the additional gain from dependence modeling.

\section*{Results}
\subsection*{State-dependent error correlation and voting thresholds}

\noindent
\textbf{Setup.} A proposal is either good ($G$, prior $0<\pi_G<1$) or bad
($B$, prior $\pi_B=1-\pi_G>0$), and the organization accepts or rejects it.
Each of $n$ agents casts an approve or reject vote $V_i$. The
state-conditional approval probability of each agent is
$p_\theta = \Pr(V_i = 1 \mid \theta)$, with
$0<p_B<p_G<1$ for informative agents. Dependence enters through a single
latent factor. Agent $i$'s score in state $\theta$ is $Z_i = \sqrt{\rho_\theta}\, W +
\sqrt{1-\rho_\theta}\,\varepsilon_i$ with $W, \varepsilon_1, \dots,
\varepsilon_n$ independent standard normal,
and $V_i = \mathbf 1\{Z_i \le \Phi^{-1}(p_\theta)\}$ (Fig.~1A). The common
factor $W$ represents a shared component of judgment, and
$\rho_\theta \in [0,1)$ is the \emph{latent error correlation} in state
$\theta$. A decision rule is a \kofn{} threshold that accepts iff at least $k$
agents approve (unanimity at $k{=}n$, permissive acceptance at $k{=}1$, majority at
$k{=}\lceil (n{+}1)/2 \rceil$). Design quality is normalized expected loss
$R_{n,k} = (1 - A_{n,k}(p_G,\rho_G)) + \kappa\, A_{n,k}(p_B,\rho_B)$, where the
acceptance function $A_{n,k}(p,\rho) = \Pr(\sum_i V_i \ge k)$ and
$\kappa=(\pi_Bc_{\mathrm I})/(\pi_Gc_{\mathrm{II}})$ is the cost-weighted
prior odds of the bad state.

Because the common factor shifts every agent equally, the \kofn{} rule
accepts exactly when the $k$-th smallest latent score falls below the approval
cutoff:
\begin{equation}
A_{n,k}(p,\rho) \;=\; \Pr\!\Bigl(\, \underbrace{\sqrt{\rho}\, W}_{\text{
undiversifiable}} \;+\; \underbrace{\sqrt{1-\rho}\;
\varepsilon_{(k)}}_{\text{diversifiable}} \;\le\; \Phi^{-1}(p) \Bigr),
\label{eq:order}
\end{equation}
with $\varepsilon_{(k)}$ the $k$-th order statistic of $n$ independent
standard normals (Appendix, Lemma S2). Equation \eqref{eq:order} separates the
threshold-dependent order statistic of the agent-specific component from the
shared component $\sqrt{\rho}\,W$, which is unaffected by the threshold.

\medskip
\noindent
\textbf{Result 1. Correlation weakens the protection of extreme rules.}
Raising $\rho$ produces a mean-preserving spread of the conditional approval
rate (Appendix, Lemma S3). This yields global comparisons for the two extreme rules.
Writing $q_\theta(w)=\Pr(V_i=1\mid W=w,\theta)$, for $n\ge2$ hierarchies
filter bad proposals. Their false-acceptance rate
$A_{n,n}(p_B,\rho_B) = \mathbb E[q_B(W)^n] \ge p_B^n$ rises strictly with
$\rho_B$ above its independent value $p_B^n$ (Jensen; equality iff
$\rho_B=0$). Polyarchies rescue good proposals. Their rescue rate falls
strictly with $\rho_G$. For odd $n\ge3$, the majority-acceptance curve is
convex below one half and concave above it. At independence, the two partial
right derivatives of loss with respect to $\rho_G$ and $\rho_B$ are positive
when $p_B < \tfrac12 < p_G$. The derivative is therefore positive along the
common-correlation path $\rho_G=\rho_B=\rho$. On a prespecified finite grid with
$\kappa \in \{0.25,1,4\}$ along the common-correlation path
$\rho_G=\rho_B=\rho\le 0.9$ (2{,}700 parameter curves), majority loss was
also nondecreasing across the grid (Appendix, \S S3 and \S S8). Correlation thus
erodes the state-specific protection supplied by each canonical rule
(Fig.~1B).

\medskip
\noindent
\textbf{Result 2. Correlated majorities have a nonzero error limit.}
Independence yields exponentially vanishing committee error under Condorcet's
logic. Positive correlation yields convergence to the Vasicek limit. For
either state, let $p>1/2$ denote the
probability that one agent makes the correct state-contingent decision
($p=p_G$ in state $G$ and $p=1-p_B$ in state $B$), and let $\rho$ be that
state's latent correlation, with $0<\rho<1$. For majority rule,
\begin{equation}
\varepsilon^{\infty}(p, \rho) \;=\; \Phi\!\left( - \frac{\Phi^{-1}(p)}
{\sqrt{\rho}} \right) \;>\; 0,
\label{eq:floor}
\end{equation}
(Appendix, Corollary S8). The \emph{infinite-majority residual fraction} of a single
agent's error is therefore $\mathcal F =
\Phi(-\Phi^{-1}(p)/\sqrt\rho) / \Phi(-\Phi^{-1}(p))$ (Fig.~1C). For an agent
with accuracy $p = 0.8$, the 5th--95th-percentile correlations measured for
open-weight cross-family pairs leave \floorFracMinPct{}--\floorFracMaxPct{}\%
of single-agent error in the infinite-majority limit of this homogeneous
reference model. Positive correlation in both states also creates a positive
asymptotic Bayes-risk floor among all rules that use only the vote vector
(Appendix, Prop.~S10).

\medskip
\noindent
\textbf{Result 3. Near-perfect correlation collapses a fixed committee.}
As $\rho \to 1$, $A_{n,k}(p,\rho) \to p$ for every fixed $n$ and every $k$
(Appendix, Theorem S6), so every fixed threshold rule converges to single-agent
behavior. This conclusion extends to all vote-only aggregation rules when
committee size is bounded and $(\rho_G,\rho_B)\to(1,1)$ (Appendix, Theorem S15).
This fixed-size correlation limit differs from the large-committee limit in
Result 2, which holds correlation below one and sends $n\to\infty$.

\medskip
\noindent
\textbf{Result 4. Cost determines the optimal threshold.}
For a fixed committee, the two state-conditional vote-count distributions
determine the loss of every \kofn{} rule. Direct enumeration therefore gives
the complete optimal threshold set for both homogeneous and heterogeneous
rosters. The smallest and largest optimal thresholds are nondecreasing in
$\kappa$, so costlier false acceptances call for stricter rules
(Appendix, Prop.~S13). When a homogeneous committee has equal state correlations,
its vote-count likelihood ratio is monotone and the optimal set begins where
that ratio weakly crosses $\kappa$ (Appendix, Theorem S12). Unequal state
correlations can break this property, so the empirical analysis enumerates
all candidate losses. Figure~1D shows how correlation changes the
loss-minimizing threshold regions. When $\rho_B > \rho_G$, unanimous approval
can signal the correlated bad state and provide weaker evidence of quality
(Appendix, Remark S11; finite-grid failures in Appendix, \S S8, example in Appendix, Fig.~S1)
\cite{gunn2016}.

\begin{figure}[t!]
\centering
\includegraphics[width=\textwidth]{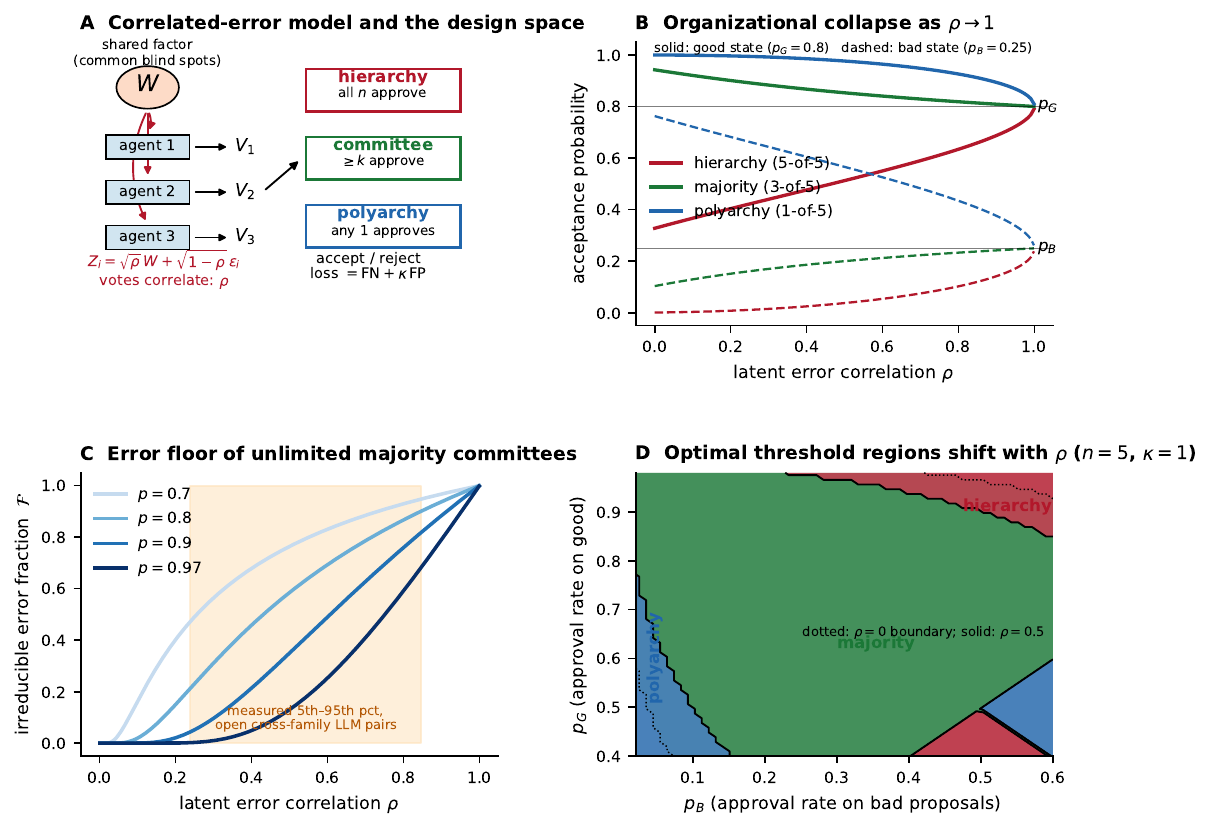}
\caption{\textbf{State-dependent error correlation and committee thresholds.}
(\textbf{A}) The screening model. Agents share a latent factor $W$, and their
latent scores have correlation $\rho$. A \kofn{} threshold maps votes to a
decision evaluated
by cost-weighted loss. (\textbf{B}) Acceptance probabilities versus $\rho$
for hierarchy, majority, and polyarchy ($n=5$) in the good state
($p_G = 0.8$, solid) and bad state ($p_B = 0.25$, dashed); each
rule's signature protection erodes with $\rho$, and all rules
collapse to single-agent behavior ($A \to p$) as $\rho \to 1$ (Results 1 and
3). (\textbf{C}) The infinite-majority residual fraction $\mathcal F$ of
single-agent error under the homogeneous reference model
(Eq.~\ref{eq:floor}) for agents of accuracy $p$. The shaded
band marks the 5th--95th percentile of latent correlations measured for
identifiable open-weight cross-family large-language-model pairs, based on 726 state-specific
pair--domain correlation estimates. (\textbf{D})
Loss-minimizing voting-threshold phase diagram in $(p_B, p_G)$ at
$\kappa = 1$, $n=5$. Colors and solid boundaries show all five thresholds
at $\rho = 0.5$, and dotted boundaries show $\rho = 0$. The gray region
$p_G\leq p_B$ marks noninformative agents.}
\label{fig:theory}
\end{figure}

\subsection*{Measured correlation structure of machine judgments}

\noindent
The theory raises two empirical questions. How large and state dependent are
the correlations among model judgments, and does modeling them improve
held-out loss prediction and threshold choice?

We measured the theory's inputs in four approve or reject screening domains
with known labels. The domains were
\emph{fact verification} (does evidence support the claim?
\cite{schuster2021vitaminc}), \emph{mathematical solution verification} (is a
solution correct? half carry a planted arithmetic corruption
\cite{cobbe2021gsm8k}), \emph{code review} (is a program correct? 44\% carry
a mutation verified to fail the official tests \cite{austin2021mbpp}), and
\emph{truthfulness screening} (true answer or plausible misconception?
\cite{lin2022truthfulqa}). The roster contains \nModelsOpen{} open-weight
instruction-tuned models from \nFamiliesOpen{} families and
\nModelsFrontier{} configured API models (full roster in Appendix, \S S9). Each model
evaluated every item once at temperature 0 and three times at temperature 0.7,
yielding \nVotesBaseExact{} votes. In the committee analyses, an agent is a
model--sampling-run pair. Cross-model committees use one run per model;
same-model committees use distinct temperature samples. From these votes we
estimated each agent's $(\hat p_G, \hat p_B)$ and every pair's latent
correlation $(\hat\rho_G, \hat\rho_B)$ by tetrachoric inversion.
Validation parameters came from odd-indexed items and losses from even-indexed
items. Descriptive correlation maps used all items, with bootstrap 95\%
intervals on key estimates (Appendix, \S S9).

Dependence was substantial wherever the binary marginals permitted stable
estimation (Fig.~2). Same-model temperature resampling had near-unit latent
correlation (median
$\hat\rho=\rhoSelfMedian$; mean \rhoSelfTypical{} across models and domains),
and its majority-of-three gain was correspondingly small (mean loss reduction
\selfGainLoss). Open-weight cross-family pairs had average latent correlations
$\hat\rho_G\approx\rhoCrossG$ and $\hat\rho_B\approx\rhoCrossB$
(Fig.~2A--C), below the same-model values. These summaries cover identifiable
pairs; nearly constant voting patterns fall outside the tetrachoric
identifiability window.

After adjustment for capability and domain, the estimated association between
same-family membership and bad-state dependence had a 95\% interval that
included zero (Appendix, \S S10). Measured marginals and pairwise dependence
therefore provide the direct inputs for ranking pair complementarity in this
roster.

\begin{figure}[t!]
\centering
\includegraphics[width=\textwidth]{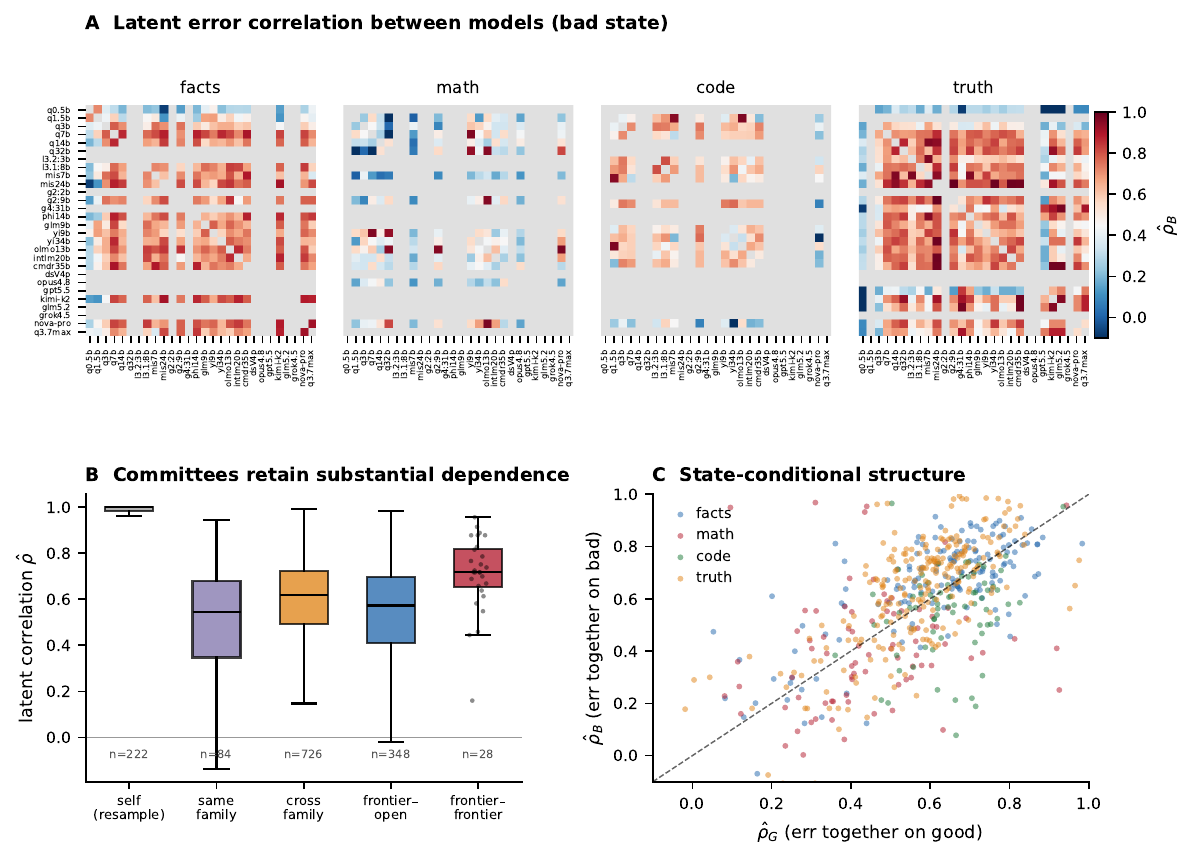}
\caption{\textbf{Measured correlation structure of machine judgments.}
(\textbf{A}) Latent error-correlation matrices $\hat\Sigma_B$ (bad state;
tetrachoric estimates between the first temperature sample of each judge,
all items; Phi-4 is excluded in code review, Appendix, \S S9), by domain; gray
cells mark pairs in which a
member approves or rejects nearly everything, leaving the tetrachoric
correlation unidentifiable, and concentrate in the near-chance
verification domains. (\textbf{B}) Distribution of pairwise $\hat\rho$, pooling
$\hat\rho_G$ and $\hat\rho_B$ over identifiable pairs ($n$ = pair--state
values, below each box). Boxes span the first and third quartiles, center lines
mark medians, and whiskers end at 1.5 times the interquartile range.
Resampling one model yields
near-clones ($\hat\rho \approx 1$), and committees of different models,
within or across families and access modes, retain substantial positive
latent correlation.
(\textbf{C}) State-conditional structure: $\hat\rho_G$ versus
$\hat\rho_B$ for 540 identifiable cross-family pair--domain observations; dependence can
differ between good and bad cases.}
\label{fig:measurement}
\end{figure}

\subsection*{Validation of committee-loss prediction on held-out items}

\noindent
For the sampled validation composition, committee draws span
$n\in\{3,5,7,9\}$, four domains, several roster constructions, and five
values $\kappa\in\{0.25,0.5,1,2,4\}$. Parameters are estimated on odd-indexed
items and every $k$-of-$n$ loss is evaluated on even-indexed items. The
\robustAblationN{} threshold--cost rows share committee draws. We use this
sampled composition for descriptive calibration and a separate balanced
analysis for threshold-selection inference.

Under that composition, the heterogeneous full-matrix Gaussian-copula
model tracks loss on held-out items closely (identity-line
$R^2=\robustFullRtwo$, RMSE \robustFullRMSE; Fig.~3A). A nonnegative
exchangeable one-factor projection gives
$R^2=\robustExchangeRtwo{}$ (RMSE \robustExchangeRMSE). The independence
model gives $R^2=\robustIndepRtwo{}$ (RMSE \robustIndepRMSE;
Fig.~3B--C). Committee-cluster intervals and omission ranges are reported in
Appendix, Table~S4.

The primary comparison assigns equal weight to 20
domain--pool-kind--size cells. In each of \SelectionBootstrapValid{} replicates,
parameters and thresholds are re-estimated after resampling the estimation
items, evaluation items, and whole committees within their design cells.

After multiplying loss by 100, the design-balanced means were
\SelectionPrimaryMajorityScaledLoss{} for majority,
\SelectionPrimaryIndependenceScaledLoss{} for the independence-based
threshold, and \SelectionPrimaryFullScaledLoss{} for the dependence-aware
threshold (Appendix, Table~S6). The independence-based threshold reduced loss by
\SelectionIndependenceBenefitVsMajorityScaledUnits{} units on this scale
(95\% bootstrap CI,
\SelectionIndependenceBenefitVsMajorityCILowScaledUnits{}--\SelectionIndependenceBenefitVsMajorityCIHighScaledUnits{}).
Adding the fitted dependence structure reduced loss by a further
\SelectionPrimaryAbsoluteBenefitVsIndependenceScaledUnits{} units
(95\% bootstrap CI,
\SelectionPrimaryAbsoluteBenefitVsIndependenceCILowScaledUnits{}--\SelectionPrimaryAbsoluteBenefitVsIndependenceCIHighScaledUnits{}).
These components accounted for \SelectionThresholdOnlySharePct{}\% and
\SelectionDependenceSharePct{}\% of the
\SelectionPrimaryAbsoluteBenefitVsMajorityScaledUnits{}-unit total reduction
(95\% bootstrap CI,
\SelectionPrimaryAbsoluteBenefitVsMajorityCILowScaledUnits{}--\SelectionPrimaryAbsoluteBenefitVsMajorityCIHighScaledUnits{})
on this scale. The total was equivalent to \SelectionPrimaryRelativeBenefitPct{}\%
(95\% bootstrap CI,
\SelectionPrimaryRelativeBenefitCILowPct{}--\SelectionPrimaryRelativeBenefitCIHighPct{}\%).
The two models selected different thresholds in
\SelectionFullIndependenceThresholdDisagreementPct{}\% of weighted
committee--$\kappa$ cases (95\% bootstrap CI,
\SelectionFullIndependenceThresholdDisagreementCILowPct{}--\SelectionFullIndependenceThresholdDisagreementCIHighPct{}\%).
Selected thresholds became stricter as $\kappa$ rose and more
permissive as it fell, consistent with the comparative static (Fig.~3E).
The intervals describe the observed roster, benchmark domains, and committee
design.

\begin{figure}[t!]
\centering
\includegraphics[width=\textwidth]{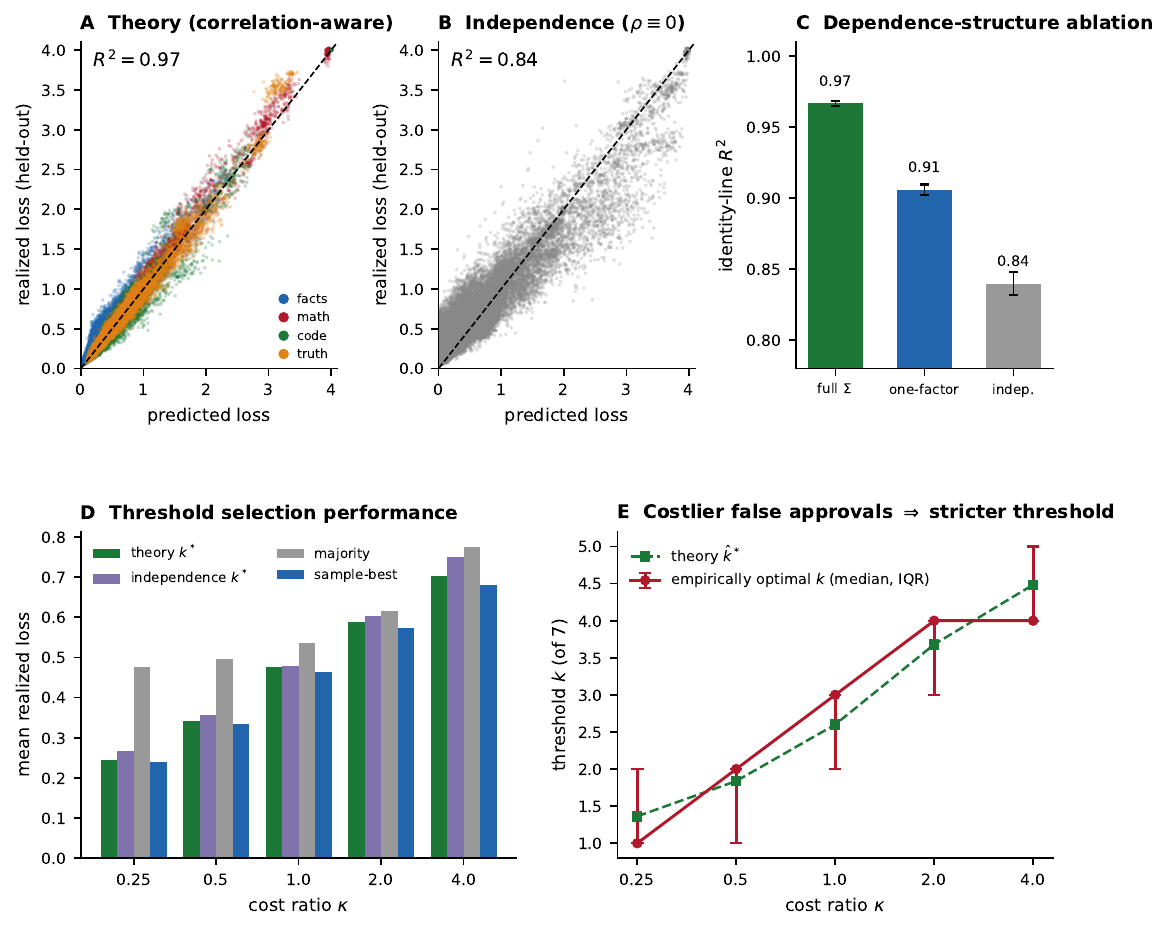}
\caption{\textbf{Validation of committee-loss prediction on held-out items.}
(\textbf{A}) Heterogeneous full-matrix dependence predictions versus realized
loss for all \kofn{} thresholds under the archived committee composition
(\robustAblationN{} threshold--cost rows). Parameters
are estimated from odd-indexed items and evaluated on even-indexed items.
(\textbf{B}) Predictions after forcing independence.
(\textbf{C}) Descriptive identity-line $R^2$ for the full-matrix model,
a nonnegative exchangeable one-factor projection, and independence. The 95\%
intervals group all threshold--cost rows from the same committee in each
bootstrap cluster ($B=\robustBootB{}$).
(\textbf{D}) Sampled-composition point losses after selecting a threshold
with the dependence-aware model, independence model, or majority default,
with the evaluation-sample oracle shown for reference. Design-balanced selection
inference is reported in Appendix, Table~S6.
(\textbf{E}) Empirical optimal thresholds for $n=7$ and corresponding
dependence-aware model predictions shift upward as cost-weighted bad-state
odds $\kappa$ rise.}
\label{fig:validation}
\end{figure}

\section*{Discussion}

\noindent
The loss-minimizing threshold is jointly determined by state prevalence,
relative error costs, member-specific approval rates, and state-conditional
dependence. When $\kappa$ is uncertain,
intersections of the candidate loss lines identify thresholds that remain
optimal over a plausible interval (Appendix, \S S1).

The homogeneous model identifies how shared errors limit aggregation. The
heterogeneous model evaluates a fixed roster from member-specific marginals
and full state-specific dependence matrices. On held-out items, retaining this
structure improved both loss prediction and threshold choice.

The empirical evidence covers four binary-screening benchmarks and the
observed model roster, with held-out items drawn from those same benchmarks.
The fitted tetrachoric matrices often require boundary handling and
positive-semidefinite projection (Appendix, Table~S6). Correlation is also unidentified
for nearly constant judges. The resulting estimates are regularized
descriptions of this roster.

Parser uncertainty was evaluated with reviewed-record overlays. The relative
loss reduction ranged from \SelectionAgreementOnlyPointRelativeBenefitPct{}\%
when only records with agreeing review passes were retained to
\SelectionListwisePointRelativeBenefitPct{}\% under listwise deletion of
indeterminate reviewed labels (Appendix, \S S9 and Table~S6).

Cost-sensitive threshold choice supplied most of the observed improvement over
majority. Dependence modeling added a smaller positive loss reduction, changed
the selected threshold in \SelectionFullIndependenceThresholdDisagreementPct{}\%
of weighted cases, and improved loss prediction. State-dependent error
dependence thus links screening theory to a practical design rule whose
choices can be tested on held-out items.

\section*{Materials and Methods}
\subsection*{Theory and numerical checks}
The exchangeable model conditions on one Gaussian common factor and evaluates
the resulting mixed-binomial acceptance probability by quadrature. Analytic
results and proofs are given in Appendix, sections S1--S7; section S8
reports the numerical checks of formulas, endpoints, stated monotonicities,
and prespecified finite grids.

\subsection*{Screening domains, judges, and voting protocol}
The four domains were fact verification (400 VitaminC claim--evidence pairs),
mathematical solution verification (400 GSM8K problems, half with one planted
arithmetic corruption), code review (357 MBPP programs, 157 with a mutation
verified to fail an official test), and truthfulness screening (400
TruthfulQA question--answer pairs). Every item had a known good or bad state,
and judges returned one binary approve or reject decision. The roster
contained \nModelsOpen{} open-weight instruction-tuned models and
\nModelsFrontier{} models accessed through application programming interfaces.
Open-weight models were served with \texttt{{Q4\_K\_M}} quantization on one NVIDIA A100
80-GB graphics processor. Each primary judge voted once at temperature zero
and in three temperature-0.7 samples. For analysis, each
model--sampling-run pair is one agent. The archived corpus contains
\nVotesBaseExact{} primary-protocol votes. Model identities, prompts, sampling
settings, item construction, archive fields, and provider routes are reported
in Appendix, section S9.

\subsection*{Estimation and held-out validation}
Within each domain, odd-indexed items formed the estimation half and
even-indexed items the evaluation half. We estimated each judge's
state-conditional approval probabilities and pairwise tetrachoric latent
correlations. Pairwise joint probabilities were constrained by the binary
Fr\'echet bounds before tetrachoric inversion, and each committee correlation
submatrix was projected to the positive-semidefinite cone when required. For
each fitted committee, Gaussian-copula simulation predicted the loss of every
$k$-of-$n$ threshold at
$\kappa\in\{0.25,0.5,1,2,4\}$. The independence comparator used the identical
procedure with the correlation matrices fixed to identity. Realized losses
use votes from held-out items. Sampled-composition identity-line
$R^2$ and root-mean-square error are descriptive calibration summaries because
threshold--cost rows from the same committee share data.
Identity-line $R^2$ fixes the slope at one and the intercept at zero and is
computed as
$1-\sum_j(y_j-\hat y_j)^2/\sum_j(y_j-\bar y)^2$.

\subsection*{Primary selection inference}
The primary estimand assigns equal weight to 20 prespecified
domain--pool-kind--committee-size cells: self-sampled three-member committees;
cross-model committees of sizes 3, 5, and 7; and mixed nine-member committees
in each domain. Each of \SelectionBootstrapValid{} valid bootstrap replicates
independently resampled estimation and evaluation items within domain and
state, re-estimated all marginals and correlations, reselected the smallest
loss-minimizing threshold for the full-matrix dependence model and the
independence model, and resampled whole committee clusters within design
cells. Predictions for all
candidate thresholds shared \SelectionMonteCarloDraws{} antithetic common random
draws within each refit. The intervals are conditional on the observed model
roster, benchmark domains, and committee design. Reviewed-record parser
sensitivities and reasoning-protocol comparisons are described in SI
Appendix, sections S9--S10.

\subsection*{Use of generative tools}
OpenAI Codex (GPT-5, accessed July 26, 2026) assisted language editing and
LaTeX consistency checks.
Separate language-model reviews were used in the reviewed-record parser
sensitivity analysis, followed by deterministic rules and model-assisted
adjudication. The authors verified the final text, derivations, citations, and
analysis outputs.

\section*{Data Availability}
The archived vote corpus and source manifests are available in
Zenodo release v7.3.1 \cite{li2026replication}. The concept DOI
\url{https://doi.org/10.5281/zenodo.21303076} indexes the release series.

\section*{Acknowledgments}
The authors received no funding for this work.

\section*{Author Contributions}
H.L. designed and performed the experiments and analyses
and wrote the first draft; M.H. conceived and supervised the research,
developed the methodology, and revised the paper.

\section*{Competing Interests}
The authors declare no competing interest.

\clearpage
\appendix
\setcounter{section}{0}
\setcounter{equation}{0}
\setcounter{figure}{0}
\setcounter{table}{0}
\setcounter{theorem}{0}
\renewcommand{\thesection}{S\arabic{section}}
\renewcommand{\theequation}{S\arabic{equation}}
\renewcommand{\thefigure}{S\arabic{figure}}
\renewcommand{\thetable}{S\arabic{table}}
\renewcommand{\thetheorem}{S\arabic{theorem}}

\section*{Appendix: Supporting Information}

\section{Model and definitions}
\label{sm:model}

\subsection{Screening problem}
An organization must decide whether to \emph{accept} or \emph{reject} a proposal
(a claim to certify, a solution to approve, a piece of code to merge, an answer
to release). The proposal is in one of two latent states,
$\theta \in \{G, B\}$. Good proposals have prior
$0<\pi_G<1$ and should be accepted; bad proposals have prior
$\pi_B = 1-\pi_G>0$ and should be rejected. Wrongly
accepting a bad proposal (a type-I organizational error) costs $c_{\mathrm{I}}>0$;
wrongly rejecting a good one (type-II) costs $c_{\mathrm{II}}>0$. This is the
canonical screening problem of Sah and Stiglitz; our interest is in organizations
whose members are artificial intelligence systems.

\subsection{Agents and correlated errors}
The organization employs $n$ agents. Agent $i$ casts a vote
$V_i \in \{0,1\}$ ($1=$ approve). Its state-conditional approval probability is
$p_\theta = \Pr(V_i = 1 \mid \theta)$; throughout,
$0<p_B<p_G<1$. To model dependence we use a one-factor latent-threshold
(probit) specification, standard in the analysis of correlated binary
events; it is the exchangeable Gaussian copula, known in credit risk as
the Vasicek model:
\begin{equation}
\label{eq:factor}
Z_i^{(\theta)} \;=\; \sqrt{\rho_\theta}\, W \;+\; \sqrt{1-\rho_\theta}\,
\varepsilon_i, \qquad W,\ \varepsilon_1,\dots,\varepsilon_n
\ \overset{\text{iid}}{\sim}\ \mathcal N(0,1),
\end{equation}
\begin{equation}
V_i \;=\; \mathbf 1\!\left\{ Z_i^{(\theta)} \le c_\theta \right\},
\qquad c_\theta = \Phi^{-1}(p_\theta),
\end{equation}
where $\Phi$ is the standard normal CDF and $\rho_\theta \in [0,1)$ is the
\emph{latent error correlation} in state $\theta$. The common factor $W$
represents a shared latent component of the votes, and $\varepsilon_i$
represents agent-specific variation. Conditional on $W=w$ the votes are iid
Bernoulli with
\begin{equation}
\label{eq:qw}
q_\theta(w) \;=\; \Phi\!\left( \frac{c_\theta - \sqrt{\rho_\theta}\,
w}{\sqrt{1-\rho_\theta}} \right),
\qquad \mathbb E\left[q_\theta(W)\right] = p_\theta .
\end{equation}
The manifest (Pearson) correlation between two votes,
$r_\theta = \operatorname{corr}(V_i, V_j)$, is related to the latent
correlation by
$r_\theta = [\Phi_2(c_\theta, c_\theta; \rho_\theta) - p_\theta^2] /
[p_\theta (1 - p_\theta)]$, where $\Phi_2(\cdot,\cdot;\rho)$ is the bivariate
normal CDF; estimating $\rho_\theta$ from binary votes by inverting this map is
the classical tetrachoric correlation. We allow $\rho_G \neq \rho_B$
throughout; agents may err together more on bad proposals than good ones.
Section~\ref{sm:hetero} treats heterogeneous marginals $p_{\theta,i}$ and a
general latent correlation matrix $\Sigma_\theta$. The exchangeable case
provides the analytic reference model.

\subsection{Architectures}
Following Sah--Stiglitz, an \emph{architecture} is a rule mapping votes to an
organizational decision. The classical design space is the family of
$k$-of-$n$ threshold rules: accept iff $S \ge k$, where
$S=\sum_{i=1}^n V_i$. It contains the three canonical organizational forms:
\begin{center}
\begin{tabular}{lll}
$k=n$ & \textbf{hierarchy} & series screening; unanimity; \\
$k=1$ & \textbf{polyarchy} & parallel screening; one approval suffices; \\
$k=\lceil (n{+}1)/2 \rceil$ & \textbf{majority committee} & simple majority.
\end{tabular}
\end{center}
The \emph{acceptance function} of a rule is
$A_{n,k}(p,\rho) = \Pr(S \ge k)$ evaluated at the state-specific pair
$(p_\theta, \rho_\theta)$. The organization's normalized expected loss is
\begin{equation}
\label{eq:loss}
R_{n,k} \;=\; \underbrace{\bigl(1 - A_{n,k}(p_G,\rho_G)\bigr)}_{\text{missed
good proposals}} \;+\; \kappa\,
\underbrace{A_{n,k}(p_B,\rho_B)}_{\text{accepted bad proposals}},
\qquad
\kappa \;=\; \frac{\pi_B\, c_{\mathrm{I}}}{\pi_G\, c_{\mathrm{II}}},
\end{equation}
Thus $\kappa$ combines target-state prevalence with the relative consequences
of false acceptance and false rejection. Screening is conservative when
$\kappa \gg 1$, as at deployment gates and safety reviews where bad
acceptances are expensive, and liberal when $\kappa \ll 1$, as in idea triage
and anomaly flagging where missed positives are expensive.

\paragraph{Calibration of $\kappa$ for deployment.}
The target decision stream supplies the state odds $\pi_B/\pi_G$, estimated
from adjudicated historical cases or a calibrated prevalence model. The
consequence ratio $c_{\mathrm I}/c_{\mathrm{II}}$ expresses false acceptance
and false rejection in a common decision-relevant unit. Their product gives
the deployment value of $\kappa$. When either component is uncertain, a
plausible interval $[\kappa_L,\kappa_U]$ induces threshold-optimality regions
\[
\mathcal I_k=\{\kappa>0:R_{n,k}(\kappa)\le R_{n,j}(\kappa)
\ \text{for every }j=1,\ldots,n\}.
\]
Because $R_{n,k}(\kappa)$ is affine in $\kappa$, these regions follow from
pairwise intersections of the candidate loss lines; Proposition~\ref{prop:topkis}
orders them from permissive to strict thresholds. Intersections with
$[\kappa_L,\kappa_U]$ give the selected threshold and its stability over the
deployment-relevant range.

All interior results take
$n\in\mathbb N$, $1\leq k\leq n$, $0\leq\rho_G,\rho_B<1$, and
$0<\kappa<\infty$; boundary results state their limits explicitly.

\section{Exact acceptance functions}
\label{sm:exact}

Write
$B_{n,k}(x)=\Pr\{\operatorname{Bin}(n,x)\ge k\}$ for the binomial upper-tail
probability.

\begin{lemma}[Mixed-binomial representation]
\label{lem:rep}
Under model \eqref{eq:factor},
\begin{equation}
\label{eq:Ank}
A_{n,k}(p,\rho) \;=\; \int_{-\infty}^{\infty}
\Pr\bigl(\mathrm{Bin}(n, q(w)) \ge k\bigr)\, \varphi(w)\, dw ,
\end{equation}
with $q(w)$ as in \eqref{eq:qw} and $\varphi$ the standard normal density.
In particular $A_{n,k}(p,0) = \Pr(\mathrm{Bin}(n,p) \ge k)$.
\end{lemma}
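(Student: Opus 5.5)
The plan is to condition on the common factor $W$ and then integrate it out using the law of total probability.

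\emph{Step 1: the conditional vote distribution.} Fix $W=w$. Each score $Z_i=\sqrt{\rho}\,w+\sqrt{1-\rho}\,\varepsilon_i$ is then a deterministic function of $\varepsilon_i$ alone. Because $W,\varepsilon_1,\dots,\varepsilon_n$ are mutually independent, the $\varepsilon_i$ remain iid $\mathcal N(0,1)$ given $W=w$, so the indicators $V_i=\mathbf 1\{Z_i\le c\}$ are conditionally independent. For $\rho<1$ I will rewrite each event as
\[
\{Z_i\le c\}=\Bigl\{\varepsilon_i\le \frac{c-\sqrt{\rho}\,w}{\sqrt{1-\rho}}\Bigr\},
\]
which gives $\Pr(V_i=1\mid W=w)=q(w)$ exactly as in \eqref{eq:qw}. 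Hence $S\mid W=w\sim\mathrm{Bin}(n,q(w))$.

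\emph{Step 2: integrate out $W$.} Since $W$ has density $\varphi$, the tower property gives
\[
A_{n,k}(p,\rho)=\Pr(S\ge k)=\mathbb E\bigl[\Pr(S\ge k\mid W)\bigr]=\int B_{n,k}(q(w))\,\varphi(w)\,dw .
\]
Measurability and integrability are immediate because the integrand lies in $[0,1]$ and is continuous in $w$. To make the conditioning fully rigorous rather than heuristic, I will compute the joint law of $(W,\varepsilon)$ with Fubini: integrate first over the product Gaussian density of the $\varepsilon_i$ at fixed $w$, then over $w$.

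\emph{Step 3: the special case $\rho=0$.} Here $q(w)=\Phi(c)=p$ for every $w$, so the integrand is the constant $B_{n,k}(p)$ and the integral reduces to $\Pr(\mathrm{Bin}(n,p)\ge k)$. The identity $\mathbb E[q(W)]=p$ stated in \eqref{eq:qw} is not needed for the lemma itself, though it follows from the $n=k=1$ case of the representation.

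I do not expect a genuine obstacle. The only point needing care is justifying conditional independence given $W$, which the Fubini computation in Step 2 handles directly, together with remembering that $\rho<1$ so the division by $\sqrt{1-\rho}$ in Step 1 is legitimate.
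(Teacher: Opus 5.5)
Your proposal is correct and matches the paper's argument: the paper's one-line proof is likewise conditional independence of the votes given $W$, so that $S\mid W=w\sim\mathrm{Bin}(n,q(w))$, followed by Fubini to integrate out $W$. The $\rho=0$ case then has the constant integrand $B_{n,k}(p)$, and your version simply writes out the details the paper calls immediate.
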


\begin{proof}
Immediate from conditional independence of the $V_i$ given $W$ and Fubini's
theorem.
\end{proof}

\begin{lemma}[Order-statistic representation]
\label{lem:order}
Let $\varepsilon_{(k)}$ denote the $k$-th smallest of
$\varepsilon_1,\dots,\varepsilon_n$. The $k$-of-$n$ rule accepts iff
$Z_{(k)} \le c$, and
\begin{equation}
\label{eq:orderstat}
A_{n,k}(p,\rho) \;=\; \Pr\Bigl( \sqrt{\rho}\, W + \sqrt{1-\rho}\,
\varepsilon_{(k)} \;\le\; c \Bigr).
\end{equation}
\end{lemma}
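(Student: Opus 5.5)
The plan is to prove the two claims in sequence: first the event identity \(\{S\ge k\}=\{Z_{(k)}\le c\}\), then the reduction of \(Z_{(k)}\) to an affine function of \(\varepsilon_{(k)}\).

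\textbf{Event identity.} Work pathwise on the probability space carrying \(W,\varepsilon_1,\dots,\varepsilon_n\). Let \(Z_{(1)}\le\dots\le Z_{(n)}\) be the order statistics of the latent scores \(Z_i=\sqrt{\rho}\,W+\sqrt{1-\rho}\,\varepsilon_i\), and note that \(V_i=\mathbf 1\{Z_i\le c\}\). Then
\[
S=\#\{i: Z_i\le c\}.
\]
For any real numbers, at least \(k\) of them are \(\le c\) exactly when the \(k\)-th smallest is \(\le c\). If \(Z_{(k)}\le c\), then \(Z_{(1)},\dots,Z_{(k)}\) are all \(\le c\), so \(S\ge k\). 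Conversely, if \(S\ge k\), the \(k\) smallest values are each at most some value \(\le c\), so \(Z_{(k)}\le c\). Ties cause no trouble, since the order statistics are defined with multiplicity. This shows the \(k\)-of-\(n\) rule accepts iff \(Z_{(k)}\le c\).

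\textbf{Shared-shift identity.} Next I will show that \(Z_{(k)}=\sqrt{\rho}\,W+\sqrt{1-\rho}\,\varepsilon_{(k)}\). The map \(x\mapsto \sqrt{\rho}\,W+\sqrt{1-\rho}\,x\) has the same shift \(\sqrt\rho\,W\) for every coordinate. Because \(\rho<1\), the scale \(\sqrt{1-\rho}\) is strictly positive, so the map is nondecreasing. A nondecreasing map applied coordinatewise commutes with taking the \(k\)-th smallest value. The key point is that \(W\) is common to all agents, so one map serves every coordinate.

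\textbf{Conclusion.} Taking probabilities of the equal events gives \eqref{eq:orderstat}, with \(c=\Phi^{-1}(p)\). There is no real obstacle. The only points that need care are:
\begin{itemize}
\item the monotonicity step, which requires \(\rho<1\) (at \(\rho=0\) the identity is trivial);
\item the tie convention, which is harmless and in any case occurs with probability zero.
\end{itemize}

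As a consistency check, I will compare with Lemma~\ref{lem:rep}. Conditioning on \(W=w\) gives \(\Pr(\varepsilon_{(k)}\le (c-\sqrt\rho w)/\sqrt{1-\rho})=\Pr(\mathrm{Bin}(n,q(w))\ge k)\), which is the standard order-statistic/binomial duality and recovers \eqref{eq:Ank}.
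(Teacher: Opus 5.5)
Your proposal is correct and follows the paper's proof: the counting identity $\{S\ge k\}=\{Z_{(k)}\le c\}$, then the observation that the common shift $\sqrt{\rho}\,W$ plus the nonnegative scale $\sqrt{1-\rho}$ commutes with taking order statistics, so $Z_{(k)}=\sqrt{\rho}\,W+\sqrt{1-\rho}\,\varepsilon_{(k)}$. One small point: a weakly nondecreasing map already commutes with order statistics, so $\rho<1$ is not actually needed for that step; it only matters through the model's parameter range.
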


\begin{proof}
At least $k$ of the $Z_i$ fall below $c$ iff the $k$-th smallest does. Since
$Z_i = \sqrt\rho\, W + \sqrt{1-\rho}\,\varepsilon_i$ and $W$ is common to all
agents, the order statistics of $(Z_i)$ are
$Z_{(k)} = \sqrt\rho\, W + \sqrt{1-\rho}\,\varepsilon_{(k)}$.
\end{proof}

Lemma~\ref{lem:order} decomposes the organization's decision statistic into an
\emph{undiversifiable} systematic term $\sqrt{\rho}\,W$ of variance $\rho$,
whose variance remains $\rho$ for every committee size, and a
\emph{diversifiable} idiosyncratic
term $\sqrt{1-\rho}\,\varepsilon_{(k)}$ whose randomness vanishes as the
organization grows ($\varepsilon_{(\lceil \tau n\rceil)} \to \Phi^{-1}(\tau)$
almost surely). Organizational design affects the second term; correlation
fixes the size of the first.

\section{Comparative statics in the correlation}
\label{sm:compstat}

\begin{lemma}[Correlation is a mean-preserving spread]
\label{lem:cx}
Fix $p \in (0,1)$ and let $0 \le \rho < \rho' < 1$. Then
$q_\rho(W) \le_{\mathrm{cx}} q_{\rho'}(W)$ (convex order): for every convex
$f:[0,1]\to\mathbb R$,
$\mathbb E f(q_\rho(W)) \le \mathbb E f(q_{\rho'}(W))$, with strict inequality
whenever $f$ is strictly convex.
\end{lemma}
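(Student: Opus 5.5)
The plan is to reduce the convex-order claim to a one-dimensional monotone family and differentiate in $\rho$. Write $c=\Phi^{-1}(p)$ and, for $\rho\in[0,1)$,
\[
q_\rho(w)=\Phi\!\left(\frac{c-\sqrt{\rho}\,w}{\sqrt{1-\rho}}\right),
\]
so that $\mathbb E\,q_\rho(W)=p$ for every $\rho$ by \eqref{eq:qw}. Two routes are available.

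\textbf{Coupling route (martingale construction).} Take $\rho<\rho'$ and write
\[
\sqrt{\rho'}\,W' = \sqrt{\rho}\,W + \sqrt{\rho'-\rho}\,U,
\]
with $W,U$ independent standard normals. Conditioning on $W$ and integrating out $U$ and the idiosyncratic noise gives
\[
\mathbb E\bigl[q_{\rho'}(W')\mid W\bigr]
=\Pr\bigl(\sqrt{\rho'-\rho}\,U+\sqrt{1-\rho'}\,\varepsilon\le c-\sqrt\rho\,W \,\big|\, W\bigr).
\]
The sum $\sqrt{\rho'-\rho}\,U+\sqrt{1-\rho'}\,\varepsilon$ is $\mathcal N(0,1-\rho)$, so the right-hand side equals $q_\rho(W)$. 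Hence $q_\rho(W)=\mathbb E[q_{\rho'}(W')\mid W]$. Since $q_{\rho'}(W')$ has the same law as $q_{\rho'}(W)$, conditional Jensen gives
\[
\mathbb E f(q_\rho(W))\le \mathbb E f(q_{\rho'}(W))
\]
for every convex $f$. This is Strassen's characterization, but only the easy direction is needed, so it is self-contained. The case $\rho=0$ is included, since there $q_0\equiv p$ is constant.

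\textbf{Strictness (the main obstacle).} For strictly convex $f$, conditional Jensen is strict unless $q_{\rho'}(W')$ is almost surely $W$-measurable. Given $W$, the map $U\mapsto q_{\rho'}(W')$ is a strictly monotone function of $U$, because $\Phi$ is strictly increasing and $\rho'>0$. It also has nondegenerate Gaussian input, since $\rho'-\rho>0$. So the conditional law is nondegenerate for every $W$, and strict Jensen holds pointwise in $W$. Integrating gives
\[
\mathbb E f(q_\rho(W))<\mathbb E f(q_{\rho'}(W)).
\]
Some care is needed for integrability. Because $q\in[0,1]$, I would take $f$ finite and convex on $[0,1]$; it is then bounded above by $\max\{f(0),f(1)\}$ and bounded below by an affine function, so all expectations are finite.

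A short remark would note the equivalent characterization: equal means plus $\mathbb E(q-t)_+$ nondecreasing in $\rho$ for every $t$. The coupling route avoids verifying the single-crossing of the CDFs at the common point $q=p$ directly, so I would present it as the proof and mention the single-crossing picture only as intuition.
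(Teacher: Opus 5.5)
Your proof is correct, but it takes a different route from the paper.

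\textbf{The paper's route.} It works with the explicit quantile function $Q_\rho(\alpha)=\Phi\bigl(a_\rho+b_\rho\Phi^{-1}(\alpha)\bigr)$, where $b_\rho=\sqrt{\rho/(1-\rho)}$. For $\rho<\rho'$ the two quantile functions cross exactly once, because their arguments are affine in $u=\Phi^{-1}(\alpha)$ with different slopes. The paper combines this single crossing with the common mean $p$ through the Karlin--Novikoff cut criterion. It attributes strictness to the strictly larger variance $\Phi_2(c,c;\rho')-p^2$, via Plackett's identity.

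\textbf{Your route.} You construct a martingale coupling. The split $\sqrt{\rho'}\,W'=\sqrt\rho\,W+\sqrt{\rho'-\rho}\,U$ gives $q_\rho(W)=\mathbb E[q_{\rho'}(W')\mid W]$. In effect $q_\rho(W)$ is $\Pr(B_1\le c\mid B_\rho)$ for a standard Brownian motion $B$, a Doob martingale in $\rho$, so raising $\rho$ reveals more of the latent score. The convex order is then just conditional Jensen.

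\textbf{What each buys.} Your argument is self-contained: it needs no cut criterion and no quantile computation, and it treats $\rho=0$ uniformly. It also gives strictness directly from the nondegenerate conditional law of $q_{\rho'}(W')$ given $W$. The paper's variance step tacitly needs an extra fact: two convex-ordered variables that give equal expectations for some strictly convex $f$ must have the same law. The paper's route, in turn, is shorter once the closed-form Vasicek quantile is available, and it shows the single-crossing shape of the two distribution functions explicitly. Your boundedness remark for finite convex $f$ on $[0,1]$ is the right integrability check; the paper leaves it implicit.
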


\begin{proof}
The quantile function of $q_\rho(W)$ is, using
$-\Phi^{-1}(\alpha) = \Phi^{-1}(1-\alpha)$ and monotonicity of \eqref{eq:qw}
in $w$,
\[
Q_\rho(\alpha) \;=\; \Phi\!\left( a_\rho + b_\rho\, \Phi^{-1}(\alpha)\right),
\qquad a_\rho = \frac{c}{\sqrt{1-\rho}}, \quad
b_\rho = \sqrt{\frac{\rho}{1-\rho}} .
\]
$b_\rho$ is strictly increasing in $\rho$. For $\rho<\rho'$ the affine
functions $\alpha \mapsto a_\rho + b_\rho u$ and
$a_{\rho'} + b_{\rho'} u$ of $u=\Phi^{-1}(\alpha)$ cross exactly once (their
difference is affine in $u$ with nonzero slope), hence $Q_\rho$ and
$Q_{\rho'}$ cross exactly once, with $Q_{\rho'}$ steeper: smaller on the left
of the crossing, larger on the right. Equivalently the CDFs cross exactly
once, with the $\rho'$-CDF above on the left. Both variables have mean $p$
because $\mathbb E q_\rho(W) = \Pr(Z_1 \le c) = \Phi(c) = p$ for every
$\rho$. Equal means plus single crossing of CDFs is the Karlin--Novikoff cut
criterion, which yields the convex order. Strictness for strictly convex $f$
follows from the different variances
$\Phi_2(c,c;\rho)-p^2 < \Phi_2(c,c;\rho')-p^2$, by strict
monotonicity of $\Phi_2$ in $\rho$, e.g.\ via Plackett's identity
$\partial \Phi_2(c,c;\rho)/\partial \rho = \varphi_2(c,c;\rho) > 0$.
\end{proof}

\begin{proposition}[Correlation weakens state-specific protection]
\label{prop:jensen}
For all $n \ge 2$ and $p \in (0,1)$:
\begin{enumerate}
\item[(a)] \emph{Hierarchy.} $A_{n,n}(p,\rho) = \mathbb E[q_\rho(W)^n]$ is
strictly increasing in $\rho$, and $A_{n,n}(p,\rho) \ge p^n$ with equality iff
$\rho = 0$. Applied at $(p_B,\rho_B)$: the hierarchy's false-acceptance rate
rises with $\rho_B$; the multiplicative filtering $p_B^n$ promised by
independence collapses.
\item[(b)] \emph{Polyarchy.} $A_{n,1}(p,\rho) = 1 - \mathbb E[(1-q_\rho(W))^n]$
is strictly decreasing in $\rho$, and $A_{n,1}(p,\rho) \le 1-(1-p)^n$ with
equality iff $\rho=0$. Applied at $(p_G,\rho_G)$: the polyarchy's rescue of
good proposals degrades with $\rho_G$.
\item[(c)] \emph{Majority.} For odd $n$ and $m = (n{+}1)/2$, the map
$x \mapsto \Pr(\mathrm{Bin}(n,x) \ge m)$ is strictly convex on $(0,\tfrac12)$
and strictly concave on $(\tfrac12,1)$, with inflection exactly at
$\tfrac12$. Consequently, the right derivative at $\rho=0$ given by
Proposition~\ref{prop:local} makes majority acceptance fall in state $G$
(where $p_G>\tfrac12$) and rise in state $B$ (where $p_B<\tfrac12$).
Correlation therefore increases both loss components \emph{locally at
independence}. Global monotonicity in $\rho$ is evaluated numerically on the
prespecified finite grid in C6 (\S\ref{sm:cert}).
\end{enumerate}
\end{proposition}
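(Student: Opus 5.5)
Parts (a) and (b) will be read off from Lemma~\ref{lem:rep} and Lemma~\ref{lem:cx}; part (c) is a direct computation on the binomial tail.

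\emph{Parts (a) and (b).} For the hierarchy, Lemma~\ref{lem:rep} with $k=n$ gives $A_{n,n}(p,\rho)=\mathbb E[q_\rho(W)^n]$, since $\Pr(\mathrm{Bin}(n,x)\ge n)=x^n$. For $n\ge2$ the map $f(x)=x^n$ is strictly convex on $[0,1]$. Lemma~\ref{lem:cx} then gives $\mathbb E f(q_\rho(W))<\mathbb E f(q_{\rho'}(W))$ whenever $\rho<\rho'$, which is strict monotonicity. At $\rho=0$ we have $q_0\equiv p$, so $A_{n,n}(p,0)=p^n$. Strict monotonicity then yields $A_{n,n}(p,\rho)>p^n$ for every $\rho>0$. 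Equivalently, Jensen's inequality with strict convexity, together with the fact that $q_\rho(W)$ is nondegenerate for $\rho>0$, gives equality iff $\rho=0$.

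For the polyarchy, $\Pr(\mathrm{Bin}(n,x)\ge1)=1-(1-x)^n$, and $g(x)=(1-x)^n$ is strictly convex for $n\ge2$. The same argument shows that $\mathbb E[(1-q_\rho(W))^n]$ is strictly increasing in $\rho$. Hence $A_{n,1}$ is strictly decreasing, with equality to $1-(1-p)^n$ iff $\rho=0$.

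\emph{Part (c).} I would use the classical identity
\[
\frac{d}{dx}\Pr(\mathrm{Bin}(n,x)\ge m)=n\binom{n-1}{m-1}x^{m-1}(1-x)^{n-m}.
\]
This is the Beta-density form of the binomial upper tail, and it follows by telescoping the derivative of the sum. With $n=2m-1$ we have $n-m=m-1$, so the derivative equals $C\,[x(1-x)]^{m-1}$ with $C>0$. For $m\ge2$ (that is, $n\ge3$), its derivative is
\[
C(m-1)[x(1-x)]^{m-2}(1-2x).
\]
This is strictly positive on $(0,\tfrac12)$, strictly negative on $(\tfrac12,1)$, and changes sign exactly at $\tfrac12$. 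That establishes strict convexity, strict concavity, and the inflection point.

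The local statement then comes from the right-derivative formula of Proposition~\ref{prop:local}, which I take as given. That formula expresses $\partial_\rho^+ A$ at $\rho=0$ as a positive multiple of the second derivative of the binomial tail at $x=p$. This is consistent with the mean-preserving spread having variance of order $\rho$: to first order, $\mathbb E h(q_\rho(W))-h(p)\approx \tfrac12 h''(p)\operatorname{Var}q_\rho(W)$, and $\operatorname{Var}q_\rho(W)$ grows linearly in $\rho$ near $0$. The sign of the right derivative is therefore the sign of $1-2p$. It is negative in state $G$ because $p_G>\tfrac12$, and positive in state $B$ because $p_B<\tfrac12$. Both loss components $1-A(p_G,\rho_G)$ and $\kappa A(p_B,\rho_B)$ therefore increase locally.

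\emph{Main obstacle.} The hardest point is this last link: making sure the right derivative at $\rho=0$ really has the sign of $h''(p)$. If Proposition~\ref{prop:local} were not available, I would compute it directly. I would expand $q_\rho(w)$ to first order in $\rho$ and apply Plackett's identity to the mixed moments, as in the proof of Lemma~\ref{lem:cx}. This requires justifying that a second-order Taylor expansion of $h$ under the expectation is legitimate, which holds because $h$ is a polynomial and $q_\rho(W)\to p$ in $L^2$ at rate $O(\sqrt\rho)$. Global monotonicity in $\rho$ is explicitly not claimed, so no argument is needed beyond the local one.
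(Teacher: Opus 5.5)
Your proposal is correct and follows the paper's proof. For (a)--(b) you use strict convexity of $x^n$ and $(1-x)^n$ with the convex order of Lemma~\ref{lem:cx} and Jensen at mean $p$; for (c) you use the Beta-density identity for the binomial tail, with the sign of the local effect read off from Proposition~\ref{prop:local}. Your factorization $C[x(1-x)]^{m-1}$ is only a tidier form of the paper's bracket $(n-1)(\tfrac12-x)$. One small correction to your fallback remark: bounding the cubic Taylor remainder by $o(\rho)$ needs $\mathbb E|q_\rho(W)-p|^3=O(\rho^{3/2})$. That follows from the pointwise bound $|q_\rho(W)-p|\le C\sqrt\rho\,(1+|W|)$, not from $L^2$ convergence alone.
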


\begin{proof}
(a) $x^n$ is strictly convex on $[0,1]$; apply Lemma~\ref{lem:cx} for
monotonicity and Jensen's inequality at mean $p$ for the bound. Strict
monotonicity from strictness in Lemma~\ref{lem:cx}. An alternative proof of
monotonicity is Slepian's inequality: $A_{n,n} = \Pr(\max_i Z_i \le c)$ is a
Gaussian orthant probability, nondecreasing in every off-diagonal correlation.
(b) Symmetric, with $f(x) = (1-x)^n$ strictly convex, or Slepian on
$\Pr(\min_i Z_i > c)$.
(c) With $B_{n,m}(x) = \Pr(\mathrm{Bin}(n,x)\ge m)$ we have the classical
identity $B_{n,m}'(x) = n\binom{n-1}{m-1} x^{m-1}(1-x)^{n-m}$, so
$B_{n,m}''(x) = n\binom{n-1}{m-1} x^{m-2}(1-x)^{n-m-1}\,[(m-1) - (n-1)x]$.
For $m = (n{+}1)/2$ the bracket equals $(n-1)(\tfrac12 - x)$: positive for
$x<\tfrac12$, negative for $x>\tfrac12$.
\end{proof}

\begin{proposition}[Local effect of correlation]
\label{prop:local}
For any $1\le k\le n$ and $p\in(0,1)$,
\begin{equation}
\frac{\partial A_{n,k}(p,\rho)}{\partial \rho}\Big|_{\rho=0^+}
\;=\; \tfrac12\, \varphi(c)^2\, B_{n,k}''(p),
\qquad c = \Phi^{-1}(p).
\end{equation}
\end{proposition}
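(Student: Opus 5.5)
Since $A_{n,k}(p,\rho)=\mathbb E[B_{n,k}(q_\rho(W))]$ by Lemma~\ref{lem:rep}, the plan is to Taylor-expand the random argument $q_\rho(W)$ around its mean $p$ and use that it is a mean-preserving perturbation. Set $s=\sqrt\rho$ and write $q_\rho(w)=\Phi\bigl((c-s w)/\sqrt{1-s^2}\bigr)=:g(s,w)$. For small $s$ we have $g(s,w)=p-\varphi(c)\,s\,w+O(s^2(1+w^2))$. Expand $B_{n,k}$ (a polynomial, so $C^\infty$ with bounded derivatives on $[0,1]$) to second order about $p$:
\[
B_{n,k}(q)=B_{n,k}(p)+B_{n,k}'(p)(q-p)+\tfrac12 B_{n,k}''(p)(q-p)^2+O(|q-p|^3).
\]
Taking expectations over $W$, the linear term vanishes exactly because $\mathbb E q_\rho(W)=p$ for every $\rho$ (equation~\eqref{eq:qw}). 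The quadratic term gives $\tfrac12 B_{n,k}''(p)\,\mathrm{Var}(q_\rho(W))$, and the cubic remainder is $O(\mathbb E|q-p|^3)=O(s^3)=o(\rho)$, since $|q-p|\le \|\partial_s g\|\,s$ with a Gaussian-integrable bound on $\partial_s g$ (or simply bounded by $1$ and controlled by $|w|$-moments).

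The key step is the variance: $\mathrm{Var}(q_\rho(W))=\Phi_2(c,c;\rho)-p^2$, because $\mathbb E[q_\rho(W)^2]=\Pr(Z_1\le c,Z_2\le c)$. By Plackett's identity, already invoked in Lemma~\ref{lem:cx}, $\partial_\rho\Phi_2(c,c;\rho)=\varphi_2(c,c;\rho)$, which at $\rho=0$ equals $\varphi(c)^2$. Hence $\mathrm{Var}(q_\rho(W))=\varphi(c)^2\rho+o(\rho)$, which also matches the leading-order $\varphi(c)^2 s^2\,\mathbb E W^2$ from the expansion of $g$. Combining, $A_{n,k}(p,\rho)-A_{n,k}(p,0)=\tfrac12\varphi(c)^2B_{n,k}''(p)\rho+o(\rho)$. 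Dividing by $\rho$ and letting $\rho\downarrow0$ gives the claimed right derivative. Note that one cannot differentiate naively in $\rho$, since $\sqrt\rho$ is non-differentiable at $0$; working in $\rho$ through the variance avoids this.

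The main obstacle is making the remainder rigorously $o(\rho)$ rather than merely $O(\rho)$, which forces the expansion to be at the level of $\rho$ and not of $\sqrt\rho$. I would bound the cubic term by $\tfrac16\|B_{n,k}'''\|_\infty\,\mathbb E|q_\rho(W)-p|^3$. Using $|q_\rho(w)-p|\le \sup|\partial_s g|\cdot s\le C s(1+|w|)$ for $s\le 1/2$, since $\partial_s g$ is $\varphi(\cdot)$ times a factor linear in $w$, this gives $\mathbb E|q-p|^3\le C' s^3=C'\rho^{3/2}$. Alternatively, one can use exact Taylor with integral remainder and dominated convergence.
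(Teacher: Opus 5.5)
Your proposal is correct and follows essentially the same route as the paper. Both write $A_{n,k}=\mathbb E[B_{n,k}(q_\rho(W))]$ and Taylor-expand $B_{n,k}$ to second order about $p$, with the linear term removed by $\mathbb E q_\rho(W)=p$ and the cubic remainder bounded by $O(\rho^{3/2})$ via $|q_\rho(W)-p|\le C\sqrt\rho\,(1+|W|)$; both then identify $\operatorname{Var}(q_\rho(W))=\Phi_2(c,c;\rho)-p^2$, whose right derivative at zero equals $\varphi(c)^2$ by Plackett's identity.
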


\begin{proof}
Write $A_{n,k}(p,\rho) = \mathbb E[B_{n,k}(q_\rho(W))]$. For every fixed
$r\ge1$, the mean-value theorem for $\Phi$ and
\[
\left|\frac{c-\sqrt\rho W}{\sqrt{1-\rho}}-c\right|
\le C\sqrt\rho\,(1+|W|)
\]
for sufficiently small $\rho$ give
$\lVert q_\rho(W)-p\rVert_r=O(\sqrt\rho)$, since Gaussian moments are finite.
Because $B_{n,k}$ is a polynomial with bounded third derivative on $[0,1]$,
Taylor's theorem, together with $\mathbb E q_\rho(W)=p$, yields
\[
A_{n,k}(p,\rho)=B_{n,k}(p)
+\tfrac12B_{n,k}''(p)\operatorname{Var}(q_\rho(W))+O(\rho^{3/2}).
\]
The variance is exactly
$\operatorname{Var}(q_\rho(W))=\Phi_2(c,c;\rho)-p^2$. By Plackett's
identity, its right derivative at zero is
$\varphi_2(c,c;0)=\varphi(c)^2$. Dividing the expansion by $\rho$ and taking
$\rho\downarrow0$ proves the result.
\end{proof}

\begin{theorem}[Collapse of fixed threshold rules at perfect correlation]
\label{thm:collapse}
For every $n$ and every $1 \le k \le n$, the acceptance probability satisfies
$A_{n,k}(p,\rho) \to p$ as $\rho \to 1^-$. Consequently, if
$(\rho_G,\rho_B)\to(1,1)$, then $R_{n,k}\to R_{1,1}$. Every fixed-size
\kofn{} threshold rule converges to the vote-following single-agent rule. The
optimum over unrestricted vote-aggregation rules is treated in
Theorem~\ref{thm:gain}.
\end{theorem}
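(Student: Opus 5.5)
The plan is to use the order-statistic representation of Lemma~\ref{lem:order}:
\[
A_{n,k}(p,\rho)=\Pr\bigl(\sqrt\rho\,W+\sqrt{1-\rho}\,\varepsilon_{(k)}\le c\bigr),\qquad c=\Phi^{-1}(p).
\]
As $\rho\to1^-$, the random variable $X_\rho=\sqrt\rho\,W+\sqrt{1-\rho}\,\varepsilon_{(k)}$ converges almost surely to $W$. This holds because $n$ is fixed, so $\varepsilon_{(k)}$ is a fixed finite random variable multiplied by $\sqrt{1-\rho}\to0$, and $\sqrt\rho\to1$. Almost-sure convergence implies convergence in distribution. Since the limit $W\sim\mathcal N(0,1)$ has a continuous CDF, every point is a continuity point, so $\Pr(X_\rho\le c)\to\Phi(c)=p$.

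A fully explicit alternative avoids the portmanteau theorem. Fix $\delta>0$ and bound the event symmetrically:
\[
\Pr(W\le c-\delta)-\Pr\bigl(\sqrt{1-\rho}\,|\varepsilon_{(k)}|+|1-\sqrt\rho|\,|W|>\delta\bigr)\;\le\;A_{n,k}\;\le\;\Pr(W\le c+\delta)+\Pr(\cdots>\delta).
\]
The error probability tends to $0$, since $|\varepsilon_{(k)}|\le\max_i|\varepsilon_i|$ is tight for fixed $n$. Letting $\delta\downarrow0$ then gives the limit $p$. A second route is the mixed-binomial form of Lemma~\ref{lem:rep}. As $\rho\to1$, $q_\rho(w)$ tends to $\mathbf 1\{w<c\}$ for every $w\ne c$, so $B_{n,k}(q_\rho(w))$ tends to the same indicator for $k\ge1$. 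Dominated convergence with bound $1$ then gives $\int\mathbf 1\{w<c\}\varphi(w)\,dw=p$. I would present the first argument and mention this second one as a check.

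For the loss statement, apply the limit state by state. We get $A_{n,k}(p_G,\rho_G)\to p_G$ and $A_{n,k}(p_B,\rho_B)\to p_B$ as $(\rho_G,\rho_B)\to(1,1)$; each depends only on its own correlation, so joint convergence is immediate. Since $R_{n,k}$ is affine in these two quantities, $R_{n,k}\to(1-p_G)+\kappa p_B$. This equals $R_{1,1}$, because $A_{1,1}(p,\rho)=p$ for every $\rho$ (a single vote is Bernoulli$(p)$). That is the vote-following single agent.

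There is no genuine obstacle. The only point needing care is that the convergence uses fixed $n$: the term $\sqrt{1-\rho}\,\varepsilon_{(k)}$ vanishes only because $\varepsilon_{(k)}$ is tight uniformly in $\rho$. This makes the contrast with the $n\to\infty$ limit of Result~2 explicit.
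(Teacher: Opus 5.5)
Your proof is correct, but your primary argument takes a different route from the paper's. You use the order-statistic representation of Lemma~\ref{lem:order}: the decision statistic $\sqrt\rho\,W+\sqrt{1-\rho}\,\varepsilon_{(k)}$ converges almost surely to $W$, and the continuous normal CDF then gives the limit $\Phi(c)=p$. The paper's proof is essentially your ``second route.'' It uses the mixed-binomial form of Lemma~\ref{lem:rep}, notes that $q_\rho(W)\to\mathbf 1\{W\le c\}$ almost surely so that $q_\rho(W)\Rightarrow\mathrm{Bernoulli}(p)$, and passes the bounded continuous polynomial $B_{n,k}$, with $B_{n,k}(0)=0$ and $B_{n,k}(1)=1$, through the limit.

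Your route buys two things. Its limit law is continuous, so no continuity-point issue arises. It also gives a rate almost for free: conditioning on $\varepsilon_{(k)}$ and using that $\Phi$ is $1/\sqrt{2\pi}$-Lipschitz yields $|A_{n,k}(p,\rho)-p|=O(\sqrt{1-\rho})$, with a constant depending on $\mathbb E|\varepsilon_{(k)}|$.

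The paper's route does not need the identity $\{S\ge k\}=\{Z_{(k)}\le c\}$, which holds only for threshold rules. Instead it shows the mechanism directly: the conditional approval probability degenerates to $0$ or $1$, so the votes become unanimous given $W$. That same unanimity mechanism underlies the extension to unrestricted aggregation rules in Theorem~\ref{thm:gain}.

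You also spell out the loss consequence, which the paper leaves implicit: $A_{1,1}(p,\rho)=p$ together with state-by-state convergence gives $R_{n,k}\to(1-p_G)+\kappa p_B=R_{1,1}$.
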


\begin{proof}
As $\rho \to 1$, $q_\rho(W) \to \mathbf 1\{W \le c\}$ almost surely, so
$q_\rho(W) \Rightarrow \mathrm{Bernoulli}(p)$. Since
$B_{n,k}$ is continuous and bounded with $B_{n,k}(0)=0$, $B_{n,k}(1)=1$,
$\mathbb E B_{n,k}(q_\rho(W)) \to (1-p)\cdot 0 + p \cdot 1 = p$.
\end{proof}

\section{Large organizations: the correlation floor}
\label{sm:floor}

\begin{theorem}[Vasicek limit for organizations]
\label{thm:vasicek}
Fix $p\in(0,1)$, $\rho\in(0,1)$, and a fractional threshold
$\tau \in (0,1)$. Let $k_n\in\{1,\ldots,n\}$ be any integer sequence such
that $k_n/n\to\tau$. Then
\begin{equation}
\label{eq:vasicek}
\lim_{n\to\infty} A_{n,k_n}(p,\rho)
\;=\; \Phi\!\left( \frac{\Phi^{-1}(p) - \sqrt{1-\rho}\; \Phi^{-1}(\tau)}
{\sqrt{\rho}} \right).
\end{equation}
\end{theorem}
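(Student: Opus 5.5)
The plan is to start from the order-statistic representation of Lemma~\ref{lem:order}. It gives $A_{n,k_n}(p,\rho)=\Pr\bigl(\sqrt\rho\,W+\sqrt{1-\rho}\,\varepsilon_{(k_n)}\le c\bigr)$ with $c=\Phi^{-1}(p)$. Here $W$ is independent of the $\varepsilon_i$, hence of $\varepsilon_{(k_n)}$.

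The key probabilistic input is that the sample quantile $\varepsilon_{(k_n)}$ converges to $\Phi^{-1}(\tau)$ in probability whenever $k_n/n\to\tau\in(0,1)$. I would prove this directly rather than cite the Glivenko--Cantelli strong version. Fix $\delta>0$ and set $x_\pm=\Phi^{-1}(\tau)\pm\delta$. The event $\varepsilon_{(k_n)}>x_+$ means fewer than $k_n$ of the $\varepsilon_i$ fall below $x_+$, i.e.\ $\mathrm{Bin}(n,\Phi(x_+))<k_n$. Since $\Phi(x_+)>\tau$ strictly (because $\Phi$ is strictly increasing) and $k_n/n\to\tau$, the weak law of large numbers (or Chebyshev) sends this probability to zero. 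Symmetrically, $\varepsilon_{(k_n)}<x_-$ forces $\mathrm{Bin}(n,\Phi(x_-))\ge k_n$, where $\Phi(x_-)<\tau$, which also has vanishing probability. This step is where $\tau\in(0,1)$ and the strict monotonicity of $\Phi$ are used. It is the only real obstacle, and it is routine.

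With convergence in probability in hand, Slutsky's theorem gives
\[
\sqrt\rho\,W+\sqrt{1-\rho}\,\varepsilon_{(k_n)}\Rightarrow \sqrt\rho\,W+\sqrt{1-\rho}\,\Phi^{-1}(\tau).
\]
The limit has the distribution $\mathcal N\bigl(\sqrt{1-\rho}\,\Phi^{-1}(\tau),\rho\bigr)$. Because $\rho>0$, this law is continuous, so $c$ is a continuity point and the CDFs converge at $c$:
\[
A_{n,k_n}\to\Pr\bigl(\sqrt\rho\,W\le c-\sqrt{1-\rho}\,\Phi^{-1}(\tau)\bigr)=\Phi\!\left(\frac{c-\sqrt{1-\rho}\,\Phi^{-1}(\tau)}{\sqrt\rho}\right).
\]
This is \eqref{eq:vasicek}.

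The same argument can be phrased through Lemma~\ref{lem:rep}, which serves as a cross-check. Conditional on $W=w$, the binomial tail $B_{n,k_n}(q(w))$ tends to $1$ if $q(w)>\tau$ and to $0$ if $q(w)<\tau$, again by the law of large numbers. The boundary set $\{w: q(w)=\tau\}$ is a single point because $\rho>0$ makes $q$ strictly monotone. Dominated convergence then gives $\Pr(q(W)>\tau)$, and solving $q(w)>\tau$ for $w$ recovers the same expression. The condition $\rho>0$ is essential in both routes: at $\rho=0$ the limit would degenerate to a step at $p=\tau$.
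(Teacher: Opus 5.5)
Your proposal is correct and follows the paper's proof. Both use the order-statistic representation (Lemma~\ref{lem:order}), consistency of the sample quantile $\varepsilon_{(k_n)}\to\Phi^{-1}(\tau)$, Slutsky's theorem, and continuity of the limiting $\mathcal N\bigl(\sqrt{1-\rho}\,\Phi^{-1}(\tau),\rho\bigr)$ law at $c$. The only difference is that you prove convergence in probability directly from binomial tails, while the paper cites almost-sure consistency of empirical quantiles; convergence in probability is all that Slutsky's theorem requires.
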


\begin{proof}
By Lemma~\ref{lem:order},
$A_{n,k_n} = \Pr(\sqrt\rho\, W + \sqrt{1-\rho}\, \varepsilon_{(k_n)} \le c)$.
By consistency of empirical quantiles,
$\varepsilon_{(k_n)} \to \Phi^{-1}(\tau)$ almost surely.
Slutsky's theorem gives
$\sqrt\rho\, W + \sqrt{1-\rho}\,\varepsilon_{(k_n)} \Rightarrow
\sqrt\rho\, W + \sqrt{1-\rho}\,\Phi^{-1}(\tau)$, whose CDF at $c$ is
\eqref{eq:vasicek} (the limit variable is continuous, so convergence of
probabilities holds at every point).
\end{proof}

\begin{corollary}[Irreducible error floor of infinite majority]
\label{cor:floor}
For strict majority, $k_n=\lceil(n{+}1)/2\rceil$ and
$k_n/n\to\tfrac12$, with $\Phi^{-1}(\tfrac12)=0$. Assume the agent is
informative in the relevant state
($p_G > \tfrac12$, i.e.\ $c_G > 0$; symmetrically $p_B < \tfrac12$, $c_B <
0$). The infinite-committee error rates are
\begin{equation}
e^\infty_G \;=\; \Phi\!\left(-\frac{c_G}{\sqrt{\rho_G}}\right)
\;>\;0,
\qquad
e^\infty_B \;=\; \Phi\!\left(\frac{c_B}{\sqrt{\rho_B}}\right)
\;>\;0,
\end{equation}
against single-agent errors $e^1_G = \Phi(-c_G)$,
$e^1_B = \Phi(c_B)$. The \emph{irreducible fraction} of
single-agent error that survives unlimited aggregation is
\begin{equation}
\label{eq:floorratio}
\mathcal F(p,\rho) \;=\;
\frac{\Phi\!\left(-c/\sqrt{\rho}\right)}{\Phi(-c)} \;\in\; (0,1)
\qquad (p > \tfrac12,\ c = \Phi^{-1}(p) > 0,\ 0<\rho<1),
\end{equation}
strictly increasing in $\rho$, with $\mathcal F \to 0$ as $\rho \to 0$ and
$\mathcal F \to 1$ as $\rho \to 1$. For the bad state, apply the formula after
the label transformation $p \mapsto 1-p$. We interpret $\mathcal F$ as an
efficiency ratio when state-specific accuracy exceeds chance; at or below
chance, the raw majority error provides the relevant quantity.
\end{corollary}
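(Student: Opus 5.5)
The plan is to specialize Theorem~\ref{thm:vasicek} to strict majority, read off the two state errors, and then study the ratio $\mathcal F$ as a function of $\rho$.

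\textbf{Step 1: the limit error rates.} For $k_n=\lceil(n+1)/2\rceil$ we have $n/2< k_n\le n/2+1$, so $k_n/n\to\tfrac12$. Theorem~\ref{thm:vasicek} applies with $\tau=\tfrac12$, and $\Phi^{-1}(\tfrac12)=0$. Hence
\[
A_{n,k_n}(p_\theta,\rho_\theta)\to\Phi\bigl(c_\theta/\sqrt{\rho_\theta}\bigr).
\]
In state $G$ the error is $1-A$. Its limit is $1-\Phi(c_G/\sqrt{\rho_G})=\Phi(-c_G/\sqrt{\rho_G})$ by symmetry of $\Phi$. In state $B$ the error is $A$ itself, with limit $\Phi(c_B/\sqrt{\rho_B})$. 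Both limits are strictly positive because $\Phi>0$ on $\mathbb R$ and $\rho_\theta>0$. The single-agent errors are $1-p_G=\Phi(-c_G)$ and $p_B=\Phi(c_B)$.

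\textbf{Step 2: reduction of the bad state.} The bad state reduces to the good-state form under the relabeling $V_i\mapsto 1-V_i$. This sends $p\mapsto 1-p$ and $c\mapsto -c$ while preserving $\rho$, since $-Z_i$ has the same factor structure with $W\mapsto -W$. It therefore suffices to analyze $\mathcal F(p,\rho)=\Phi(-c/\sqrt\rho)/\Phi(-c)$ with $c>0$.

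\textbf{Step 3: monotonicity and range of $\mathcal F$.} For fixed $c>0$, the map $\rho\mapsto -c/\sqrt\rho$ is strictly increasing on $(0,1)$, and $\Phi$ is strictly increasing. So the numerator, and hence $\mathcal F$, is strictly increasing in $\rho$. For the limits:
\begin{itemize}
\item As $\rho\to0^+$, $-c/\sqrt\rho\to-\infty$, so the numerator tends to $0$ and $\mathcal F\to0$.
\item As $\rho\to1^-$, the numerator tends to $\Phi(-c)$ by continuity, so $\mathcal F\to1$.
\end{itemize}
For the range $(0,1)$: positivity is immediate. The upper bound holds because $\rho<1$ and $c>0$ give $-c/\sqrt\rho<-c$, so the numerator is strictly less than the denominator.

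\textbf{Main obstacle.} There is little here beyond Theorem~\ref{thm:vasicek}. The only points needing care are the following.
\begin{itemize}
\item For even $n$, $\lceil(n+1)/2\rceil$ is not exactly $n/2$, but the theorem only requires $k_n/n\to\tau$.
\item Strict informativeness $c>0$ is what makes the inequality $\mathcal F<1$ strict.
\end{itemize}
The closing interpretive sentence about chance-level agents is a remark, not a claim to prove. At $c\le0$ the ratio would be $\ge1$, which is why the paper restricts the efficiency interpretation to $p>\tfrac12$.
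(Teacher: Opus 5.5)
Your proposal is correct and follows essentially the same route as the paper: specialize Theorem~\ref{thm:vasicek} at $\tau=\tfrac12$ to get both limiting error rates, then use $c/\sqrt{\rho}>c$ for the range $(0,1)$, strict monotonicity of $c/\sqrt{\rho}$ in $\rho$ for monotonicity of $\mathcal F$, and the behavior of $c/\sqrt{\rho}$ as $\rho\downarrow0$ and $\rho\uparrow1$ for the two limits. Your extra checks fill in details the paper leaves implicit: that $k_n/n\to\tfrac12$ also holds for even $n$, and that the relabeling $V_i\mapsto 1-V_i$ justifies the bad-state formula.
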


\begin{proof}
Equation~\eqref{eq:vasicek} with $\tau=\tfrac12$ gives the two error
probabilities. For $c>0$ and $0<\rho<1$,
$c/\sqrt{\rho}>c$, so
$0<\Phi(-c/\sqrt{\rho})<\Phi(-c)$ and
$\mathcal F\in(0,1)$. Because $c/\sqrt{\rho}$ decreases strictly with
$\rho$, $\mathcal F$ increases strictly. Finally,
$c/\sqrt{\rho}\to\infty$ as $\rho\downarrow0$ and
$c/\sqrt{\rho}\to c$ as $\rho\uparrow1$, which gives the two limits.
\end{proof}

Under independence ($\rho=0$) the Condorcet jury theorem drives committee
error to zero exponentially fast; \eqref{eq:floorratio} is the quantitative
correction for correlated agents (convergence to the floor illustrated in
Fig.~S2). The limiting behavior has three forms.
(i) \emph{Majority floor.} For a reference agent of accuracy $p = 0.8$
evaluated at the 5th--95th percentile of the latent correlations we measure
for open-weight cross-family large-language-model (LLM) committees (identifiable pairs, $\hat\rho \approx
\rhoCrossMin$--$\rhoCrossMax$; Section~\ref{sm:results}),
\eqref{eq:floorratio} retains between \floorFracMinPct\% and
\floorFracMaxPct\% of the single-agent error in the infinite-majority
limit. (ii) \emph{Threshold-family floor.} Varying the threshold fraction
$\tau\in[0,1]$ trades the two state errors and leaves a positive floor on
the $\kappa$-weighted loss. For
$\tau\in(0,1)$,
$R^\infty(\tau) =
[1 - \Phi(\frac{c_G - \sqrt{1-\rho_G}\,\Phi^{-1}(\tau)}{\sqrt{\rho_G}})] +
\kappa\, \Phi(\frac{c_B - \sqrt{1-\rho_B}\,\Phi^{-1}(\tau)}{\sqrt{\rho_B}})$.
Its continuous endpoint values are $R^\infty(0)=\kappa$ (accept all) and
$R^\infty(1)=1$ (reject all), and
$\min_{\tau\in[0,1]}R^\infty(\tau)>0$ when
$0<p_B<p_G<1$, $0<\kappa<\infty$, and $\rho_G,\rho_B>0$.
For every $\tau\in(0,1)$, both Gaussian tail terms are strictly positive,
and the endpoint values are also positive. Continuity on the compact interval
$[0,1]$ therefore gives a strictly positive minimum. At the
per-domain open-weight parameters we measure ($\kappa = 1$), the marginals
are averaged over open-weight models identifiable in both states and the
correlations over cross-family pairs among those same models
(\texttt{e7\_bootstrap.py}). The reported threshold-family floor is the
minimum over the finite $\tau$ grid implemented in
\texttt{theory\_lib.loss\_floor}, and is therefore an upper approximation to
the continuous infimum. On that grid it retains
\lossFloorMinPct\%--\lossFloorMaxPct\% of the mean single-agent loss across
domains. The displayed value is the finite-grid upper approximation to the
continuous best-threshold floor. (iii) \emph{Bayes floor.}
Proposition~\ref{prop:bayes} extends the statement beyond threshold
rules to \emph{all} vote-aggregation rules. Quantitative floor estimates in
this section use majority aggregation.

\begin{proposition}[Fixed-$k$ rules degenerate]
\label{prop:degenerate}
For an integer $k\geq1$ held fixed as $n$ grows, $A_{n,k} \to 1$ for every
$p \in (0,1)$, $\rho < 1$; and for the unanimity rule,
$A_{n,n} \to 0$. Thus an unboundedly grown polyarchy eventually accepts
everything and an unboundedly grown hierarchy rejects everything, in both
states; the meaningful large-$n$ design variable is the threshold fraction
$\tau = k/n$.
\end{proposition}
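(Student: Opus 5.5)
The plan is to work conditionally on the common factor $W$ and to use the mixed-binomial representation of Lemma~\ref{lem:rep}, $A_{n,k}(p,\rho)=\mathbb E\,[B_{n,k}(q_\rho(W))]$. Throughout, $k\ge1$ is fixed and $\rho<1$. Because $\rho<1$, the conditional approval rate $q_\rho(w)=\Phi\bigl((c-\sqrt\rho\,w)/\sqrt{1-\rho}\bigr)$ lies strictly in $(0,1)$ for every realization $w$. This holds for all $\rho\in[0,1)$, including $\rho=0$, where $q\equiv p$.

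\textbf{Fixed $k$.} For each fixed $w$, conditional on $W=w$ the votes are iid Bernoulli with rate $q=q_\rho(w)>0$. Hence
\[
1-B_{n,k}(q)=\Pr\{\mathrm{Bin}(n,q)\le k-1\}\le \sum_{j=0}^{k-1}\binom{n}{j}(1-q)^{n-j}\le k\,n^{k-1}(1-q)^{n-k+1},
\]
and this bound tends to $0$ as $n\to\infty$ because polynomial growth loses to geometric decay. So $B_{n,k}(q_\rho(w))\to1$ pointwise in $w$. The integrand is bounded by $1$, so dominated convergence applied to Lemma~\ref{lem:rep} gives $A_{n,k}\to1$. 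As an alternative check, Lemma~\ref{lem:order} gives the same result: $\varepsilon_{(k)}\to-\infty$ almost surely for fixed $k$, so $\sqrt\rho W+\sqrt{1-\rho}\,\varepsilon_{(k)}\le c$ eventually. This route needs $\rho<1$ so that the coefficient $\sqrt{1-\rho}$ is positive.

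\textbf{Unanimity.} Proposition~\ref{prop:jensen}(a) gives $A_{n,n}=\mathbb E[q_\rho(W)^n]$. Pointwise, $q_\rho(w)^n\to0$ since $q_\rho(w)<1$, and the integrand is again bounded by $1$. Dominated convergence yields $A_{n,n}\to0$. The order-statistic version is that $\varepsilon_{(n)}=\max_i\varepsilon_i\to+\infty$ almost surely.

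\textbf{State-by-state conclusion.} Both limits hold for every $p\in(0,1)$ and every $\rho\in[0,1)$, so they hold in state $G$ and state $B$ separately. The growing polyarchy (and any fixed-$k$ rule) therefore accepts everything in the limit, and the growing hierarchy rejects everything. The only point needing care is the pointwise strict bounds $0<q_\rho(w)<1$; this is exactly where $\rho<1$ enters. At $\rho=1$ the rate $q$ becomes an indicator and both statements fail, consistent with Theorem~\ref{thm:collapse}.
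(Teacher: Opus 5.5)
Your proof is correct, and your main argument takes a different route from the paper's. The paper's proof is the one you list as an ``alternative check''. It applies Lemma~\ref{lem:order} directly: $\varepsilon_{(k)}\to-\infty$ almost surely for fixed $k$ and $\varepsilon_{(n)}\to+\infty$, so the decision statistic $\sqrt\rho\,W+\sqrt{1-\rho}\,\varepsilon_{(k)}$ eventually falls below $c$ (respectively, for unanimity, exceeds it), and bounded convergence finishes.

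Your primary route is different. You condition on $W$ through Lemma~\ref{lem:rep}, bound the conditional binomial lower tail by $k\,n^{k-1}(1-q)^{n-k+1}$, and integrate with dominated convergence. There, $\rho<1$ enters through the pointwise bounds $0<q_\rho(w)<1$ rather than through the positive coefficient $\sqrt{1-\rho}$.

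What each buys:
\begin{itemize}
\item The order-statistic argument is a one-liner that reuses the structural lemma the paper built for exactly this decomposition. Strictly, it reads the $\varepsilon_i$ as the first $n$ terms of one infinite iid sequence to make ``almost surely'' meaningful; this is harmless because $A_{n,k}$ depends only on the distribution.
\item Your route is longer, but it works with each $n$'s distribution directly. It is also quantitative: $1-A_{n,k}\le k\,n^{k-1}\,\mathbb E[(1-q_\rho(W))^{n-k+1}]$. This makes visible that the degeneration is geometric only conditionally on $W$. For $\rho>0$, $q_\rho(W)$ puts mass arbitrarily close to $0$ and $1$, so the unconditional convergence is slower than geometric. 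This is a useful complement to the paper's conclusion that $\tau=k/n$, not $k$, is the meaningful large-$n$ design variable.
\end{itemize}
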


\begin{proof}
$\varepsilon_{(k)} \to -\infty$ a.s.\ for fixed $k$ and
$\varepsilon_{(n)} \to +\infty$ a.s.; apply Lemma~\ref{lem:order}.
\end{proof}

\begin{proposition}[Bayes floor for unrestricted rules]
\label{prop:bayes}
Assume $0<p_B<p_G<1$, $0<\kappa<\infty$, and
$0<\rho_G,\rho_B<1$. Define the Bayes risk over \emph{all}
vote-aggregation rules by
\[
R^{*}_n = \min_{\psi:\{0,1\}^n \to \{0,1\}}
\bigl[\Pr(\psi = 0 \mid G) + \kappa \Pr(\psi = 1 \mid B)\bigr].
\]
Then $R^*_n$ is nonincreasing in $n$ and
\begin{equation}
\lim_{n\to\infty} R^*_n \;=\; R^*_\infty \;=\;
\int_0^1 \min\bigl( g_G(x),\ \kappa\, g_B(x) \bigr)\, dx \;>\; 0,
\end{equation}
where $g_\theta$ is the density of $q_\theta(W)$ on $(0,1)$ (a Vasicek
density with parameters $(p_\theta, \rho_\theta)$). For
$\rho_G,\rho_B\in(0,1)$, both densities have full support, which makes the
integral strictly positive. At $\rho_\theta=0$, $q_\theta(W)$ is a point
mass in that state and the corresponding Bayes experiment uses a
mixed/discrete dominating measure.
\end{proposition}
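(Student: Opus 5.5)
Monotonicity comes first and is immediate. Any rule $\psi$ on $\{0,1\}^n$ can be viewed as a rule on $\{0,1\}^{n+1}$ that ignores the last vote. The marginal law of the first $n$ votes in the $(n{+}1)$-agent model is exactly the $n$-agent law, since the factor model \eqref{eq:factor} is consistent across $n$. Hence $R^*_{n+1}\le R^*_n$, and the limit exists because $R^*_n\ge 0$.

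To identify the limit, condition on the common factor. Given $\theta$ and $W$, the votes are i.i.d.\ Bernoulli$(q_\theta(W))$, so the vote vector is a mixture of i.i.d.\ sequences with mixing variable $X=q_\theta(W)$, whose law has density $g_\theta$ on $(0,1)$.

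\textbf{Lower bound.} Define an oracle that observes $X$ itself, together with the votes. Given $X$, the votes are conditionally independent of $\theta$, since their law depends on $\theta$ only through $X$. The vote vector is therefore a garbling of $X$, and its Bayes risk is at least the Bayes risk of observing $X$ alone. Because $\theta\mapsto X$ has densities $g_G,g_B$ with respect to Lebesgue measure, the Neyman--Pearson/Bayes formula gives
\[
\min_\phi\Bigl[\textstyle\int(1-\phi)g_G+\kappa\int\phi g_B\Bigr]=\int_0^1\min(g_G,\kappa g_B)\,dx .
\]
Hence $R^*_n\ge R^*_\infty$ for every $n$.

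\textbf{Upper bound.} Let $\bar V_n=S/n$. By the strong law of large numbers applied conditionally on $W$ (equivalently de Finetti), $\bar V_n\to X$ almost surely in each state. Take the plug-in rule $\psi_n=\mathbf 1\{g_G(\bar V_n)\ge\kappa g_B(\bar V_n)\}$, which accepts when $\bar V_n\in D=\{x:g_G(x)\ge\kappa g_B(x)\}$. Its risk is
\[
\Pr(\bar V_n\notin D\mid G)+\kappa\Pr(\bar V_n\in D\mid B).
\]
This risk converges to $\Pr(X\notin D\mid G)+\kappa\Pr(X\in D\mid B)=R^*_\infty$ provided the boundary $\partial D$ is null under both laws of $X$ (portmanteau theorem). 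Since $R^*_n$ is at most the risk of $\psi_n$, the limit is at most $R^*_\infty$.

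\textbf{Main obstacle.} The delicate step is the boundary condition. I would verify it from the explicit Vasicek densities. Write $x=\Phi(y)$; then $\log g_\theta$ is a quadratic in $y$:
\[
\log g_\theta(\Phi(y))=\tfrac12\log\tfrac{1-\rho_\theta}{\rho_\theta}-\tfrac{1}{2\rho_\theta}\bigl(\sqrt{1-\rho_\theta}\,y-c_\theta\bigr)^2+\tfrac12y^2 .
\]
So $\log(g_G/g_B)$ is a quadratic polynomial in $y$. It is nonconstant, because otherwise the two densities would be equal up to scale, hence equal, contradicting $p_B<p_G$. The equation $g_G=\kappa g_B$ therefore has at most two roots, $D$ is a finite union of intervals, and $\partial D$ is finite, hence null.

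If I wanted to avoid this computation, I would instead approximate $D$ by finite unions of intervals with continuity boundaries and obtain $\limsup R^*_n\le R^*_\infty+\delta$ for every $\delta>0$.

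\textbf{Positivity.} For $\rho_\theta\in(0,1)$, both $g_G$ and $g_B$ are strictly positive and continuous on $(0,1)$, because $q_\theta$ is a strictly monotone smooth bijection of $\mathbb R$ onto $(0,1)$. Hence $\min(g_G,\kappa g_B)>0$ everywhere on $(0,1)$, and the integral $R^*_\infty$ is strictly positive.
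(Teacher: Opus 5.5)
Your proof is correct, but it identifies the limit by a different route than the paper. Your monotonicity step is the paper's $\sigma$-field-refinement argument in concrete form. For the limit, the paper normalizes the priors to $1/(1+\kappa)$ and $\kappa/(1+\kappa)$ and uses two facts: $X=q_\theta(W)=\lim_n S_n/n$ is measurable with respect to $\sigma(V_1,V_2,\ldots)$, and $\theta$ is conditionally independent of all votes given $X$. It then invokes martingale convergence of the posterior $\Pr(G\mid V_{1:n})$ to $\Pr(G\mid X)$ and dominated convergence of the bounded Bayes loss. It never exhibits a decision rule, and no boundary condition arises.

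You instead sandwich the limit. Your garbling argument gives $R_n^*\ge R_\infty^*$ for every $n$ directly, whereas the paper obtains this only through monotone convergence. Your plug-in vote-share rule $\mathbf 1\{\bar V_n\in D\}$ gives an explicit asymptotically optimal sequence, at the cost of the portmanteau boundary check. That check is sound. The log-likelihood ratio is a nonconstant quadratic in $y=\Phi^{-1}(x)$, so $\partial D$ consists of at most two interior points plus possibly $0$ and $1$. Neither endpoint carries mass under the Vasicek laws, so any convention for unanimous vote vectors, where $g_\theta$ is undefined, is harmless.

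Both routes rest on the same two facts. Your lower bound uses the conditional independence given $X$, and your upper bound uses $\bar V_n\to X$ almost surely. The paper's route is shorter and needs no model-specific computation. Yours replaces the martingale theorem with a Gaussian-specific root count, or with the general interval-approximation fallback you mention. As a by-product, yours shows that a vote-share rule with at most two cutoffs is asymptotically Bayes, which makes Remark~\ref{rem:toogood} operational for large committees.
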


\begin{proof}
Write $S_n=\sum_{i=1}^n V_i$.
Use the normalized prior weights
$\Pr(G)=1/(1+\kappa)$ and $\Pr(B)=\kappa/(1+\kappa)$ with unit
misclassification loss. Its ordinary Bayes risk is $R_n^*/(1+\kappa)$.
By exchangeability and sufficiency, the optimal rule depends on votes only
through $S_n$; adding an agent refines the information $\sigma$-field, so
$R^*_n$ is nonincreasing and bounded below, hence convergent. Under state
$\theta$, $S_n/n \to q_\theta(W)$ a.s.\ (conditional LLN), and $q_\theta(W)$ has
the (absolutely continuous, full-support) Vasicek density on $(0,1)$
\[
g_\theta(x) = \sqrt{\tfrac{1-\rho_\theta}{\rho_\theta}}\,
\exp\!\left( \tfrac12 \Phi^{-1}(x)^2 - \tfrac{1}{2\rho_\theta}\bigl( c_\theta
- \sqrt{1-\rho_\theta}\, \Phi^{-1}(x) \bigr)^2 \right).
\]
The limit experiment observes $X = q_\theta(W)$; its Bayes risk is obtained
by minimizing pointwise over the accept/reject decision at each $x$,
$R^*_\infty = \int_0^1 \min(g_G(x), \kappa\, g_B(x))\,dx$. To identify this
limit experiment with the information in the entire vote sequence, note that,
conditional on $(\theta,X=x)$, the votes are iid Bernoulli$(x)$: indeed
$X=q_\theta(W)$ and conditioning on its value fixes their conditional success
probability. This conditional product law is determined by $x$ and is
invariant across $\theta$, so
$\theta\perp\!\!\!\perp \sigma(V_1,V_2,\ldots)\mid X$. Conversely,
$X=\lim_{n\to\infty}S_n/n$ almost surely, so $X$ is measurable with respect
to the infinite-vote $\sigma$-field. Hence the posterior given $V_{1:n}$ is a
bounded martingale that converges almost surely to the posterior given $X$.
Applying dominated convergence to the bounded ordinary Bayes loss and
multiplying by $1+\kappa$ gives
$R^*_n \downarrow R^*_\infty$. Overlap of full-support densities forces
$R^*_\infty>0$.
\end{proof}

\begin{remark}[Unanimity can signal a correlated-error state]
\label{rem:toogood}
As a function of $u = \Phi^{-1}(x)$, the log-likelihood ratio
$\log g_G(x)/g_B(x)$ is \emph{quadratic}, with leading coefficient
$\tfrac12\left(\tfrac{1-\rho_B}{\rho_B} - \tfrac{1-\rho_G}{\rho_G}\right)$.
It is monotone in $x$ iff $\rho_G = \rho_B$. When $\rho_B > \rho_G$ the
log-likelihood ratio is strictly concave in $u$. If its maximum reaches
$\log\kappa$, the Bayes acceptance set is a bounded (possibly degenerate)
interval of vote shares; a maximum below $\log\kappa$ yields rejection at
every vote share.
Thus, whenever the acceptance set is nonempty, both extremes are rejected:
an overwhelming approval consensus can signal the correlated-error state and
provide weaker evidence of quality, a phenomenon also studied in forensic
identification.
\end{remark}

\section{Optimal threshold within the $k$-of-$n$ family}
\label{sm:optimal}

\paragraph{Equal-correlation monotone likelihood ratio.}
If $0<p_B<p_G<1$ and
$\rho_G=\rho_B=\rho\in[0,1)$, the vote-count likelihood ratio
$\ell(s)=\Pr(S=s\mid G)/\Pr(S=s\mid B)$ is strictly increasing in $s$.
For $\rho=0$ this follows directly from the binomial likelihood ratio.
For $0<\rho<1$, let $g_\theta$ be the density of
$X=q_\theta(W)$ and set $u=\Phi^{-1}(x)$. The density in
Proposition~\ref{prop:bayes} gives
\[
\log\frac{g_G(x)}{g_B(x)}
=\frac{c_B^2-c_G^2}{2\rho}
+ \frac{\sqrt{1-\rho}\,(c_G-c_B)}{\rho}\,u ,
\]
which is strictly increasing in $x$ because $c_G>c_B$. Write this density
ratio as $h(x)$ and
$K_s(x)=\binom ns x^s(1-x)^{n-s}$. If
$\mu_s(dx)\propto K_s(x)g_B(x)\,dx$, then
$\ell(s)=\int h\,d\mu_s$. Moreover,
$d\mu_{s+1}/d\mu_s$ is proportional to $x/(1-x)$, so
$\mu_{s+1}$ strictly likelihood-ratio dominates $\mu_s$. The expectation of
the strictly increasing function $h$ therefore increases strictly with $s$.

\begin{theorem}[Optimal threshold]
\label{thm:kstar}
Assume $0<p_B<p_G<1$, $0<\kappa<\infty$, and
$0\leq\rho_G,\rho_B<1$.
Define $\Delta(k) = \Pr(S = k \mid G) - \kappa \Pr(S = k \mid B)$ for
$k = 1,\dots,n-1$, so that $R_{n,k+1} - R_{n,k} = \Delta(k)$ exactly. The
optimal threshold set is $\arg\min_k R_{n,k}$; if the vote-count likelihood
ratio $\ell(s) = \Pr(S=s \mid G)/\Pr(S=s\mid B)$ is nondecreasing (MLR), then
\begin{equation}
\begin{aligned}
k_-^*
&=\min\Bigl(
\{k\in\{1,\ldots,n-1\}:\ell(k)\ge\kappa\}\cup\{n\}
\Bigr),\\
k_+^*
&=\min\Bigl(
\{k\in\{1,\ldots,n-1\}:\ell(k)>\kappa\}\cup\{n\}
\Bigr),
\end{aligned}
\end{equation}
and
\[
\arg\min_{1\le k\le n}R_{n,k}
=\{k_-^*,k_-^*+1,\ldots,k_+^*\}.
\]
Thus $k_-^*$ is the smallest optimizer and $k_+^*$ the largest; when
$\ell(k)\ne\kappa$ for every $k=1,\ldots,n-1$, the optimizer is unique.
\end{theorem}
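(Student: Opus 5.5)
The argument reduces everything to the sign pattern of the consecutive differences $\Delta(k)$. First I verify the telescoping identity. Since $A_{n,k}(p_\theta,\rho_\theta)=\Pr(S\ge k\mid\theta)$, we have $A_{n,k}-A_{n,k+1}=\Pr(S=k\mid\theta)$ in each state. Hence
\[
R_{n,k+1}-R_{n,k}=\Pr(S=k\mid G)-\kappa\Pr(S=k\mid B)=\Delta(k),
\]
and $R_{n,k}=R_{n,1}+\sum_{j<k}\Delta(j)$. The claim $\arg\min_k R_{n,k}=\arg\min_k R_{n,k}$ is then just the definition of the optimal set. The content is the characterization under MLR.

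Next I need $\Pr(S=s\mid B)>0$ for every $s$, so that $\ell$ is well defined and $\operatorname{sign}\Delta(k)=\operatorname{sign}(\ell(k)-\kappa)$. For $\rho_B=0$ this is a binomial with $0<p_B<1$. For $\rho_B\in(0,1)$, Lemma~\ref{lem:rep} writes the probability as the integral of $K_s(q_B(w))\varphi(w)$. The integrand is positive because $q_B(w)\in(0,1)$ for all $w$. The same holds in state $G$.

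The main step uses MLR. If $\ell$ is nondecreasing, then $\Delta(k)$ changes sign at most once: it is $<0$ while $\ell(k)<\kappa$, $=0$ while $\ell(k)=\kappa$, and $>0$ once $\ell(k)>\kappa$. The indices therefore split into three consecutive blocks:
\begin{itemize}
\item $k<k_-^*$, where $\Delta<0$;
\item $k_-^*\le k<k_+^*$, where $\Delta=0$;
\item $k\ge k_+^*$, where $\Delta>0$.
\end{itemize}
The convention $\cup\{n\}$ handles the case where a block runs to the end. Consequently $R_{n,k}$ is strictly decreasing for $k\le k_-^*$, constant on $\{k_-^*,\dots,k_+^*\}$, and strictly increasing from $k_+^*$ to $n$. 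This gives exactly $\arg\min=\{k_-^*,\dots,k_+^*\}$.

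I also need $\Delta(k)=0$ for $k_-^*\le k<k_+^*$. For such $k$, $\ell(k)\ge\ell(k_-^*)\ge\kappa$ by monotonicity, and $\ell(k)\le\kappa$ because $k<k_+^*$. If $\ell(k)\ne\kappa$ for all $k\le n-1$, then $k_-^*=k_+^*$ and the optimizer is unique.

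The main obstacle is bookkeeping at the boundaries, not analysis. I have to check the cases where no $k\le n-1$ satisfies $\ell(k)\ge\kappa$ (the optimum is $n$, unanimity), and where $\ell(1)>\kappa$ (the optimum is $1$, polyarchy). I also have to confirm that $k_-^*\le k_+^*$ always holds, since the set with the strict inequality is contained in the one with the weak inequality.

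For the applicability remark, the equal-correlation paragraph preceding the theorem supplies strict MLR whenever $\rho_G=\rho_B$. When $\rho_G\ne\rho_B$, the theorem only asserts the general $\arg\min$ statement.
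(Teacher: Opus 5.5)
Your proposal is correct and follows the paper's own proof. Both arguments telescope $R_{n,k+1}-R_{n,k}=\Delta(k)$, use $\operatorname{sign}\Delta(k)=\operatorname{sign}(\ell(k)-\kappa)$, and under MLR read off the strictly decreasing block, the flat block and the strictly increasing block. Your explicit check that $\Pr(S=s\mid B)>0$, which makes $\ell$ well defined and the sign equivalence valid, and your handling of the $\cup\{n\}$ boundary cases fill in details the paper leaves implicit.
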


\begin{proof}
Since $A_{n,k} - A_{n,k+1} = \Pr(S = k)$,
\begin{align*}
R_{n,k+1} - R_{n,k}
&= \bigl[A_{n,k}(p_G,\rho_G) - A_{n,k+1}(p_G,\rho_G)\bigr]
- \kappa\bigl[A_{n,k}(p_B,\rho_B) - A_{n,k+1}(p_B,\rho_B)\bigr] \\
&= \Pr(S_G = k) - \kappa \Pr(S_B = k) = \Delta(k).
\end{align*}
Raising the threshold from $k$ to $k{+}1$ flips the decision on the
borderline event $\{S = k\}$ from accept to reject, which pays iff those
cases are likelier to be bad: $\Delta(k) \le 0 \iff \ell(k) \le \kappa$.
Under MLR, $R_{n,k}$ decreases strictly for $\ell(k)<\kappa$, is flat
across differences for which $\ell(k)=\kappa$, and increases strictly once
$\ell(k)>\kappa$. This gives the stated interval of optimizers.
\end{proof}

Unequal state correlations can produce non-MLR vote counts (check C8b),
consistent with the analytic vote-share limit in
Remark~\ref{rem:toogood}. All empirical analyses therefore compute $k^*$
by direct enumeration of \eqref{eq:loss}, a procedure valid for general
vote-count distributions.

\begin{proposition}[Monotone comparative statics]
\label{prop:topkis}
Let $S_G$ and $S_B$ have any fixed state-conditional vote-count
distributions for a committee of size $n$, and write
$R_{n,k}(\kappa)=1-\Pr(S_G\ge k)+\kappa\Pr(S_B\ge k)$.
(a) The argmin correspondence
\[
\kappa \longmapsto \arg\min_k R_{n,k}(\kappa)
\]
is nondecreasing in the strong set order; in particular both the smallest and
the largest optimal threshold are nondecreasing in $\kappa$:
costlier false acceptances demand stricter organizations (toward
hierarchy); cheaper ones demand laxer organizations (toward polyarchy).
The proposition applies to arbitrary fixed state-conditional vote-count
distributions. Our implementation reports the smallest optimizer, which is
nondecreasing under the result above.
(b) If $0<p_B<p_G<1$ and $0<\kappa<\infty$, then at
$\rho_G = \rho_B = 0$ the rule reduces to the classical
Sah--Stiglitz/Nitzan--Paroush threshold. With the smallest-optimizer tie
convention used here,
\[
k_-^*=\min\left(
\left\{k\in\{1,\ldots,n-1\}:
k\log\frac{p_G}{p_B}
+(n-k)\log\frac{1-p_G}{1-p_B}\ge\log\kappa
\right\}\cup\{n\}
\right).
\]
\end{proposition}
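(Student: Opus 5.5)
For part (a), view $R_{n,k}(\kappa)=1-\Pr(S_G\ge k)+\kappa\Pr(S_B\ge k)$ as a function $f(k,\kappa)$ on the chain $\{1,\dots,n\}\times(0,\infty)$. I would apply Topkis's monotonicity theorem to the minimization of $f$, which amounts to maximizing $-f$. The key step is to check that $-f$ has increasing differences in $(k,\kappa)$, or equivalently that $f$ has decreasing differences. For $k<k'$ and $\kappa<\kappa'$,
\[
[f(k',\kappa')-f(k,\kappa')]-[f(k',\kappa)-f(k,\kappa)]
=(\kappa'-\kappa)\bigl[\Pr(S_B\ge k')-\Pr(S_B\ge k)\bigr]\le 0,
\]
since the upper-tail probability $\Pr(S_B\ge k)$ is nonincreasing in $k$. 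No assumption on the form of the distributions is used, so the argument holds for arbitrary fixed vote-count laws. Because $\{1,\dots,n\}$ is a finite chain, the argmin is nonempty and every set is a lattice. Topkis's theorem then gives that $\kappa\mapsto\arg\min_k f(k,\kappa)$ is nondecreasing in the strong set order.

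I would also spell out the elementary two-point version directly, so the proof does not rely on the citation. Take $\kappa<\kappa'$, $k\in\arg\min f(\cdot,\kappa)$ and $k'\in\arg\min f(\cdot,\kappa')$ with $k>k'$. Optimality gives $f(k,\kappa)\le f(k',\kappa)$ and $f(k',\kappa')\le f(k,\kappa')$. Adding these two inequalities and using the decreasing-differences identity, applied with the roles of $k$ and $k'$ swapped, forces equality throughout. Hence $k'\in\arg\min f(\cdot,\kappa)$ and $k\in\arg\min f(\cdot,\kappa')$, which is exactly the strong set order. Applying this to the pair of smallest optimizers and to the pair of largest optimizers shows that both extremal selections are nondecreasing. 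This covers the smallest-optimizer convention used in the implementation.

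For part (b), at $\rho_G=\rho_B=0$ Lemma~\ref{lem:rep} gives $S_\theta\sim\mathrm{Bin}(n,p_\theta)$. The likelihood ratio is
\[
\ell(s)=(p_G/p_B)^s\bigl((1-p_G)/(1-p_B)\bigr)^{n-s},
\]
and it is strictly increasing because $p_G>p_B$. This is the MLR case noted just before Theorem~\ref{thm:kstar}. Theorem~\ref{thm:kstar} then gives $k_-^*=\min(\{k\le n-1:\ell(k)\ge\kappa\}\cup\{n\})$. Taking logarithms, which is monotone and applies since $\ell(k)>0$, turns the condition $\ell(k)\ge\kappa$ into the displayed linear inequality. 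This is the classical Nitzan--Paroush weighted-majority threshold.

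The only real obstacle is bookkeeping. I have to get the direction of the differences right, since the problem is a minimization rather than a maximization, and I have to keep the tie convention consistent with Theorem~\ref{thm:kstar}. Everything else is immediate.
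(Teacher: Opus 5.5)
Your proposal is correct and matches the paper's proof essentially step for step. Part (a) uses decreasing differences of $R_{n,k}(\kappa)$ in $(k,\kappa)$ plus Topkis, backed by the same elementary two-point argument: the paper writes $D(\kappa)=R_{n,k}(\kappa)-R_{n,k'}(\kappa)$ and uses $D(\kappa')\le D(\kappa)$, which is your summed-optimality-inequality step. Part (b) substitutes the strictly increasing binomial likelihood ratio into Theorem~\ref{thm:kstar} and takes logarithms, as you do.
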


\begin{proof}
(a) The first difference $\Delta(k;\kappa) = R_{n,k+1}-R_{n,k} =
\Pr(S_G=k) - \kappa\,\Pr(S_B=k)$ is \emph{nonincreasing} in $\kappa$ for
every $k$: the loss function has decreasing differences in $(k, \kappa)$.
By Topkis's theorem on monotone comparative statics, the argmin
correspondence $\kappa \mapsto \arg\min_k R_{n,k}(\kappa)$ is nondecreasing
in the strong set order, so the smallest and largest selections
$k^*(\kappa)$ are nondecreasing.
(Directly, let $D(\kappa)=R_{n,k}(\kappa)-R_{n,k'}(\kappa)
=\sum_{s=k'}^{k-1}\Delta(s;\kappa)$ for $k'<k$. If $k$ is optimal at
$\kappa$ and $k'$ is optimal at $\kappa'>\kappa$, then
$D(\kappa)\le0\le D(\kappa')$. Moreover,
\[
D(\kappa')=D(\kappa)
-(\kappa'-\kappa)\sum_{s=k'}^{k-1}\Pr(S_B=s)
\le D(\kappa).
\]
Hence $0\le D(\kappa')\le D(\kappa)\le0$; both differences vanish and so
does $\Pr(k'\le S_B<k)$. Thus both thresholds have identical loss at both
values of $\kappa$, and $k$ is also optimal at $\kappa'$.)
(b) Substitute the binomial pmf into $\ell$:
$\ell(s) = (p_G/p_B)^s \bigl((1-p_G)/(1-p_B)\bigr)^{n-s}$, increasing in
$s$; apply Theorem~\ref{thm:kstar}.
\end{proof}

\begin{proposition}[Inputs to fixed-committee threshold design]
\label{prop:knobs}
For a homogeneous exchangeable committee of fixed size $n$ and a specified
loss ratio $\kappa$, the loss of every $k$-of-$n$ rule is computable from the
four state-conditional statistical parameters
$(p_G,p_B,\rho_G,\rho_B)$ via \eqref{eq:Ank}. Together with $\kappa$, these
parameters determine the optimal threshold set
$\arg\min_{1\le k\le n}R_{n,k}$.

If $n$ is itself a design choice, an admissible size set or an explicit
member-cost or budget model is additionally required. In the heterogeneous
extension of Section~\ref{sm:hetero}, the corresponding inputs are the
marginal vectors $(p_{G,i})_{i=1}^n$, $(p_{B,i})_{i=1}^n$ and the two
state-specific correlation matrices $\Sigma_G,\Sigma_B$. Section~\ref{sm:est}
gives consistency of plug-in threshold selection for fixed $n$.
\end{proposition}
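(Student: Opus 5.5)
The statement makes two claims. For a fixed $n$, the loss of every threshold is a computable functional of $(p_G,p_B,\rho_G,\rho_B)$, and together with $\kappa$ these quantities fix the optimal set. In the heterogeneous case, the analogous inputs are the marginal vectors and the two correlation matrices. I will prove the claims in that order. Little new analysis is needed, because the proof is bookkeeping on top of Lemma~\ref{lem:rep}.

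\textbf{Homogeneous case.} Fix a state $\theta$. By \eqref{eq:factor}, the joint law of $(V_1,\dots,V_n)$ given $\theta$ is determined by $c_\theta=\Phi^{-1}(p_\theta)$ and $\rho_\theta$. Lemma~\ref{lem:rep} then writes $A_{n,k}(p_\theta,\rho_\theta)$ as a one-dimensional integral of $B_{n,k}(q_\theta(w))$ against $\varphi$, where $q_\theta$ is given explicitly by \eqref{eq:qw}. Computability follows because the integrand is continuous and bounded by $1$, so quadrature converges. In the case $\rho_\theta=0$ the integral reduces to a closed-form binomial tail.

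Substituting into \eqref{eq:loss} shows that, for each $k$, $R_{n,k}$ is the explicit expression $1-A_{n,k}(p_G,\rho_G)+\kappa A_{n,k}(p_B,\rho_B)$. The set $\{1,\dots,n\}$ is finite, so $\arg\min_k R_{n,k}$ is a nonempty set determined by these $n$ numbers. In particular, finding it does not require the MLR hypothesis of Theorem~\ref{thm:kstar}. It suffices to enumerate, or equivalently to use the differences $\Delta(k)=\Pr(S=k\mid G)-\kappa\Pr(S=k\mid B)$. The probabilities $\Pr(S=k\mid\theta)=A_{n,k}-A_{n,k+1}$ are again functionals of the same four parameters.

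\textbf{Remarks on $n$ as a design choice.} If $n$ is itself chosen, the argument above gives the minimal loss $\min_k R_{n,k}$ for each $n$. Proposition~\ref{prop:bayes} shows this loss is nonincreasing in the unrestricted sense. Without a cost for adding members, the size choice therefore degenerates toward "as large as allowed". This is why an admissible size set or a budget model must be specified. I would state this as a remark rather than a formal claim.

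\textbf{Heterogeneous case.} Given $\theta$, the votes are $V_i=\mathbf 1\{Z_i\le\Phi^{-1}(p_{\theta,i})\}$ with $Z\sim\mathcal N(0,\Sigma_\theta)$. The probability of each vote pattern $v\in\{0,1\}^n$ is a Gaussian orthant probability determined by $(p_{\theta,i})_i$ and $\Sigma_\theta$. Summing these pattern probabilities over $\{v:\sum v_i=s\}$ gives the vote-count law of $S$. The same finite enumeration then applies, and the optimal set is determined. The pointer to Section~\ref{sm:est} is just a forward reference and needs no argument here.

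\textbf{Main obstacle.} The only delicate point is precision about what "determine" means: identifiability of the loss map from the parameters, as opposed to the reverse direction. The reverse direction is not claimed. I would state explicitly that the map from parameters to $(R_{n,k})_k$ is well defined, and note that ties in the argmin are handled by reporting the whole set.
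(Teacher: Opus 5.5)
Your proposal is correct and follows the paper's argument. Lemma~\ref{lem:rep} and \eqref{eq:loss} make each $R_{n,k}$ a deterministic function of $(p_G,p_B,\rho_G,\rho_B,\kappa)$, minimization over the finite set $\{1,\ldots,n\}$ gives the optimal set, and variable size or heterogeneous members are treated as enlargements of the design space; your orthant-probability description of the heterogeneous vote-count law spells out what the paper leaves implicit. One small caution on your side remark: Proposition~\ref{prop:bayes} gives monotonicity in $n$ of the unrestricted Bayes risk $R_n^*$, not of $\min_k R_{n,k}$, so it does not by itself support a monotonicity claim for the threshold family, though nothing in the proposition depends on that remark.
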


\begin{proof}
Lemma~\ref{lem:rep} and \eqref{eq:loss} make every $R_{n,k}$ a
deterministic function of the listed inputs. Minimization over the finite set
$\{1,\ldots,n\}$ gives the optimal threshold set. Allowing committee size or
heterogeneous members changes the design space and therefore requires the
additional inputs stated in the proposition.
\end{proof}

\section{Limits of fixed-size committee gains}
\label{sm:paradox}

The fixed-committee correlation limit extends from threshold rules to
arbitrary vote-only aggregation.

\begin{theorem}[Correlation limits fixed-size organizational gain]
\label{thm:gain}
Fix $0<p_B<p_G<1$ and $0<\kappa<\infty$.
Define the organizational gain at parameters
$(p_G,p_B,\rho_G,\rho_B;\kappa)$ as
$\Gamma_n = R_{1,1} - \min_k R_{n,k} \ \ge 0$.
\begin{enumerate}
\item[(a)] At $\rho_G=\rho_B=\rho$, $\Gamma_n\to0$ as $\rho\to1$ for every
\emph{fixed} $n$ (Theorem~\ref{thm:collapse}).
\item[(b)] More generally, for every fixed finite size bound $N$,
\[
\lim_{(\rho_G,\rho_B)\to(1,1)}
\max_{1\le n\le N}\Gamma_n=0.
\]
Let $R_n^*$ be the Bayes risk over all aggregation rules, as in
Proposition~\ref{prop:bayes}, and let
\[
R_1^*=\min(p_G,\kappa p_B)+\min(1-p_G,\kappa(1-p_B))
\]
be the Bayes-optimal one-vote risk. The same bounded-size collapse holds for
unrestricted rules:
\[
\lim_{(\rho_G,\rho_B)\to(1,1)}
\max_{1\le n\le N}|R_n^*-R_1^*|=0.
\]
\end{enumerate}
\end{theorem}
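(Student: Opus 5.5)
Part (a) is a special case of part (b) with $\rho_G=\rho_B=\rho$ and $N=n$; it also follows directly from Theorem~\ref{thm:collapse}. Indeed, $\min_k R_{n,k}$ is a minimum over finitely many $k$, each $R_{n,k}\to R_{1,1}$, so the minimum converges to $R_{1,1}$ and $\Gamma_n\to0$. So the work is in (b). I will prove the unrestricted statement first and then deduce the threshold statement from it.

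\textbf{Step 1: vote-vector law in the limit.} Fix a state $\theta$ and $n\le N$. For any $v\in\{0,1\}^n$ with $s$ ones, conditional independence given $W$ gives
\[
\Pr(V=v\mid\theta)=\mathbb E\bigl[q_\theta(W)^s(1-q_\theta(W))^{n-s}\bigr].
\]
As $\rho_\theta\to1$, $q_\theta(W)\to\mathbf 1\{W\le c_\theta\}$ almost surely, exactly as in the proof of Theorem~\ref{thm:collapse}. By bounded convergence, this probability tends to $p_\theta$ if $v=\mathbf 1$, to $1-p_\theta$ if $v=\mathbf 0$, and to $0$ otherwise. The convergence is uniform in $v$ and in $n\le N$ because only finitely many pairs $(n,v)$ occur. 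The convergence in each state depends only on its own $\rho_\theta$, so the joint limit $(\rho_G,\rho_B)\to(1,1)$ poses no extra difficulty. Thus the vote law converges in total variation to the law $\mathcal L_\theta$ that puts mass $p_\theta$ on $\mathbf 1$ and mass $1-p_\theta$ on $\mathbf 0$.

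\textbf{Step 2: continuity of the Bayes risk.} For each $n$, write
\[
R_n^*=\sum_{v\in\{0,1\}^n}\min\bigl(\Pr(v\mid G),\ \kappa\Pr(v\mid B)\bigr).
\]
The map $(a,b)\mapsto\min(a,\kappa b)$ is Lipschitz. Hence $R_n^*$ converges to the same sum evaluated under the limiting laws. In that sum only $v=\mathbf 1$ and $v=\mathbf 0$ contribute, and it equals $\min(p_G,\kappa p_B)+\min(1-p_G,\kappa(1-p_B))=R_1^*$. The formula for $R_1^*$ itself is the one-vote Bayes risk, which involves no correlation. Taking the maximum over the finitely many $n\le N$ gives the unrestricted claim.

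\textbf{Step 3: threshold rules.} Each $R_{n,k}\to R_{1,1}$ by Theorem~\ref{thm:collapse}, so $\min_k R_{n,k}\to R_{1,1}$, and hence $\Gamma_n\to0$ uniformly over $n\le N$ by finiteness. This also follows from Step 1: a $k$-of-$n$ rule accepts $\mathbf 1$ and rejects $\mathbf 0$ for every $1\le k\le n$, so in the limit it coincides with the single-vote-following rule.

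\textbf{Main obstacle.} The only delicate point is a notational mismatch. $R_{1,1}$ is the loss of the vote-following rule, namely $(1-p_G)+\kappa p_B$. This need not equal $R_1^*$ when $\kappa$ is extreme, since the one-vote Bayes rule may then ignore the vote. The plan therefore proves the two limits separately, in Steps 2 and 3, and does not claim $R_{1,1}=R_1^*$.
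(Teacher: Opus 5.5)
Your argument for the two limits is correct and follows the paper's proof essentially step for step. For the threshold family you use Theorem~\ref{thm:collapse} together with finiteness of $\{(n,k):1\le k\le n\le N\}$. For unrestricted rules you use convergence of the vote-vector law to the two-point law on the all-ones and all-zeros vectors, plus continuity of the finite-space Bayes risk. Your remark that $R_{1,1}=(1-p_G)+\kappa p_B$ generally differs from $R_1^*$ is also right.

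What you leave unproved is the nonnegativity $\Gamma_n\ge0$. The statement asserts it for all parameter values, not only in the limit, and the paper proves it first. It is not automatic, because for $n\ge2$ no $k$-of-$n$ rule copies a single member's vote, so an argument by mimicry is unavailable. The missing step is the averaging identity $\sum_{k=1}^n\mathbf 1\{S\ge k\}=S$. Taking expectations in each state gives $\sum_{k=1}^n A_{n,k}(p,\rho)=np$. Hence $\frac1n\sum_{k=1}^n R_{n,k}=(1-p_G)+\kappa p_B=R_{1,1}$ for every $(\rho_G,\rho_B)$, and therefore $\min_k R_{n,k}\le R_{1,1}$. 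Your limit arguments do not need this, since you show each $\Gamma_n\to0$ from both sides, but the theorem as stated does.
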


\begin{proof}
First, $\Gamma_n\ge0$: for every realized vote count $S=s$, exactly $s$ of
the $n$ thresholds accept, and therefore
$\frac1n\sum_{k=1}^n A_{n,k}(p,\rho)=\mathbb E[S/n]=p$. Consequently,
$\frac1n\sum_{k=1}^n R_{n,k}=R_{1,1}$ and
$\min_kR_{n,k}\le R_{1,1}$.
Part (a) is Theorem~\ref{thm:collapse}. For part (b), for each fixed pair
$(n,k)$, Theorem~\ref{thm:collapse} applied state by state gives
$R_{n,k}\to R_{1,1}$ as $(\rho_G,\rho_B)\to(1,1)$. The collection
$\{(n,k):1\le k\le n\le N\}$ is finite, so the convergence is uniform over
that collection, proving the first display.

For unrestricted rules and fixed $n$, under state $\theta$ the vote-vector
distribution converges in total variation to the two-point distribution
placing mass $p_\theta$ on $(1,\ldots,1)$ and mass $1-p_\theta$ on
$(0,\ldots,0)$; the probability of every non-unanimous vector vanishes.
Bayes risk on the finite vote space is a continuous function of its two
state-conditional probability masses, so $R_n^*\to R_1^*$. Finiteness of
$N$ again makes the convergence uniform over $1\le n\le N$.
\end{proof}

\paragraph{Finite-grid observation.}
For majority committees in the operating region
($p_B<\tfrac12<p_G$), Proposition~\ref{prop:jensen}c proves that common
correlation increases loss locally at $\rho=0^+$. Majority loss was also
nondecreasing along $\rho_G=\rho_B=\rho$ in all 2{,}700 parameter curves of
check C6 (\S\ref{sm:cert}), supplying numerical evidence over the
prespecified operating region. The single-agent risk $R_{1,1}$ is constant in
$\rho$, so majority gain was correspondingly nonincreasing on that grid.

Theorem~\ref{thm:gain} takes the near-perfect-correlation limit uniformly over
a fixed finite range of committee sizes. Proposition~\ref{prop:bayes} fixes
$(\rho_G,\rho_B)$ below one and sends $n\to\infty$, revealing the continuous
statistic $q_\theta(W)$. The large-committee limit can retain information
absent from one binary vote, so the two limit orders can produce different
results.

\section{Heterogeneous agents and estimation}
\label{sm:est}
\label{sm:hetero}

\subsection{Heterogeneous model}
Agents may differ in marginals and correlations: state-$\theta$ latent scores
$Z^{(\theta)} \sim \mathcal N(0, \Sigma_\theta)$ with unit diagonal, and
$V_i = \mathbf 1\{Z^{(\theta)}_i \le \Phi^{-1}(p_{\theta,i})\}$. The
acceptance probability of any $k$-of-$n$ rule is the probability that at
least $k$ coordinates of a correlated Gaussian vector fall below their
thresholds. Estimated correlation matrices are projected to the
positive-semidefinite cone by eigenvalue clipping followed by rescaling to
unit diagonal.

\subsection{Estimation protocol}
From $N$ items of known state we estimate, per state $\theta$:
(i) marginals $\hat p_{\theta,i}$ (vote frequencies);
(ii) pairwise latent correlations $\hat\Sigma_{\theta,ij}$ by tetrachoric
inversion of the $2\times2$ vote table after constraining its joint
probability to the binary Fr\'echet bounds;
(iii) a common-factor summary $\hat\rho_\theta$ = mean off-diagonal latent
correlation;
(iv) the chance-adjusted agreement (binary CAPA analogue,
$(\mathrm{obs} - \mathrm{exp})/(1 - \mathrm{exp})$) for comparability with
the model-similarity literature.
For the exchangeable one-factor comparator, negative mean off-diagonal
estimates are projected to zero because its parameter space requires
$\rho_\theta\geq0$. The heterogeneous full-matrix model retains signed
pairwise estimates subject to positive-semidefinite regularization.
Experimental resampling, Monte Carlo evaluation, and held-out threshold
selection are specified in \S\ref{sm:methods}.

\begin{proposition}[Exact plug-in selection is consistent]
\label{prop:consistent}
Assume state-conditional items are independent draws from the specified
heterogeneous Gaussian-copula model, with the number of observations in each
state tending to infinity. Fix the design space
$\mathcal K=\{1,\ldots,n\}$ and suppose the true loss
vector $(R_{n,k})_{k\in\mathcal K}$ has a unique minimizer $k^*$. Let
$\widetilde R_{N,n,k}$ be the \emph{exact} Gaussian-copula loss evaluated at
consistent marginal and correlation estimates from $N$ items. Assume the
true marginals and correlation matrices are interior points of their
parameter spaces and that any PSD regularization either becomes inactive or
uses an eigenvalue floor $\epsilon_N\downarrow0$. Then
\[
\Pr\!\left(\arg\min_{k\in\mathcal K}\widetilde R_{N,n,k}=\{k^*\}\right)
\longrightarrow1.
\]
For Monte Carlo orthant estimates based on $M_N$ draws, the same conclusion
holds when $M_N\to\infty$.
\end{proposition}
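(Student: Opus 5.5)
The argument is a continuous-mapping argument combined with a positive gap at the unique minimizer. First, since $k^*$ is the unique minimizer of the finite true loss vector, the margin
\[
\delta=\min_{k\ne k^*}\bigl(R_{n,k}-R_{n,k^*}\bigr)
\]
is strictly positive. On the event $\max_k|\widetilde R_{N,n,k}-R_{n,k}|<\delta/2$ we have $\widetilde R_{N,n,k^*}<\widetilde R_{N,n,k}$ for every $k\ne k^*$. So the argmin is exactly $\{k^*\}$ on that event. It therefore suffices to show that $\widetilde R_{N,n,k}\to R_{n,k}$ in probability for each of the finitely many $k$; a union bound then gives uniform convergence over $\mathcal K$.

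Second, I would establish continuity of the exact loss in the parameters. Each $A_{n,k}$ is a finite sum, over vote vectors $v$ with $\sum v_i\ge k$, of probabilities
\[
\Pr\bigl(Z_i\le c_i \text{ for } v_i=1,\ Z_i>c_i \text{ for } v_i=0\bigr), \qquad c_i=\Phi^{-1}(p_{\theta,i}).
\]
By inclusion--exclusion, each such probability is a signed combination of Gaussian orthant probabilities $\Phi_{|J|}(c_J;\Sigma_{JJ})$. The map $(c,\Sigma)\mapsto\Phi_m(c;\Sigma)$ is continuous on $\mathbb R^m\times\{\text{correlation matrices}\}$, including at singular PSD matrices. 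To see this, write $Z=\Sigma^{1/2}\xi$: $\Sigma^{1/2}$ depends continuously on $\Sigma$, and the boundary event $\{(\Sigma^{1/2}\xi)_i=c_i\}$ has probability zero because each coordinate is standard normal. Dominated convergence then gives continuity. Because the true marginals are interior, $\Phi^{-1}$ is continuous at them. Hence $R_{n,k}$ is a continuous function of $(p_{G,\cdot},p_{B,\cdot},\Sigma_G,\Sigma_B)$ at the truth, and $\kappa$ is fixed.

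Third, I would handle the estimators, and I expect this to be the main obstacle. Vote frequencies are consistent by the LLN. For the correlations, the empirical $2\times2$ tables converge to the true ones. Tetrachoric inversion is continuous at interior truths: $\rho\mapsto\Phi_2(c_i,c_j;\rho)$ is strictly increasing by Plackett's identity, and the true joint probability lies strictly inside the Fréchet bounds when $|\rho_{ij}|<1$ and the marginals are interior. So the Fréchet clipping is eventually inactive with probability tending to one, and the inverse is continuous there. The PSD step has two cases. If it becomes inactive, there is nothing to show. If it uses eigenvalue clipping at a floor $\epsilon_N\downarrow0$ followed by rescaling, the result is a continuous function of (raw matrix, $\epsilon_N$) that equals the input at a PSD input with $\epsilon=0$. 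Because the raw estimate converges to the true PSD $\Sigma_\theta$, the regularized estimate converges to $\Sigma_\theta$ as well; the clipped negative eigenvalues tend to zero and the diagonal rescaling tends to one. The continuous mapping theorem then gives $\widetilde R_{N,n,k}\to R_{n,k}$ in probability.

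Finally, for the Monte Carlo version, I would write the estimated loss as $\widehat R=\widetilde R_{N,n,k}+\text{error}$. Conditionally on the parameter estimates, each orthant frequency is an average of $M_N$ bounded indicators, so Chebyshev bounds the error by $O_p(M_N^{-1/2})\to0$ uniformly over the finite $k$. Adding this to the previous step and reapplying the $\delta/2$ margin argument completes the proof.
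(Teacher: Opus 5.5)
Your proposal is correct and follows the same route as the paper's proof. Both arguments use LLN consistency of the cell frequencies, continuity of the tetrachoric inverse via Plackett's identity, convergence of the regularized matrices, continuity of Gaussian orthant probabilities, uniformity over the finite set $\mathcal K$, the positive loss gap at $k^*$, and vanishing Monte Carlo error, and you make explicit several steps the paper leaves implicit: the $\delta/2$ margin argument, the eventual inactivity of the Fr\'echet clipping, and the Chebyshev bound for the $M_N$-draw estimates.
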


\begin{proof}
Consistency of the $2\times2$ cell frequencies follows from the LLN, and
the tetrachoric inverse is continuous at interior marginals because
$\Phi_2$ is strictly increasing in $\rho$ (Plackett's identity). Under the
stated regularization condition, the estimated matrices converge to the true
matrices. Gaussian orthant probabilities are continuous at interior
$(p,\Sigma)$, so
$\widetilde R_{N,n,k}\to R_{n,k}$ in probability for every $k$. Because
$\mathcal K$ is finite, convergence is uniform over $k$; the positive loss
gap at the unique minimizer then gives exact plug-in selection consistency.
With $M_N\to\infty$, the Monte Carlo errors also converge uniformly to zero
over the finite set $\mathcal K$, establishing selection consistency.
\end{proof}

\paragraph{Scope across homogeneous and heterogeneous committees.}
Finite-roster threshold selection and exchangeable-limit results use
different assumptions. The former is determined by the two state-conditional
vote-count distributions; the latter uses the homogeneous one-factor structure.
\begin{center}
\small
\begin{tabular}{@{}>{\raggedright\arraybackslash}p{0.25\textwidth}
>{\raggedright\arraybackslash}p{0.32\textwidth}
>{\raggedright\arraybackslash}p{0.35\textwidth}@{}}
\hline
Claim & Homogeneous exchangeable committee & Fixed heterogeneous committee \\
\hline
Loss and optimal threshold
& Mixed-binomial quadrature; enumerate $k=1,\ldots,n$
& Multivariate Gaussian probabilities; enumerate $k=1,\ldots,n$ \\
Monotonicity in $\kappa$
& Any fixed state-conditional vote-count distributions
  (Prop.~\ref{prop:topkis})
& Holds for any fixed state-conditional vote-count distributions
  (Prop.~\ref{prop:topkis}) \\
Likelihood-ratio crossing rule
& MLR is proved when $\rho_G=\rho_B$; unequal correlations can violate it
& Applies when the fitted vote-count distributions satisfy MLR; direct
  enumeration applies generally \\
Correlation effects, error floor, and near-perfect-correlation collapse
& Derived from homogeneous marginals and the exchangeable common factor
& Require additional structure, including a specified sequence of growing
  rosters for a large-$n$ limit \\
Plug-in threshold selection
& Included as a special case
& Consistent for fixed $n$ under the conditions of
  Prop.~\ref{prop:consistent} \\
\hline
\end{tabular}
\end{center}

\subsection{Relation between CAPA and the latent correlation}
For two agents with equal accuracy $a$ on a balanced binary task, both the
chance-adjusted agreement $\mathrm{CAPA}$ and the manifest error correlation
are monotone transforms of the latent $\rho$ at fixed marginals; they differ
in how they aggregate over states. State-conditional $\rho_\theta$ separates
agreement induced by high marginal accuracy from joint state-specific
errors. Concretely, for two agents with common state-$\theta$ marginal
$p_\theta$ the state-conditional agreement probability is
\[
P^{\mathrm{agree}}_\theta \;=\; \Phi_2(c_\theta, c_\theta; \rho_\theta)
\;+\; \Phi_2(-c_\theta, -c_\theta; \rho_\theta),
\qquad c_\theta = \Phi^{-1}(p_\theta),
\]
and the state-conditional chance-adjusted agreement is
$\mathrm{CAPA}_\theta = (P^{\mathrm{agree}}_\theta - e_\theta)/(1 -
e_\theta)$ with $e_\theta = p_\theta^2 + (1-p_\theta)^2$ its value at
$\rho_\theta = 0$. By Plackett's identity $\mathrm{CAPA}_\theta$ is a
strictly increasing function of $\rho_\theta$ at fixed $p_\theta$, so the
two quantities order agent pairs identically \emph{within} a state and a
marginal profile. The slope of the transform depends strongly on $p_\theta$,
so equal agreement values can encode very different latent correlations at
different approval-rate levels, and pooled (state-blind)
agreement metrics additionally mix $\rho_G$ and $\rho_B$ with weights set
by the item mix. Figure~S6 shows the empirical association; variation in
marginals across pairs can produce a nonmonotone plotted relation even within
a state. Agreement-based similarity metrics provide a preliminary screen;
threshold design uses the state-conditional pair
$(\rho_G, \rho_B)$, which must be estimated from labeled screening
decisions.

\clearpage
\section{Numerical stress tests and implementation checks}
\label{sm:cert}

Table~\ref{tab:cert} reports finite-grid checks of formula implementations,
limiting cases, and selected inequalities. Analytic results establish the
continuous-space properties. C6 evaluates majority-loss monotonicity on the
stated operating grid, and C8 validates the implementation of the proved
equal-correlation MLR result. The checking harness is
\texttt{e1\_theory\_cert.py} (in
\texttt{codes/}); it writes \texttt{e1\_theory\_cert.json} (in
\texttt{outputs/}), summarized in Table~\ref{tab:cert}.

\begin{table}[h]
\centering\footnotesize\setlength{\tabcolsep}{4pt}
\begin{tabular}{@{}c >{\raggedright\arraybackslash}p{0.40\textwidth} >{\raggedright\arraybackslash}p{0.17\textwidth} >{\raggedright\arraybackslash}p{0.26\textwidth}@{}}
\hline
Check & Quantity or property checked & Grid & Result \\
\hline
C1 & Quadrature matches Monte Carlo ($2{\times}10^6$ draws) & 40 random configs & max dev.\ $7.0\times10^{-4}$ (within MC noise) \\
C2 & Endpoints: $\rho{=}0$ binomial exact; $\rho{\to}1$ collapse & 160 configs (320 checks) & errors $0$ and $<5\times10^{-3}$ \\
C3 & $A_{n,n}\uparrow\rho$;\ $A_{n,1}\downarrow\rho$ (Prop.~\ref{prop:jensen}) & 155 parameter settings, two curves each, $\times$ 39 $\rho$ & 0 violations \\
C4 & Jensen bounds & 155 configs & 0 violations \\
C5 & $\partial A/\partial\rho|_0 = \tfrac12\varphi(c)^2 B''_{n,k}(p)$ (Prop.~\ref{prop:local}) & 6 configs & max rel.\ err.\ $6.4\times10^{-4}$ \\
C6 & Majority loss $\uparrow \rho$ on the tested operating grid & 2{,}700 parameter curves & 0 violations on grid \\
C7 & Convergence to Vasicek floor (Thm.~\ref{thm:vasicek}) & 3 trajectories to $n{=}501$ & gap $<2\times10^{-3}$ \\
C8 & MLR of vote counts when $\rho_G=\rho_B$ on the tested grid & 1{,}620 configs & 0 violations on grid \\
C8b & MLR when state correlations vary independently & 162 configs & 38 violations, all at unequal correlations \\
C9 & $k^*(\kappa)$ nondecreasing (Prop.~\ref{prop:topkis}) & 81 curves & 0 violations \\
C10 & Fixed-size collapse, including extreme $\kappa$ & 180 threshold configs; 40 Bayes endpoints & exact errors $0$; near-$\rho$ scaled err.\ $6.8\times10^{-4}$ \\
C11 & Exact likelihood-ratio ties and smallest-optimizer convention & 126 configs & 0 violations; max gap $2.3\times10^{-16}$ \\
\hline
\end{tabular}
\caption{\textbf{Finite-grid numerical stress tests.} Violations count
tested grid points at which the stated property fails beyond numerical
tolerance. Results are reported in their native units. A zero is the observed
violation count on the specified finite grid; the corresponding analytic
results are cited above. C8b records counterexamples to MLR under
unequal state correlations and illustrates the non-monotone regime in Fig.~S1.}
\label{tab:cert}
\end{table}

\begin{figure}[h]
\centering
\includegraphics[width=0.85\textwidth]{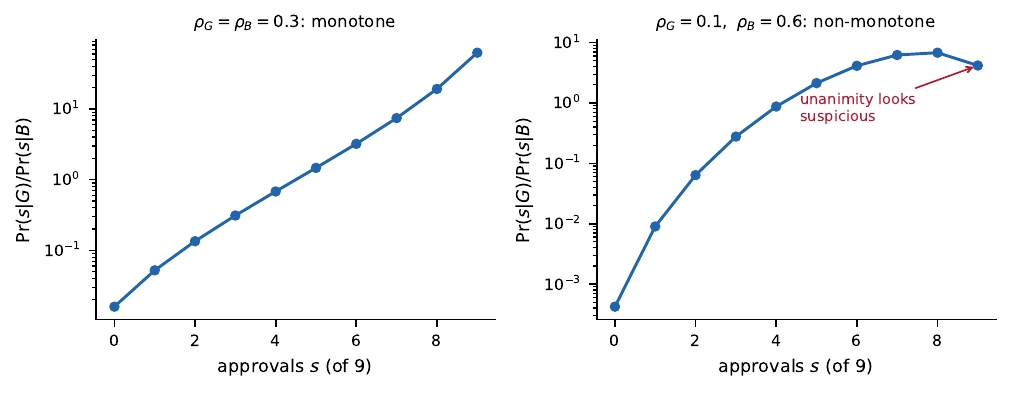}
\caption{\textbf{Unequal state correlations can make the vote-count
likelihood ratio nonmonotone.}
Vote-count likelihood ratios $\Pr(s\,|\,G)/\Pr(s\,|\,B)$ for a
9-member committee ($p_G = 0.75$, $p_B = 0.25$). Left, equal correlations,
monotone. Right, $\rho_B > \rho_G$ makes the ratio non-monotone at high
vote counts; unanimous approval carries a \emph{lower} likelihood ratio
than 8-of-9 approval and therefore gives weaker evidence of quality even
at the higher vote count.}
\end{figure}

\begin{figure}[h]
\centering
\includegraphics[width=0.5\textwidth]{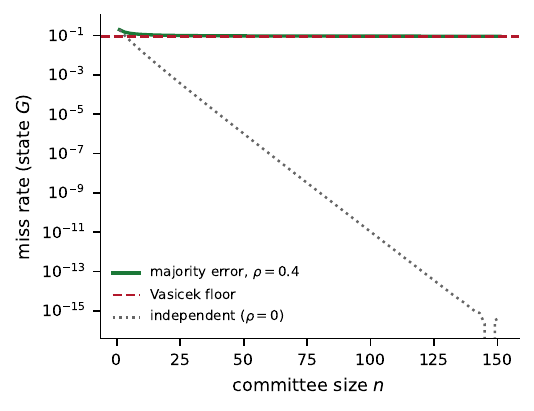}
\caption{\textbf{Convergence to the correlation floor.}
Majority-committee miss rate versus committee size at $p = 0.8$:
independent agents obey Condorcet (dotted, error $\to 0$); correlated agents
($\rho = 0.4$) converge to the Vasicek floor (dashed line;
Eq.~2 of the main text).}
\end{figure}

\clearpage
\section{Experimental methods}
\label{sm:methods}

\subsection{Screening domains}
All four domains present a proposal with hidden binary state; the agent must
answer with exactly one word, \emph{approve} or \emph{reject}. Item counts
are balanced across states except where noted.

\textbf{Fact verification (\emph{facts}).} 400 claim--evidence pairs sampled
from the VitaminC test split \cite{schuster2021vitaminc}; state $G$ =
SUPPORTS (approve), state $B$ = REFUTES (reject). The agent sees the
evidence sentence and the claim.

\textbf{Mathematical solution verification (\emph{math}).} 400 GSM8K test
problems \cite{cobbe2021gsm8k} with worked solutions. State $G$: the
dataset's reference solution (calculator annotations stripped). State $B$:
the final calculator step's result is perturbed by a plausible delta
($\pm1$--$3$, $\pm10$, halving, or comparable), the perturbed value is
propagated through the remaining text including the final answer line, so
the solution contains exactly one arithmetic inconsistency.

\textbf{Code review (\emph{code}).} 357 MBPP (sanitized) tasks
\cite{austin2021mbpp}. State $G$: the reference program, verified to pass
all official test assertions. State $B$: a single-site mutation of the
reference (comparison or arithmetic-operator swaps, off-by-one edits, boundary
index changes, boolean flips), \emph{verified to fail} at least one official
test and remain syntactically valid. Each task appears exactly once,
as either $G$ or $B$, to preserve independence across items. (Mutation
search exhausted eligible mutants for some tasks, yielding 200 $G$ / 157 $B$.)

\textbf{Truthfulness screening (\emph{truth}).} 400 TruthfulQA
multiple-choice questions \cite{lin2022truthfulqa}; state $G$ pairs the
question with its best correct answer, state $B$ with the most plausible
incorrect (misconception) answer.

\subsection{Agents}
\nModelsOpen{} open-weight instruction-tuned models served with ollama
(Q4\_K\_M quantization) on one verified NVIDIA A100 80GB GPU:
Qwen2.5 \{0.5B, 1.5B, 3B, 7B, 14B, 32B\}, Llama-3.1-8B, Llama-3.2-3B,
Mistral-7B, Mistral-Small-24B, Gemma-2 \{2B, 9B\}, Gemma-4-31B, Phi-4-14B,
GLM-4-9B, Yi \{9B, 34B\}, OLMo-2-13B, InternLM2-20B, and
Command-R-35B; plus \nModelsFrontier{} API models, one per provider
(DeepSeek V4 Pro, Zhipu GLM-5.2, Alibaba Qwen3.7-Max, Moonshot Kimi-K2, xAI
Grok-4.5, Amazon Nova Pro, Anthropic Claude Opus 4.8, OpenAI GPT-5.5) under
the identical prompt protocol with provider-side reasoning disabled, so that
all judges emit direct verdicts. DeepSeek, Zhipu, Alibaba, and Moonshot were
queried through provider APIs; the other API models were queried through the
gateway routes recorded with each vote.

For parameter-tier summaries, the open-weight models are partitioned before
analysis into three disjoint bins: small (0.5--4B: Qwen2.5-0.5B, 1.5B, 3B;
Gemma-2-2B; Llama-3.2-3B), mid (7--9B: Qwen2.5-7B, Llama-3.1-8B,
Mistral-7B, Gemma-2-9B, GLM-4-9B, Yi-9B), and top (13--35B: OLMo-2-13B,
Qwen2.5-14B, Phi-4-14B, InternLM2-20B, Mistral-Small-24B, Gemma-4-31B,
Qwen2.5-32B, Yi-34B, Command-R-35B). The labels describe parameter-count
bins used in the cross-sectional tier comparisons.

Each (model, domain, item) is voted on four times: once greedily (temperature
0) and in three temperature-0.7 samples. Local samples use fixed seeds; API
samples use provider-managed stochastic sampling. Collector-level evidence
records an 8-token default for other base local terse runs.
Gemma-4-31B, Mistral-Small-24B, Yi-34B, OLMo-2-13B, InternLM2-20B, and
Command-R-35B were launched with \texttt{THINK\_OFF=1} and
\texttt{NUM\_PREDICT=64}; an archived Phi-4 code-review refill used
\texttt{NUM\_PREDICT=8}. API collectors used a collector-path-specific
limit of 8 or 16 tokens.

The system prompt frames the agent as a reviewer on an evaluation board; the
user prompt is the domain template above. Responses are parsed from the first
occurrence of approve or reject, followed by yes or no. Responses without either
keyword pass through a uniform extended parser (earliest match among
correct/true/valid/supports/supported/agrees/yes/lgtm/accurate versus
incorrect/wrong/buggy/flawed/opposed/disagreed/false/unsupported/fails/
invalid/no, with ties resolved to reject). Records unresolved by this parser
were assigned reject in the archived primary analysis. Agents whose
remaining-unparsed rate on a domain exceeds 15\% are excluded from that
domain's pools. This criterion removes Phi-4 from code review.

\subsection{Reviewed-record parser sensitivity analysis}
\label{sm:parseraudit}

Parser sensitivity was assessed on a reviewed set containing all
\ParserAuditUnresolvedCensusN{} unresolved primary records and a deterministic
stratified sample of 2,000 parsed records, including all 187 resolved only by
the extended parser. This gives \ParserAuditReviewedTotalN{} unique reviewed
records.

Pass A used semantic language-model review, pass B applied predefined
deterministic rules, and pass C used a language model to adjudicate the
\ParserAuditAdjudicatedDisagreementsN{} disagreements. Passes A and B were
independently blinded to the parser labels, adjudication key, and each other's
output; the materials identified the audit subset. They agreed on
\ParserAuditAgreementRecordsN{} records. Pass C received both prior
assessments and remained blinded to the parser labels and adjudication key. It
assigned binary labels to 307 disagreements and left 88 indeterminate.

The primary reviewed-record overlay contains \ParserAuditPrimaryOverlayN{} rows:
\ParserAuditPrimaryOverlayApproveN{} approve,
\ParserAuditPrimaryOverlayRejectN{} reject, and
\ParserAuditPrimaryOverlayMissingN{} indeterminate. Inverse sampling-fraction
weights map these reviewed rows to the
\ParserAuditPrimaryWeightedPopulationN{}-record primary population. Weighted
binary coverage is \ParserAuditPrimaryBinaryCoveragePct{}\%, and weighted binary
label discordance with the archived parser, among records with a binary
reviewed label, is \ParserAuditPrimaryWeightedLabelDiscordancePct{}\%. These
weighted quantities are descriptive audit summaries of coverage and parser
discordance among records receiving binary reviewed labels.

Among the \ParserAuditUnresolvedCensusN{} unresolved primary records, review
yielded \ParserAuditUnresolvedApproveN{} approve,
\ParserAuditUnresolvedRejectN{} reject, and
\ParserAuditUnresolvedIndeterminateN{} indeterminate labels. The main overlay uses
listwise deletion for indeterminate reviewed records. A forced-reject point
sensitivity retains all binary reviewed labels and maps the remaining
indeterminate overlay labels to reject. An agreement-only overlay sets all
C-adjudicated primary-overlay disagreements to missing: 358 such rows were in
the primary overlay, of which 84 were already missing, so this policy adds
\ParserAuditAgreementOnlyAdditionalMissingN{} missing records for
\ParserAuditAgreementOnlyOverlayMissingN{} in total; the other 37 C-adjudicated
rows lie outside the primary overlay. These policies enter the
design-balanced selection analysis as point sensitivities.

\subsection{Estimation and validation protocol}
Vote matrices comprise primary-protocol files assembled by domain over agents,
where an agent is a model--sampling-run pair.
Items are split by parity, with \emph{odd} items forming the estimation half
(marginals $\hat p_{\theta,i}$; pairwise tetrachoric latent correlations
$\hat\Sigma_\theta$ after constraining joint probabilities to the binary
Fr\'echet bounds; PSD projection) and \emph{even}
items the evaluation half (realized architecture losses). For each committee
pool (same-model temperature samples; cross-model single-sample; full mixed
pool) we draw up to 150 random committees per size $n \in \{3,5,7,9\}$ and
evaluate all thresholds $k = 1,\dots,n$ at
$\kappa \in \{0.25, 0.5, 1, 2, 4\}$. This grid spans cost-weighted prior
odds from one quarter to four and receives equal design weight. A deployment
analysis substitutes the calibrated value or interval defined in \S S1. The floor formulas
(\S S4) and the Topkis comparative statics (\S S5) hold for every fixed,
finite, positive $\kappa$. Sampled-composition prediction calibration uses fixed-seed
Monte Carlo with $2\times10^5$ draws per fitted committee. Design-balanced
primary selection inference uses
\SelectionMonteCarloDraws{} antithetic common random draws within each refit so
that all candidate thresholds share the same numerical noise. Baselines use
the same machinery with $\Sigma$ forced to identity or the majority
threshold. Figure~3D also shows the evaluation-sample oracle, the minimum
loss across candidate thresholds in that finite evaluation sample.

\textbf{Sampled-composition descriptive calibration.}
This analysis evaluates \robustAblationN{} sampled threshold--cost rows. The pooled
identity-line statistic fixes the slope at one and the intercept at zero:
$R^2_{\mathrm{id}}=1-\sum_j(y_j-\hat y_j)^2/\sum_j(y_j-\bar y)^2$.
It spans four domains and five values of
$\kappa$, whose scaling
and unequal row counts contribute to the total sum of squares; within domains
$R^2 = \calibRsqDomMin$--$\calibRsqDomMax$ (Fig.~\ref{fig:smcal}), and
within the twenty domain $\times$ cost-ratio cells
$R^2 = \calibRsqCellMin$--$\calibRsqCellMax$ (median
\calibRsqCellMedian). Its committee-cluster bootstrap
($B=\robustBootB$) retains all rows of a committee together and holds
evaluation-half items and fitted parameters fixed. The resulting intervals
describe the sampled composition.
The exchangeable ablation is the constrained nonnegative one-factor
projection. For each committee and state, a negative mean off-diagonal
tetrachoric estimate is set to zero before evaluating the one-factor model.
The heterogeneous full-matrix model prediction retains admissible
negative pairwise estimates after
positive-semidefinite projection.

\textbf{Design-balanced selection inference.}
The primary estimand defines 20 domain $\times$ pool-kind $\times$
committee-size cells:
self-$n{=}3$, cross-$n{=}3,5,7$, and mixed-$n{=}9$ in each domain. Every cell
has target weight 0.05, and the five values of $\kappa$ for a committee remain one
cluster. In each of \SelectionBootstrapValid{} valid replicates, estimation and
evaluation items are sampled independently with replacement within domain
and state; all agent marginals and pairwise tetrachoric correlations are
re-estimated; each committee submatrix is projected to the
positive-semidefinite cone; all thresholds are re-predicted using
\SelectionMonteCarloDraws{} antithetic common draws; and the smallest numerical
minimizer is reselected separately for the full-matrix dependence model and
the independence model.
Committee clusters are then resampled within their design cells. Shared item
draws and globally re-estimated pair statistics preserve dependence among
committees sharing agents.

The pooled design-cell-balanced estimand is the prespecified primary estimand
and has three marginal contrast intervals.
Sampled-composition, pool-kind-balanced, self-only, cross/mixed-only,
within-domain, model-block, and family/model-block summaries are exploratory
sensitivities with pointwise intervals. All inference is
conditional on the observed model roster, four benchmark domains, and stated
committee design. Parameter summaries reported
elsewhere use their stated stratified item bootstraps ($B=1{,}000$).

\textbf{Pair universes.} All pairwise latent correlations use the same
tetrachoric estimator and identifiability window, with marginals in
$(0.05,0.95)$ and at least eight items per state. The pooled open-weight
cross-family summaries in Figs.~1C and 2B require both states to satisfy
these conditions for both members. This yields \nCrossPairs{} pairs across
the four domains. The family analysis uses all open-weight pairs
with identifiable bad-state marginals, including same-family pairs
(\sFourNPairs{} pairs). Within each item-bootstrap replicate, marginals and
pair membership are re-estimated before the pair summaries are calculated.

\textbf{Primary-protocol vote sets.} Figure~2 estimates correlations between
each model's first temperature-0.7 sample. The protocol contrasts in
Fig.~\ref{fig:smreason} and Table~\ref{tab:samepair} use greedy votes under
both protocols. Each comparison therefore uses one sampling protocol on
both sides.

\subsection{Adequacy of the one-factor approximation}
\label{sm:factor}
For each domain and state we fit a rank-one (single-factor) structure to
the identifiable block of $\hat\Sigma_\theta$ by iterated loadings on the
off-diagonals (\texttt{codes/e7\_bootstrap.py}) and report the mean
absolute off-diagonal residual. Residuals span
\factorResidMin--\factorResidMax{} across the eight domain $\times$ state
blocks (mean \factorResidTyp) against mean absolute off-diagonal
magnitudes of \factorOffdiagMin--\factorOffdiagMax{} (mean
\factorOffdiagTyp). A single common factor absorbs most of the shared
structure, supporting the exchangeable summaries quoted in the main text.
Committee predictions use the full estimated matrix $\hat\Sigma_\theta$ to
retain the remaining pair-specific dependence (\S\ref{sm:est}).

\subsection{Reasoning-protocol re-collection}
\label{sm:reason}
To measure protocol sensitivity, every item was re-voted under a \emph{reasoning
protocol}: the prompt presents the same evidence and invites step-by-step
verification, the response budget is raised to 1{,}536--2{,}048 tokens
for API judges (per collection script) and 1{,}024 tokens for local
judges, provider thinking channels are enabled where available,
and the verdict is parsed from the final \texttt{VERDICT: approve|reject}
line of the answer. The fallback searches the last approve/reject keyword in
the answer and then the reasoning trace. Any remaining unparsed vote is coded
as reject under the primary-protocol rule; the overall remaining-unparsed rate is
\unparsedThinkPct\%. The judges are all \nModelsFrontier{}
frontier models (full item sets per domain) and eleven open-weight models
spanning six families and 0.5B--34B (Qwen2.5 \{0.5B, 1.5B, 7B, 14B, 32B\},
Gemma-2 \{2B, 9B\}, Mistral-7B, Llama-3.1-8B, GLM-4-9B, Yi-34B).
Collection is greedy (temperature 0), \nVotesReasonExact{} votes in total.

Estimation follows the primary protocol: greedy votes, per-pair common
items, tetrachoric inversion with the same identifiability guards
($\ge 8$ items per state, marginals in $(0.05, 0.95)$), and a paired
stratified item bootstrap ($B = 1{,}000$; the same resampled items are
applied to both protocols, so protocol deltas are within-replicate).
The protocol contrast compares means over each protocol's identifiable
pairs. Within every item-bootstrap replicate, marginals and pair sets are
recomputed separately for the two protocols. Table~\ref{tab:samepair}
additionally restricts both estimates to pairs identifiable under both
protocols.

\subsection{Additional reasoning-protocol checks}
\label{sm:robust}
\textbf{Interaction test and power.} On the same-pair-intersection
estimator of Table~\ref{tab:samepair}, a paired bootstrap ($B = 1{,}000$;
one bad-state item resample per replicate applied to all four
roster $\times$ protocol cells) gives the difference-in-differences
$\Delta_{\text{frontier}} - \Delta_{\text{open}}$ directly, and the
bootstrap SE yields the minimal detectable effect of the frontier delta at
80\% power (two-sided 5\%; $\mathrm{MDE} = 2.80 \times \mathrm{SE}$).
\textbf{Self-resampling under reasoning.} The five-model intervention
committee re-votes the first \selfCotItems{} facts items (103 $G$ / 97 $B$)
under the reasoning protocol, once greedy and in three temperature-0.7
samples (\selfCotVotes{} votes; \texttt{outputs/decisions\_selfreason/}).
Self-correlation per model and state is the mean tetrachoric over the three
sample pairs with the standard guards, estimated identically on the primary
corpus's three terse-protocol temperature-0.7 samples of the same items. The
two protocols are compared model for model and item for item.

\subsection{Compute and reproducibility}
The primary corpus used in the main analyses contains
\nVotesBaseExact{} votes over three sessions. The local open-weight models were
served on one NVIDIA A100 80GB GPU; the eight API models were queried through
the provider and gateway routes described above. The local portion of the
\nVotesReasonExact{}-vote reasoning-protocol re-collection used the same
A100 80GB GPU, alongside the frontier APIs.
Table~\ref{tab:reproducibility} links each reported result to its script and
output.

\begingroup
\centering\scriptsize\setlength{\tabcolsep}{3pt}
\begin{longtable}{@{}>{\raggedright\arraybackslash}p{0.30\textwidth} >{\raggedright\arraybackslash}p{0.32\textwidth} >{\raggedright\arraybackslash}p{0.28\textwidth}@{}}
\caption{\textbf{Reproducibility map.} Empirical estimands are linked to
the scripts and archived output files from which the reported values are
obtained.}
\label{tab:reproducibility}\\
\hline
Reported result & Script & Output \\
\hline
\endfirsthead
\hline
Reported result & Script & Output \\
\hline
\endhead
\hline
\multicolumn{3}{r}{\textit{Continued on next page}}\\
\endfoot
\hline
\endlastfoot
Theory numerical checks (Table~\ref{tab:cert}) & \texttt{e1\_theory\_cert.py} & \texttt{e1\_theory\_cert.json} \\
Vote collection & \texttt{run\_decisions.py} & \texttt{decisions/*.jsonl} \\
$\hat\Sigma$, calibration, selection (Figs.~2, 3) & \texttt{analyze\_votes.py} & \texttt{analysis\_*\_full.json} \\
API vote collection & \texttt{run\_decisions\_api.py} & \texttt{decisions/*.jsonl} \\
Bootstrap intervals for correlation summaries and floors & \texttt{e7\_bootstrap.py} & \texttt{e7\_bootstrap.json} \\
Reasoning votes & \texttt{run\_reason\_recheck.py} & \texttt{decisions/*\_\_think.jsonl} \\
Reasoning co-error (Fig.~S5) & \texttt{e10\_reason\_coerror.py}, \texttt{e11\_reason\_open.py} & \texttt{e10/e11 *.json} \\
Family vs.\ capability & \texttt{e12\_s4\_family.py} & \texttt{e12\_s4\_family.json} \\
Same-pair robustness (Table~\ref{tab:samepair}) & \texttt{e13\_samepair.py} & \texttt{e13\_samepair.json} \\
Interaction test / power & \texttt{e15\_did\_power.py} & \texttt{e15\_did\_power.json} \\
Self-resampling under reasoning & \texttt{run\_selfreason.py}, \texttt{e16\_selfreason.py} & \texttt{decisions\_selfreason/*}, \texttt{e16\_selfreason.json} \\
Sampled-composition prediction calibration and omissions (Fig.~3A--C, Table~\ref{tab:ablation}) & \texttt{e19\_existing\_robustness.py} & \path{e19_existing_robustness.json} \\
Existing-vote size curves (Table~\ref{tab:sizecurves}) & \texttt{e20\_existing\_size\_curves.py} & \path{e20_existing_size_curves.json} \\
Reviewed-record parser sensitivity & \path{e22_parser_audit.py}; \path{e24_parser_adjudication.py}; \path{reviewed_record_uncertainty.py} & \path{reviewed_record_sensitivity_index.json} \\
Design-balanced selection inference (Table~\ref{tab:selection}) & \path{e23_full_pipeline_bootstrap.py} & \path{design_balanced_selection_index.json} \\
Selection decomposition and selector agreement (Table~\ref{tab:selection}) & \path{selection_summary.py} & \path{selection_summary.json} \\
\end{longtable}
\endgroup

\clearpage
\section{Extended results}
\label{sm:results}

\subsection{Per-domain correlation structure}
Figure~\ref{fig:smheat} shows state-conditional latent correlation matrices
for all four domains (blank cells: pairs outside the tetrachoric
identifiability window). Some high-correlation pairs, especially in facts
and truthfulness, have greater bad-state dependence than good-state
dependence (main text Fig.~2C), a pattern compatible with shared failure
modes. The two verification domains where many small models
sit near chance (math, code) illustrate this identifiability limit. Nearly
constant voting patterns yield unstable pairwise latent correlations.
Marginal behavior and the other members' joint behavior remain relevant to
threshold design.

\begin{figure}[h]
\centering
\includegraphics[width=\textwidth]{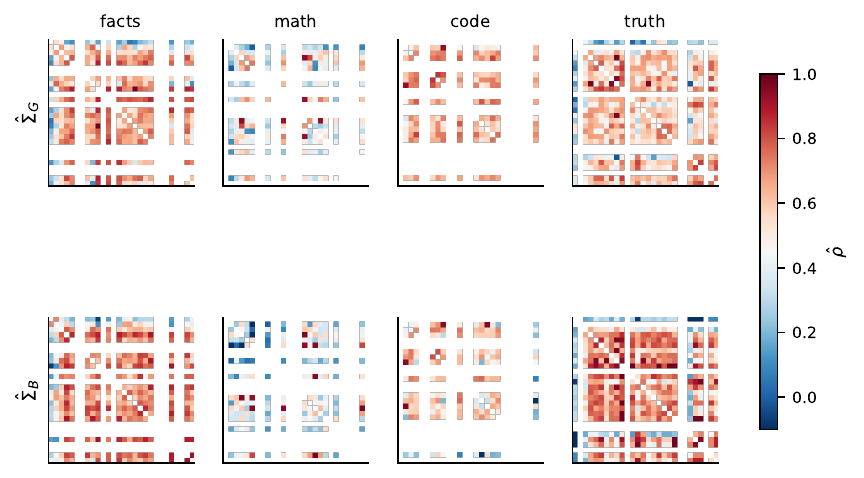}
\caption{\textbf{State-conditional latent correlation matrices.}
Latent error-correlation matrices $\hat\Sigma_G$ (top row) and
$\hat\Sigma_B$ (bottom row) across domains, estimated between the
first temperature-0.7 vote of each model on all items of the domain.
These descriptive matrices use all domain items; Fig.~3 uses the odd/even
split for validation. Blank cells mark pairs outside the tetrachoric
identifiability window.}
\label{fig:smheat}
\end{figure}

\subsection{Per-domain calibration}
\begin{figure}[h]
\centering
\includegraphics[width=\textwidth]{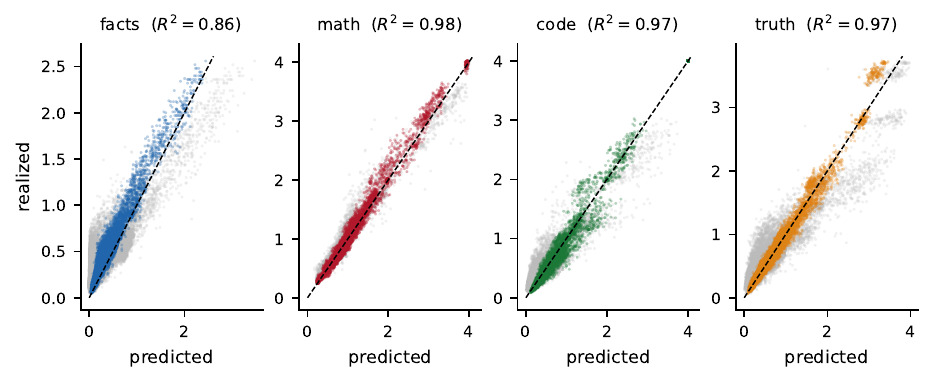}
\caption{\textbf{Per-domain calibration on held-out items.}
Predicted versus realized \kofn{} loss by domain. Colored points show
heterogeneous full-matrix predictions, gray points show independence
predictions, and the dashed line is the identity line. The panels contain
73,410 threshold--cost rows clustered by committee because they share items
and fitted parameters.}
\label{fig:smcal}
\end{figure}

\subsection{Secondary reasoning-protocol check}
Figure~\ref{fig:smreason} compares bad-state co-error under the primary and
reasoning protocols (methods in \S\ref{sm:reason}). For the eight frontier
model labels, reasoning changes balanced accuracy across domains
(\rsnFrBalaccMin--\rsnFrBalaccMax). The 95\% CIs for changes in
$\hat\rho_B$ include zero in the identifiable facts and truthfulness
comparisons: truthfulness \rsnFrTruthBaseB{} $\to$
\rsnFrTruthB{} ($\Delta = \rsnFrTruthDelta$, 95\% CI
\rsnFrTruthDeltaCI) and fact verification \rsnFrFactsBaseB{} $\to$
\rsnFrFactsB{} ($\Delta = \rsnFrFactsDelta$, CI
\rsnFrFactsDeltaCI). The available identifiable pairs support panel-level
contrasts in facts and truthfulness. For the eleven-model open-weight pool, the
point estimates decrease in all four domains: facts
($\Delta = \rsnOpFactsDelta$, CI \rsnOpFactsDeltaCI), code
($\Delta = \rsnOpCodeDelta$, CI \rsnOpCodeDeltaCI), truthfulness
($\Delta = \rsnOpTruthDelta$, CI \rsnOpTruthDeltaCI), and math
($\Delta = \rsnOpMathDelta$, CI \rsnOpMathDeltaCI).

These exploratory estimates characterize protocol-associated changes within
the observed panels. The paired difference-in-differences estimates are
$\didFactsDelta$ (95\% CI \didFactsDeltaCI) in fact verification and
\didTruthDelta{} (CI \didTruthDeltaCI) in truthfulness. Both intervals
include zero, and the frontier delta's 80\%-power minimal detectable effect
is \mdeFrFacts{} in facts and \mdeFrTruth{} in truthfulness.
Table~\ref{tab:samepair} reports estimates on the pairs identifiable under
both protocols.

\begin{figure}[H]
\centering
\includegraphics[width=\textwidth]{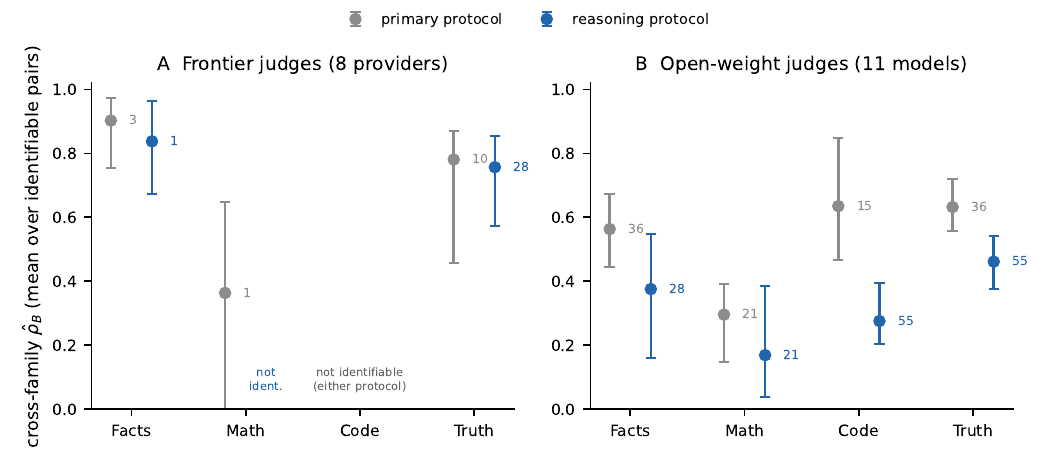}
\caption{\textbf{Exploratory reasoning-protocol contrasts.} Bad-state
latent correlation $\hat\rho_B$ under the primary (gray) and reasoning
(blue) protocols, estimated from greedy votes (\S\ref{sm:methods}). Points
average over the identifiable pairs for each protocol, and error bars are
95\% paired stratified-item bootstrap intervals ($B=1{,}000$); small integers
give the corresponding pair counts. Panel A shows the API-model panel. Facts
and truthfulness yield contrasts with wide intervals; code yields zero
identifiable pairs under both protocols, and mathematics yields one pair
under the primary protocol. Panel B shows the eleven-model open-weight pool.
Pair sets differ across panels; the estimated within-panel changes and their
intervals provide the comparable contrasts.}
\label{fig:smreason}
\end{figure}

The sampled-reasoning check compares three temperature-0.7 samples for a
five-model facts committee with the corresponding terse-protocol samples
(methods in \S\ref{sm:robust}). The pooled self-correlation estimate changes
from the terse protocol's clip boundary (\selfCotBTerse) to
$\hat\rho_B = \selfCotB$ (95\% CI \selfCotBCI; identifiable for
\selfCotIdentB{} of \selfCotModels{} judges; $\Delta = \selfCotDB$, CI
\selfCotDBCI). The resulting point estimate remains above the cross-model
reasoning-pool estimate on the same domain ($\hat\rho_B = \rsnOpFactsB$).
This estimate pertains to the five evaluated models, the facts domain, and
the sampled reasoning protocol; one component lies on the clip boundary.

\begin{table}[H]
\centering\small
\begin{tabular}{llccc}
\hline
Pool & Domain & pairs (primary/reasoning/both) & $\Delta\hat\rho_B$ on shared pairs & 95\% CI \\
\hline
Frontier & Facts & 3 / 1 / 1 & \spFrFactsDelta & \spFrFactsDeltaCI \\
Frontier & Truthfulness & 10 / 28 / 10 & \spFrTruthDelta & \spFrTruthDeltaCI \\
Open (11) & Facts & 36 / 28 / 28 & \spOpFactsDelta & \spOpFactsDeltaCI \\
Open (11) & Math & 21 / 21 / 15 & \spOpMathDelta & \spOpMathDeltaCI \\
Open (11) & Code & 15 / 55 / 15 & \spOpCodeDelta & \spOpCodeDeltaCI \\
Open (11) & Truthfulness & 36 / 55 / 36 & \spOpTruthDelta & \spOpTruthDeltaCI \\
\hline
\end{tabular}
\caption{\textbf{Same-pair robustness of the protocol contrast.} Mean
$\hat\rho_B(\text{reasoning}) - \hat\rho_B(\text{primary})$ over the pairs
identifiable under \emph{both} protocols, with paired stratified item
bootstrap 95\% CIs ($B = 1{,}000$). Marginals and identifiability
are re-evaluated within each replicate inside the restricted candidate
universe. Frontier code yields zero identifiable pairs under both protocols;
frontier math yields one primary-protocol pair.}
\label{tab:samepair}
\end{table}

\subsection{Family and capability summary}
This descriptive analysis defines each model's capability as sample-1
balanced accuracy over all complete items in a domain,
$\tfrac12[\hat p_G+(1-\hat p_B)]$. Fig.~2B's raw comparison gives mean
bad-state correlation
\sFourRawSame{} for same-family pairs and \sFourRawCross{} for cross-family
pairs over \sFourNPairs{} identifiable pairs. The groups differ in model
scale, capability, and domain composition, making the raw contrast jointly
associated with these characteristics. In the specified regression, which
controls for pairwise capability gap, mean capability, and domain fixed
effects, the same-family coefficient is \sFourCoef{} (95\%
model-node-bootstrap CI \sFourCoefCI). Restricting to pairs with
$|\Delta\text{cap}| \le 0.10$ gives a gap of \sFourMatchedGap{} (CI
\sFourMatchedGapCI). Under these specifications, both uncertainty intervals
include zero; same- and cross-family complementarity are statistically
indistinguishable for this roster.

\begin{table}[H]
\centering\small\setlength{\tabcolsep}{5pt}
\begin{tabular}{lccc}
\hline
Prediction model & Identity-line $R^2$ (95\% CI) & RMSE & Mean error \\
\hline
Heterogeneous full-matrix dependence & \robustFullRtwo{} \robustFullRtwoCI & \robustFullRMSE & \robustFullBias \\
Nonnegative exchangeable one-factor projection & \robustExchangeRtwo{} \robustExchangeRtwoCI & \robustExchangeRMSE & \robustExchangeBias \\
Independence & \robustIndepRtwo{} \robustIndepRtwoCI & \robustIndepRMSE & \robustIndepBias \\
\hline
\end{tabular}

\vspace{0.8em}
\begin{tabular}{lccc}
\hline
Omission analysis & Full-matrix $R^2$ range & Independence $R^2$ range & Gain over majority range \\
\hline
Leave one domain out & \robustLodoFullRange & \robustLodoIndepRange & \robustLodoGainRange \\
Leave one model family out & \robustLofoFullRange & \robustLofoIndepRange & \robustLofoGainRange \\
Leave one model out & \robustLomoFullRange & \robustLomoIndepRange & \robustLomoGainRange \\
\hline
\end{tabular}
\caption{\textbf{Sampled-composition prediction ablation and omission
sensitivity.} The upper panel uses all \robustAblationN{} held-out-item
threshold--cost rows. Whole committee draws form the resampling clusters
($B=\robustBootB$). The lower panel
gives ranges across each omission set under the same sampled composition.
Gain is majority loss minus the loss of the heterogeneous-model-selected
threshold in native loss units. These checks quantify sensitivity to
excluding observed domains, model families, or individual models.}
\label{tab:ablation}
\end{table}

\begin{table}[H]
\centering\footnotesize\setlength{\tabcolsep}{3pt}
\resizebox{\textwidth}{!}{%
\begin{tabular}{lcccc}
\hline
Domain & $L_3$ (95\% CI) & $L_{25}$ (95\% CI) & $L_{25}-L_3$ (95\% CI) & Nested rosters with $L_{25}<L_3$ \\
\hline
Fact verification & \sizeFactsThreeLoss{} \sizeFactsThreeCI & \sizeFactsTwentyFiveLoss{} \sizeFactsTwentyFiveCI & \sizeFactsDeltaTwentyFiveMinusThree{} \sizeFactsDeltaCI & \sizeFactsTwentyFiveBetterPct\% \\
Mathematics & \sizeMathThreeLoss{} \sizeMathThreeCI & \sizeMathTwentyFiveLoss{} \sizeMathTwentyFiveCI & \sizeMathDeltaTwentyFiveMinusThree{} \sizeMathDeltaCI & \sizeMathTwentyFiveBetterPct\% \\
Code review & \sizeCodeThreeLoss{} \sizeCodeThreeCI & \sizeCodeTwentyFiveLoss{} \sizeCodeTwentyFiveCI & \sizeCodeDeltaTwentyFiveMinusThree{} \sizeCodeDeltaCI & \sizeCodeTwentyFiveBetterPct\% \\
Truthfulness & \sizeTruthThreeLoss{} \sizeTruthThreeCI & \sizeTruthTwentyFiveLoss{} \sizeTruthTwentyFiveCI & \sizeTruthDeltaTwentyFiveMinusThree{} \sizeTruthDeltaCI & \sizeTruthTwentyFiveBetterPct\% \\
\hline
\end{tabular}}
\caption{\textbf{Finite committee-size curves from existing greedy votes.}
Each of \sizeCurveRosters{} random nested rosters is evaluated at odd sizes
$n=3,5,\ldots,25$. Intervals are stratified item-bootstrap intervals
($B=\sizeCurveBootB$); roster-to-roster dispersion is archived in
\texttt{e20\_existing\_size\_curves.json}. Endpoint contrasts and improvement
fractions compare $n=25$ directly with $n=3$ within the same nested roster.}
\label{tab:sizecurves}
\end{table}

\begin{table}[H]
\centering\footnotesize\setlength{\tabcolsep}{4pt}
\begin{tabular}{@{}p{0.43\textwidth}cc@{}}
\hline
Quantity & Point estimate & 95\% percentile CI \\
\hline
Dependence-aware benefit over majority (scaled loss units, $\times100$)
  & \SelectionPrimaryAbsoluteBenefitVsMajorityScaledUnits{}
  & [\SelectionPrimaryAbsoluteBenefitVsMajorityCILowScaledUnits{},
     \SelectionPrimaryAbsoluteBenefitVsMajorityCIHighScaledUnits{}] \\
Dependence-aware benefit over majority (\%)
  & \SelectionPrimaryRelativeBenefitPct{}
  & [\SelectionPrimaryRelativeBenefitCILowPct{},
     \SelectionPrimaryRelativeBenefitCIHighPct{}] \\
Dependence-aware benefit over the independence-based threshold
  (scaled loss units, $\times100$)
  & \SelectionPrimaryAbsoluteBenefitVsIndependenceScaledUnits{}
  & [\SelectionPrimaryAbsoluteBenefitVsIndependenceCILowScaledUnits{},
     \SelectionPrimaryAbsoluteBenefitVsIndependenceCIHighScaledUnits{}] \\
Independence-based benefit over majority
  (scaled loss units, $\times100$)
  & \SelectionIndependenceBenefitVsMajorityScaledUnits{}
  & [\SelectionIndependenceBenefitVsMajorityCILowScaledUnits{},
     \SelectionIndependenceBenefitVsMajorityCIHighScaledUnits{}] \\
Independence-based benefit over majority (\%)
  & \SelectionIndependenceRelativeBenefitPct{}
  & [\SelectionIndependenceRelativeBenefitCILowPct{},
     \SelectionIndependenceRelativeBenefitCIHighPct{}] \\
Dependence-aware/independence-based threshold agreement (\%)
  & \SelectionFullIndependenceThresholdAgreementPct{}
  & [\SelectionFullIndependenceThresholdAgreementCILowPct{},
     \SelectionFullIndependenceThresholdAgreementCIHighPct{}] \\
\hline
\end{tabular}

\vspace{0.8em}
\begin{tabular}{@{}p{0.68\textwidth}c@{}}
\hline
Additional point estimate & Value \\
\hline
Dependence-aware selected scaled loss ($\times100$)
  & \SelectionPrimaryFullScaledLoss{} \\
Independence-based selected scaled loss ($\times100$)
  & \SelectionPrimaryIndependenceScaledLoss{} \\
Majority scaled loss ($\times100$)
  & \SelectionPrimaryMajorityScaledLoss{} \\
Share of the dependence-aware benefit supplied by threshold choice under
  independence (\%) & \SelectionThresholdOnlySharePct{} \\
Share supplied by dependence modeling (\%)
  & \SelectionDependenceSharePct{} \\
Sampled-composition relative benefit over majority (\%)
  & \SelectionArchivedCompositionRelativeBenefitPct{} \\
Pool-kind-balanced relative benefit over majority (\%)
  & \SelectionPoolKindBalancedRelativeBenefitPct{} \\
Self-only relative benefit over majority (\%)
  & \SelectionSelfOnlyRelativeBenefitPct{} \\
Cross/mixed relative benefit over majority (\%)
  & \SelectionCrossMixedRelativeBenefitPct{} \\
Indeterminate reviewed records assigned reject (\%)
  & \SelectionForcedRejectPointRelativeBenefitPct{} \\
Agreement-only reviewed-record overlay (\%)
  & \SelectionAgreementOnlyPointRelativeBenefitPct{} \\
\hline
\end{tabular}
\caption{\textbf{Design-balanced threshold-selection results.} The primary
estimand gives equal weight to 20
domain--pool-kind--size cells and fully refits and reselects thresholds in
each of \SelectionBootstrapValid{} item $\times$ committee-cluster replicates using
\SelectionMonteCarloDraws{} antithetic common Monte Carlo draws. The upper
panel reports bootstrap intervals from the refitted and reselected
replicates. The lower panel gives the corresponding loss decomposition,
weighting summaries, and parser-policy summaries defined in
\S\ref{sm:parseraudit}.}
\label{tab:selection}
\end{table}

\paragraph{Regularization diagnostics.}
Across the design-balanced point fit,
\SelectionFrechetSaturationPct{}\% of eligible
tetrachoric inversions saturated a Fr\'echet boundary and
\SelectionUnbracketedClipPct{}\% used an unbracketed clip.
\SelectionPSDProjectionPct{}\% of committee state matrices required
positive-semidefinite projection; the mean and maximum Frobenius projection
distances were \SelectionPSDProjectionMeanDistance{} and
\SelectionPSDProjectionMaxDistance{}. Every full-pipeline replicate repeats
the tetrachoric inversions, matrix projection, loss prediction, and threshold
selection, carrying this finite-sample variation into the reported loss and
threshold intervals.

\subsection{CAPA versus latent correlation}
\begin{figure}[h]
\centering
\includegraphics[width=0.6\textwidth]{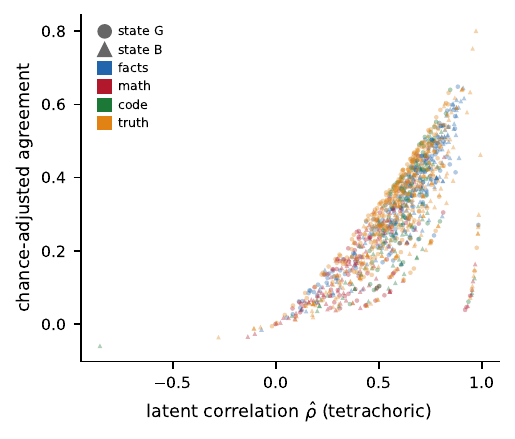}
\caption{\textbf{Chance-adjusted agreement versus latent correlation.}
Chance-adjusted agreement (binary CAPA analogue) against
tetrachoric $\hat\rho$ for 1,186 identifiable pair--domain--state
observations. At fixed state-specific marginals CAPA is monotone in $\rho$.
Changing marginals changes this transform and places empirical pairs on
different latent-correlation scales. Threshold design uses the
state-conditional latent quantity.}
\label{fig:smcapa}
\end{figure}

\clearpage
\section{Extended related work}
\label{sm:related}

\paragraph{Organizational economics of fallible decision-making.}
The hierarchy and polyarchy screening framework originates with Sah and
Stiglitz \cite{sah1985human,sah1986architecture}, who compared architectures
by their type-I/type-II error composition under independent errors, and was
extended to committees and thresholds in \cite{sah1988committees}. Team
theory \cite{marschak1972} founded the broader information-in-organizations
program. Modern treatments compute reliability-optimal decision structures
\cite{christensen2010} and organizational exploration frontiers
\cite{csaszar2013}. State-conditional correlation enters the screening
formulation as a design input alongside state-specific error rates and
$\kappa$.

\paragraph{Jury theorems under dependence.}
Condorcet's independent-vote argument motivates majority aggregation
\cite{condorcet1785}. Under dependence, Ladha bounds majority reliability
\cite{ladha1992}; Boland, Proschan and Tong analyze majority systems with a
common leader-follower dependence \cite{boland1989}; Berend and Sapir study
monotonicity \cite{berend2007}; Kaniovski and Zaigraev treat optimal jury
design with exchangeable correlated votes \cite{kaniovski2011}; Nitzan and
Paroush give the optimal weighted rule for independent heterogeneous jurors
\cite{nitzan1982}. Together these literatures establish how dependence
changes jury reliability and how heterogeneity changes optimal rules under
independence. The LLM screening problem combines member-specific marginals
with state-specific dependence; their fitted joint vote distributions
determine loss-based selection among \kofn{} thresholds.

\paragraph{Ensemble diversity in machine learning.}
That ensemble gains require disagreement is classical. The
error--ambiguity decomposition \cite{krogh1995}, diversity--accuracy studies
\cite{kuncheva2003,dietterich2000}, and Breiman's random-forest bound
$\mathrm{PE} \le \bar\rho\,(1-s^2)/s^2$ \cite{breiman2001} all quantify
versions of it, as does the wisdom-of-crowds literature
\cite{hong2004,davisstober2014}. Cost-sensitive screening adds a decision
dimension to diversity measurement: false acceptance and false rejection
enter loss separately, and state-conditional dependence determines the loss
of each candidate threshold for a fixed roster.

\paragraph{LLM multi-agent systems.}
Self-consistency \cite{wang2023selfconsistency} and sample-and-vote scaling
\cite{li2024moreagents} aggregate a single model; debate
\cite{du2024debate,liang2024divergent,chan2024chateval}, mixture-of-agents
\cite{wang2024moa}, and role-structured frameworks
\cite{wu2023autogen,hong2024metagpt} aggregate several. Evaluation stacks
use single LLM judges \cite{zheng2023judge} or juries \cite{verga2024jury}.
Kim et al.\ \cite{kim2025scaling} report a controlled study and an explicitly
predictive model (260 configurations, six
benchmarks; cross-validated $R^2 = 0.373$, or $0.413$ with a task-grounded
capability metric), finding coordination costs and non-monotone returns to
agent count. We examine the complementary question of using state-conditional
error correlation as a committee-design input. Strong dependence provides one
mechanism for non-monotone returns because additional agents may add little
information.

\paragraph{Correlated errors and AI oversight.}
Goel et al.\ \cite{goel2025great} introduce chance-adjusted probabilistic
agreement (CAPA) and report associations between model capability,
similarity, and judge bias; Kim et al.\ \cite{kim2025correlated} document
correlated errors across a large model panel, including provider and
capability patterns, and study a hiring application. These measured
similarity patterns supply state-conditional inputs to committee loss. State
conditioning is consequential because good-state misses and bad-state false
acceptances enter the screening objective separately.

\paragraph{Monoculture and systemic homogeneity.}
Kleinberg and Raghavan model welfare losses when many actors adopt the same
algorithm \cite{kleinberg2021}; Bommasani et al.\ measure outcome
homogenization from shared foundations \cite{bommasani2022}; Hammond et
al.\ catalogue correlated-failure risks in multi-agent AI
\cite{hammond2025multiagent}; model-collapse dynamics couple models through
shared training data \cite{shumailov2024}. In the screening model, this
dependence enters the floor \eqref{eq:floorratio} and the collapse theorem.

\paragraph{Mathematical tools.}
The one-factor Gaussian threshold model and its large-pool limit are due to
Vasicek \cite{vasicek2002}, with regulatory foundations by Gordy
\cite{gordy2003}; Slepian's inequality \cite{slepian1962}, Plackett's
identity \cite{plackett1954}, tetrachoric correlation \cite{pearson1900},
convex-order machinery \cite{shaked2007}, and Topkis's monotone comparative
statics \cite{topkis1998} supply the proof infrastructure. Gunn et al.\
\cite{gunn2016} analyzed how unanimous verdicts can become implausible in
forensic systems; our
Remark~\ref{rem:toogood} derives its organizational analogue endogenously
from unequal state correlations.

\FloatBarrier
\bibliographystyle{unsrtnat}
\bibliography{arxiv_manuscript}

\end{document}